\newcites{supp}{Supplementary References}
\title{%
Nonparametric efficient estimation of the longitudinal front-door functional 
}
\date{\today}
\newcommand\independent{\protect\mathpalette{\protect\independenT}{\perp}}
\def\independenT#1#2{\mathrel{\rlap{$#1#2$}\mkern2mu{#1#2}}}
\newcommand*\diff{\mathop{}\!\mathrm{d}}
\newcommand{\eps}{\varepsilon}
\newtheorem{theorem}{Theorem}
\newtheorem{corollary}{Corollary}
\newtheorem{lemma}{Lemma}
\newtheorem{assumption}{Assumption}
\newenvironment{breakablealgorithm}
  {
   \begin{center}
     \refstepcounter{algorithm}
     \hrule height.8pt depth0pt \kern2pt
     \renewcommand{\caption}[2][\relax]{
       {\raggedright\textbf{\ALG@name~\thealgorithm} ##2\par}%
       \ifx\relax##1\relax 
         \addcontentsline{loa}{algorithm}{\protect\numberline{\thealgorithm}##2}%
       \else 
         \addcontentsline{loa}{algorithm}{\protect\numberline{\thealgorithm}##1}%
       \fi
       \kern2pt\hrule\kern2pt
     }
  }{
     \kern2pt\hrule\relax
   \end{center}
  }
\author[1]{Marie S. Breum\thanks{masb@sund.ku.dk}}
\author[1]{Helene C. W. Rytgaard}
\author[1]{Torben Martinussen}
\author[1,2]{Erin E. Gabriel}
\affil[1]{Section of Biostatistics, University of Copenhagen, Copenhagen, Denmark}
\affil[2]{Pioneer Centre for SMARTbiomed, University of Copenhagen, Copenhagen, Denmark}
\begin{document}

\maketitle

\begin{abstract}
The front-door criterion is an identification strategy for the intervention-specific mean outcome in settings where the standard back-door criterion fails due to unmeasured exposure-outcome confounders, but an intermediate variable exists that completely mediates the effect of exposure on the outcome and is not affected by unmeasured confounding. The front-door criterion has been extended to the longitudinal setting, where exposure and mediator vary over time. However, with the exception of a simple plug-in estimator, no suitable estimation techniques have been proposed. In this work, we derive nonparametric efficient estimators of the longitudinal front-door functional. The estimators accommodate high-dimensional mediators, are multiply robust, and allow for the use of data-adaptive methods for estimating nuisance functions while still providing valid inference. The theoretical properties of the estimators are illustrated in a simulation study, and we apply the estimators to a trial of peanut allergy in infants. 
\end{abstract}

{\it Keywords:} front-door adjustment; longitudinal data; one-step estimation; TMLE; unmeasured confounders

\section{Introduction}\label{sec:intro}

The seminal work of \cite{Pearl1995causal} showed that if a variable exists that (i) completely mediates the effect of exposure on the outcome and (ii) is not affected by unmeasured confounding, then the intervention-specific mean outcome is identified by the front-door formula, even if there are unmeasured exposure-outcome confounders. 
 In \citet{Pearl1995causal} the identification criterion is illustrated through a hypothetical trial investigating the effect of smoking on lung-cancer, where both smoking and lung-cancer are likely confounded by unmeasured genetic predispositions. If we can measure the amount of tar deposited in a subject's lungs, and we believe that smoking does not have any direct effect on lung cancer except that mediated by tar, then the front-door criterion allows for identification of the average causal effect of smoking on lung cancer. Another possible example arises in observational studies of seasonal influenza vaccination, where many unmeasurable behavioral differences may lead to different probabilities of being vaccinated and infected. However, there are also numerous biological variables, many of which will be known and often used in Phase II trials of the vaccine, that are unlikely to be confounded with these behaviors. If these biological variables fully mediate the effect of vaccination on infection rate, the front-door criterion can be used to identify the average causal effect of vaccination on the infection rate

The front-door criterion is a potentially valuable form of identification, specifically in observational settings with unmeasurable and complex confounders of exposure and outcome. Despite this appeal, the front-door criterion has received little attention, mainly due to concerns that the identifying assumptions may be overly restrictive in many applications \citep{cox1995causal, koller2009probabilistic, imbens2020potential}. For instance, in Pearl's smoking and lung cancer example smoking might also affect cancer through mechanisms other than tar deposits, violating (i). In addition, the same unmeasured confounders that confound smoking and lung cancer may also affect tar deposits, thereby violating (ii). Recent work, however, has revisited the front-door criterion and has generated renewed interest in the method. \cite{bhattacharya2022testability} showed that the assumptions are testable under mild assumptions and argue that, when front-door adjustment is feasible, it may offer advantages over instrumental variable approaches, as it allows for nonparametric identification of causal effects without requiring the additional structural restrictions typically imposed by instrumental variable methods. Moreover, \cite{fulcher2020robust} proposed a generalized front-door criterion showing that front-door functional still has a meaningful interpretation as an indirect effect when the assumption that there is no direct effect of treatment on the outcome is replaced by a cross-world independence assumption. \cite{wen2023causaleffectsinterveningvariables} provided an interventionist interpretation of the front-door functional under a dismissible component assumption, also allowing for a direct effect of the exposure on
the outcome not mediated by an intermediate variable. \cite{mao2025front} showed that a wide family of graphs can be reduced to graphs satisfying the ordinary front-door criterion, and argue that the front-door criterion is sometimes preferable to the more complex ID algorithm of \cite{shpitser2006identification}, which may give impractical identifying functionals.

Real-data applications of Pearl's front-door criterion and the generalized front-door criterion have been limited; however, some recent applications in epidemiology demonstrate the potential of the method. \cite{piccininni2023effect} applied the front-door framework to study the causal effect of additional mobile stroke unit (MSU) care on 3-month functional outcomes among stroke patients for whom an MSU was dispatched. The causal effect is not identified using the back-door criterion because decision-making about whether to wait for MSU was not measured. Using time from dispatch to thrombolysis as a mediator, they instead apply the front-door criterion to identify the causal effect. \cite{inoue2022causal} used the front-door framework to investigate the causal effect of chronic pain on all-cause mortality mediated through opioid prescriptions. Moreover, a few applications of the front-door criterion exist in social science and economics where \cite{glynn2017front, glynn2018front} use it to study the effect of signing up for job training on employment outcomes using the indicator of whether the individual actually enrolled in the program as the mediator. \cite{bellemare2024paper} studies the effect of deciding to share an Uber or Lyft ride versus riding solo on tipping, using the indicator of whether the individual actually end up sharing a ride as the mediator.

Often, the exposure and mediator will vary over time. For example, the vaccination scheme described above may require more than one dose. \cite{sjolander2013confounding} generalized the front-door criterion to the longitudinal setting, showing that if there exists a time-varying confounder that is not affected by unmeasured confounding and such that all effects of exposure at time $t$ on the outcome are mediated through its subsequent values, then the intervention specific mean outcome is identified by the longitudinal front-door functional. 
As in the point treatment setting the longitudinal front-door functional still has a meaningful interpretation as an indirect effect when the assumption that there is no direct effect of treatment on the outcome is replaced by a different set of identifying assumptions. 
To our knowledge, applications of the longitudinal front-door criterion remain an open area. This may reflect that, with the exception of a simple plug-in estimator proposed by \cite{sjolander2013confounding}, no suitable estimating techniques exist. By introducing new estimators, this work provides the tools needed to bring such applications within reach.

Even in the point treatment setting, relatively few estimators have been suggested in the literature, compared to the rich literature on back-door estimation. In \cite{fulcher2020robust} the authors proposed a multiply robust one-step estimator of the front-door formula for a dichotomous mediator. \cite{wen2023causaleffectsinterveningvariables} and \cite{guo2023targetedmachinelearningaverage} both proposed targeted minimum loss-based estimators (TMLE) of the front-door formula, which can handle continuous and multivariate mediators. \cite{cavalcante2025nonparametric} proposed various inverse probability weighted (IPW) estimators, including one which is applicable to continuous exposures. 
We build on this work to derive flexible nonparametric estimators of the longitudinal front-door functional, allowing for continuous and multivariate mediators and binary exposures. In particular, we derive the efficient influence function and propose two types of nonparametric efficient estimators: a one-step estimator \citep{vanderVaart2000} and a targeted minimum loss based estimator (TMLE) \citep{vdLRubin2006, VdLRose2011, VdLRose2018}. We study the double robustness properties of the estimators, and we show that the method allows for the use of data-adaptive (machine learning) methods when estimating the nuisance models. Moreover, both of the proposed estimators avoid direct estimation of the conditional mediator density, which is an advantage when, as will typically be the case, the mediator is high-dimensional.

This article is organized as follows. In Section \ref{sec:prelim}, we introduce some notation as well as the identification result.  In Section \ref{sec:rep}, we study alternative representations of the identifying functional based on inverse probability weighting and sequential regression. In Section \ref{sec:estimation}, we propose nonparametric efficient estimators and study their large sample properties. In Section \ref{sec:sim}, we conduct a simulation study to illustrate the theoretical properties of the proposed estimators, and in Section \ref{sec:app} we apply the method to data from a  clinical trial with noncompliance investigating the development of peanut allergy in infants. Some final remarks are provided in Section \ref{sec:dis}. Proofs and technical details are given in the Supplementary Materials.

\section{Preliminaries}\label{sec:prelim}

\subsection{Set-up and notation}
Let $t=0,...,T$ be discrete time points that represent $T$ follow-up visits. We can define the observed data vector as $O=(W, A_0, M_0, ..., A_T,M_T, Y)$ where $W \in \mathbb{R}^d$ is a vector of baseline covariates, $A_t \in \{0,1\}$ is the exposure of interest at time $t$, $M_t \in \mathcal{M}_t \subseteq \mathbb{R}^p$ is a vector of time-varying covariates at time $t$ and $Y \in \mathcal{Y} \subseteq \mathbb{R}$ is the outcome of interest. 
Let $P$ denote the distribution of $O$ that is assumed to lie in a nonparametric statistical model $\mathcal{P}$, and let $O_1,...,O_n$ be a sample of iid observations of $O$. We let $\mathbb{E}$ denote the expectation with respect to $P$ and for a given function $f$, we let $\mathbb{P}_n\{f(O)\}=n^{-1}\sum_{i=1}^nf(O_i)$ denote the empirical average.
We will also use $\bar{X}_t=(X_0,...,X_t)$ to denote the history of an arbitrary variable $X$ and $\underline{X}_t=(X_t, ..., X_T)$ to denote the future of a variable. We use the convention that any variable $X_t$ with $t<0$ is defined as the empty set. 

\subsection{Front-door identification of the intervention-specific mean outcome}
Let $Y(\bar{a}_T)$ be the counterfactual outcome under an intervention that sets $\bar{A}_T=\bar{a}_T \in \{0,1\}^T$. We will assume that the following consistency assumption holds.
\begin{assumption}[Consistency]  \label{ass:consistency}
    If $\bar{A}_t=\bar{a}_t$ then $Y(\bar{a}_t)=Y$ with probability 1 for $t=0,...,T$.
\end{assumption}

In the following, we will focus on identification and estimation of the intervention-specific mean outcome defined as
\begin{align}
\label{eq:target_parameter}
    \Psi^{\bar{a}_T}(P)= \mathbb{E} \left\{Y(\bar{a}_T)\right\},
\end{align}
from which various causal parameters can be defined. A widely used causal parameter is the average causal effect (ACE) of always being exposed versus never being exposed, which is defined as the contrast $\Psi^{\bar{1}_T}(P)-\Psi^{\bar{0}_T}(P)$. Another example of a causal parameter is the population intervention effect (PIE) defined as the contrast between the observed mean outcome and the intervention-specific mean outcome $\mathbb{E}(Y)-\Psi^{\bar{a}_T}(P)$ \citep{hubbard2008population}.

Let $U$ be a set of unmeasured confounders affecting $\bar{A}_T$ and $Y$. The following assumptions on $(O, U)$ will allow for identification of $\Psi^{\bar{a}_T}(P)$. 

\begin{assumption} \label{ass:id}
    Suppose that the following conditional independencies hold for $t=0,...,T$.
\begin{itemize}
    \item[(i)] $M_t \independent U \mid \bar{A}_t, \bar{M}_{t-1}, W$,
    \item[(ii)] $Y \independent \bar{A}_t \mid U, \bar{M}_t, W$,
    \item[(iii)] $Y(\bar{a}) \independent A_t \mid U, \bar{A}_{t-1}, \bar{M}_{t-1}, W$. 
\end{itemize}
\end{assumption}

Together, Assumption \ref{ass:id} (i)-(iii) imply that a) all effects of $A_t$ on $Y$ are mediated through $\underline{M}_t$ and b) $U$ has no direct effect on $M_t$. We say that $\underline{M}_t$ completely blocks the front-door between $A_t$ and $Y$. In Figure \ref{fig:DAGa} we give an example of a DAG where the assumptions above hold for one time-point. This corresponds to the standard front-door DAG.  In Figure \ref{fig:DAGb} we give an example of a DAG where the assumptions above hold for three time-points. In the vaccine study example, this DAG would represent a vaccine requiring three doses, where the effect of vaccination on infection rate is completely blocked by the mediators $M_1, M_2, M_3$ which may represent various biomarkers collected during the trial. 

\begin{figure}[ht]
\centering
\begin{subfigure}{0.4\textwidth}
\centering
        \begin{tikzpicture}
        \node[draw, dashed, circle] (u) at (2, 2) {$\boldsymbol{U}$};
\node (l0) at (-1,1) {$W$};
\node (a0) at (0,0) {$A_0$};
\node (m0) at (2,0) {$M_0$};
\node (y) at (4,0) {$Y$};
\draw[-latex] (a0) -- (m0);
\draw[-latex] (u) -- (a0);
\draw[-latex] (u) -- (y);
\draw[-latex] (m0) -- (y);
              \end{tikzpicture}
               \caption{ }
               \label{fig:DAGa}
\end{subfigure}%
    ~ 
\begin{subfigure}{0.6\textwidth}
\centering
        \begin{tikzpicture}
        \node[draw, dashed, circle] (u) at (2, 2) {$\boldsymbol{U}$};
\node (l0) at (-1,1) {$W$};
\node (a0) at (0,0) {$A_0$};
\node (a1) at (2,0) {$A_1$}; 
\node (m1) at (3,-1) {$M_1$};
\node (m0) at (1,-1) {$M_0$};

\node (a2) at (3.5,0) {$A_2$}; 
\node (m2) at (5,-1) {$M_2$};
\node (y) at (6,-1) {$Y$};
\draw[-latex] (a0) -- (a1);
\draw[-latex] (a0) -- (m0);
\draw[-latex] (a1) -- (m1);
\draw[-latex] (m0) -- (m1);
\draw[-latex] (m1) -- (m2);
\draw[-latex] (m0) -- (a1);
\draw[-latex] (a0) -- (m1);
\draw[-latex] (m2) -- (y);
\draw[-latex] (u) -- (a1);
\draw[-latex] (u) -- (a0);
\draw[-latex] (u) -- (a2);
\draw[-latex] (u) -- (y);

\draw[-latex] (a1) -- (a2);
\draw[-latex] (a1) -- (m2);
\draw[-latex] (m1) -- (a2);
\draw[-latex] (a2) -- (m2);


\draw[->] (a0) to[out=-90, in=-135](m2);
\draw[->] (m0) to[out=-35, in=-160](m2);
\draw[->] (m1) to[out=-45, in=-135](y);

\draw[->] (a0) to[out=35, in=135](a2);
              
\draw[->] (m0) to[out=-45, in=-135](y);
              \end{tikzpicture}
               \caption{ }
               \label{fig:DAGb}
\end{subfigure}
    \caption{(a) Example DAG in a setting where the Assumption 2 holds for one time-point. (b) Example DAG in a setting where the Assumption 2 holds for three time-points. The arrows from $W$ are suppressed, but arrows may exist to all measured variables and between $W$ and  $U$ in either direction.}
    \label{fig:DAG}

\end{figure}

The following assumption ensures that the intervention $\bar{a}_T$ has support in the data. 

\begin{assumption}[Sequential positivity] \label{ass:positivity}
Suppose the following sequential positivity assumptions hold
    \begin{itemize}
        \item[(i)] $\sup  \frac{\mathbbm{1}(\bar{A}_T=\bar{a}_T)}{\prod_{t=0}^T p(a_t \mid W, \bar{M}_{t-1}, \bar{a}_{t-1})} < \infty \quad a.s.$, 
         \item[(ii)] $\sup\prod_{t=0}^T \frac{p(M_t \mid W, \bar{a}_t, \bar{M}_{t-1})}{p(M_t \mid W, \bar{A}_t, \bar{M}_{t-1})} < \infty \quad a.s.$.
    \end{itemize}
\end{assumption}
Assumption \ref{ass:positivity} (i) is the usual sequential positivity assumption for time-varying treatments. It will hold if, at each time $t$,  for every possible history of past covariates there is a positive probability of the exposure level of interest. 
Assumption \ref{ass:positivity} (ii) ensures that there is sufficient overlap between the mediator distribution under the intervention of interest and the observed mediator distribution. 

We then have the following identification theorem.
\begin{theorem}[Identification] \label{theorem:id}
Under Assumptions \ref{ass:consistency}-\ref{ass:positivity}, $\Psi^{\bar{a}_T}(P)$ is identified via the longitudinal front-door functional
\begin{multline}
    \label{eq:F_functional}
     \prod_{t=0}^T \int g_t(m_t \mid w, \bar{a}_t, \bar{m}_{t-1}) \sum_{\bar{a}_T' \in \{0,1\}^T} Q_Y\left(w, \bar{a}_T', \bar{m}_T\right)  \prod_{t=0}^T \pi_t(a_t' \mid w, \bar{m}_{t-1}, \bar{a}_{t-1}') \\ \times p(w) \diff \mu_{M_t}(m_t) \diff \mu_{W}(w),
\end{multline}
where $\pi_t(a_t' \mid w, \bar{a}_{t-1}', \bar{m}_{t-1})$ and $g_t(m_t \mid w, \bar{a}_t, \bar{m}_{t-1})$ are the conditional densities of $A_t$ and $M_t$, $Q_Y( w, \bar{a}_T, \bar{m}_T) \equiv \mathbb{E}\left(Y \mid  w , \bar{a}_T, \bar{m}_T\right)$ is the conditional expectation of $Y$, and $\mu_{\boldsymbol{M_t}}$, $\mu_{W}$ are some dominating measures. 
\end{theorem}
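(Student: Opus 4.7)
The plan is to decompose the identification into two successive applications of the longitudinal g-formula: one for the intervention $\bar A_T=\bar a_T$ with the unmeasured $U$ treated as if observed, and a second for a ``mediator intervention'' that sets $\bar M_T$ to a fixed value with $\bar A_T$ now playing the role of time-varying confounders of $\bar M_T\to Y$. The first application furnishes the $\prod_t g_t$ factor of the target functional, and the second yields the $\sum_{\bar a_T'} Q_Y \prod_t \pi_t$ factor.

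First I would apply Assumption \ref{ass:id}(iii) (sequential exchangeability conditional on $U$) in the standard longitudinal g-formula to write
\[
\mathbb{E}\{Y(\bar a_T)\mid U,L_0\} = \int \mathbb{E}(Y\mid U,L_0,\bar a_T,\bar m_T)\prod_{t=0}^T p(m_t\mid U,L_0,\bar a_t,\bar m_{t-1})\,d\bar m_T.
\]
Assumption \ref{ass:id}(i) lets me replace each mediator density by $g_t(m_t\mid L_0,\bar a_t,\bar m_{t-1})$, and Assumption \ref{ass:id}(ii) drops $\bar a_T$ from the outcome regression. Marginalising over $U\mid L_0$ then produces
\[
\mathbb{E}\{Y(\bar a_T)\mid L_0\} = \int\!\left\{\int \mathbb{E}(Y\mid u,L_0,\bar m_T)\,p(u\mid L_0)\,du\right\}\prod_{t=0}^T g_t(m_t\mid L_0,\bar a_t,\bar m_{t-1})\,d\bar m_T,
\]
so the $\prod_t g_t$ piece of \eqref{eq:F_functional} is already in place.

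Second I would identify the inner bracket as the observed-data expression $\sum_{\bar a_T'}Q_Y(L_0,\bar a_T',\bar m_T)\prod_t\pi_t(a_t'\mid L_0,\bar m_{t-1},\bar a_{t-1}')$. The key is to interpret the bracket as the counterfactual mean $\mathbb{E}\{Y(\bar m_T)\mid L_0\}$ under a hypothetical intervention that sets the mediator trajectory: because $U$ is not a descendant of any $M_t$ (Assumption \ref{ass:id}(i)), this intervention leaves the conditional distribution of $U$ given $L_0$ unchanged, and because $\bar A_T$ drops out of $\mathbb{E}(Y\mid U,L_0,\bar M_T)$ (Assumption \ref{ass:id}(ii)), one obtains $\mathbb{E}\{Y(\bar m_T)\mid L_0\}=\int\mathbb{E}(Y\mid u,L_0,\bar m_T)\,p(u\mid L_0)\,du$. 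A second application of the longitudinal g-formula, now for the $\bar M_T$-intervention with $\bar A_T$ as time-varying confounders, then identifies $\mathbb{E}\{Y(\bar m_T)\mid L_0\}$ as the observational expression $\sum_{\bar a_T'}Q_Y\prod_t\pi_t$; the required sequential ignorability holds because Assumptions \ref{ass:id}(i)--(ii) together rule out unmeasured confounding of $M_t\to Y$ once the observed past $(L_0,\bar A_t,\bar M_{t-1})$ is conditioned on. Substituting this identification back into the previous display and integrating over $L_0$ with density $p(\ell_0)$ yields \eqref{eq:F_functional}; the positivity conditions in Assumption \ref{ass:positivity} ensure that every density and conditional expectation in the derivation is well-defined.

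The hard part will be the second step. Assumption \ref{ass:id} is phrased in terms of ordinary conditional independences involving $U$ rather than as a SWIG, so a one-line cross-world argument is not immediately available; I would either translate the assumptions into potential-outcome independences, or proceed by direct algebraic manipulation, factorising the joint density of $(U,L_0,\bar A_T,\bar M_T,Y)$ via Assumptions \ref{ass:id}(i)--(ii) and verifying the required identity by iterative Bayes' rule and summation over the dummy treatment history $\bar a_T'$.
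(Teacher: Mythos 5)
Your first step is essentially the paper's own argument: alternate the law of total probability with Assumption \ref{ass:id}(iii), invoke consistency, and use Assumption \ref{ass:id}(i) to replace $p(m_t\mid u,\ell_0,\bar{a}_t,\bar{m}_{t-1})$ by $g_t(m_t\mid \ell_0,\bar{a}_t,\bar{m}_{t-1})$, which reduces the problem to identifying the inner integral $\int \mathbb{E}(Y\mid u,\ell_0,\bar{a}_T,\bar{m}_T)\,p(u\mid\ell_0)\diff u$. The gap is in your second step. Your primary route --- reading that integral as $\mathbb{E}\{Y(\bar{m}_T)\mid L_0\}$ for a hypothetical intervention on the mediator trajectory and then applying a second longitudinal g-formula with $\bar{A}_T$ as time-varying confounders --- is not licensed by Assumptions \ref{ass:consistency}--\ref{ass:positivity} as stated. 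The paper never introduces mediator counterfactuals: there is no consistency statement for interventions on $\bar{M}_T$, and the sequential ignorability you would need, $Y(\bar{m}_T)\independent M_t\mid L_0,\bar{A}_t,\bar{M}_{t-1}$, does not follow from the conditional independencies (i)--(ii), which constrain only the factual law of $(U,L_0,\bar{A}_T,\bar{M}_T,Y)$ and the counterfactuals $Y(\bar{a})$. Claims such as ``the mediator intervention leaves the law of $U$ given $L_0$ unchanged'' are structural (NPSEM/SWIG-type) statements; indeed the paper's supplementary results (Theorems S1--S3) show that exactly such interventional readings of the F-functional require additional assumptions (cross-world independence, dismissible components), so they do not come for free from Assumption \ref{ass:id}.

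Your fallback --- ``direct algebraic manipulation \ldots iterative Bayes' rule and summation over the dummy treatment history'' --- is in fact the paper's proof, but you only name it; it is the entire content of the hard step and is left unexecuted. Concretely, what is needed is the expansion
\[
p(u\mid \ell_0)\;=\;\sum_{\bar{a}_T'} \,p(u\mid \ell_0,\bar{a}_T',\bar{m}_T)\prod_{t=0}^{T}\pi_t(a_t'\mid \ell_0,\bar{m}_{t-1},\bar{a}_{t-1}'),
\]
obtained by alternating the law of total probability over $a_t'$ with Assumption \ref{ass:id}(i), which allows each observed $m_t$ to be inserted into the conditioning set without changing the conditional law of $U$; one then uses Assumption \ref{ass:id}(ii) to replace $\bar{a}_T$ by $\bar{a}_T'$ in $\mathbb{E}(Y\mid u,\ell_0,\bar{a}_T,\bar{m}_T)$ and integrates $u$ out against $p(u\mid \ell_0,\bar{a}_T',\bar{m}_T)$ to produce $Q_Y(\ell_0,\bar{a}_T',\bar{m}_T)$, yielding the $\sum_{\bar{a}_T'}Q_Y\prod_t\pi_t$ factor. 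Until this identity (or an equivalent argument staying within the stated assumption set) is written out, the identification of the inner integral --- and hence the theorem --- is not established by your proposal.
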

\begin{proof}
    This identification result was first shown in \cite{sjolander2013confounding}. For completeness, we give the proof in Section S1 of the Supplementary Materials. 
\end{proof}
Following \cite{sjolander2013confounding}, we refer to the identifying functional in \eqref{eq:F_functional} as the F-functional. Constructing a plug-in estimator based on this identifying functional requires estimation of possibly high dimensional densities of the mediator as well as calculation of integrals with respect to these densities, which is challenging due to the curse of dimensionality. We will therefore explore other estimation techniques. First, we discuss an alternative interpretation of the F-functional based on a different set of identifying assumptions.

\subsection{Alternative interpretations of the longitudinal front-door functional}

In the point treatment setting \cite{fulcher2020robust} showed that the front-door functional still has a meaningful interpretation as the indirect component of the PIE of \cite{hubbard2008population}, the so-called population intervention indirect effect (PIIE), under a different set of identifying assumptions.
The PIIE captures the extent to which the effect of exposure is mediated by an intermediate variable when exposure is fixed at its natural value. In the longitudinal setting the PIIE would be the contrast $\mathbb{E}(Y)-\Phi^{\bar{a}_T}(P)$ where $\Phi^{\bar{a}_T}(P)=\mathbb{E}\left\{Y(\bar{A}_T, \bar{M}_T(\bar{a}_T))\right\}$ is the potential outcome under an intervention that for $t=0,...,T$ sets $M_t$ to the value it would have taken when exposure history equals $\bar{a}_t$ while keeping $A_t$ at its natural value. Along with the population intervention direct effect (PIDE) the PIIE decomposes the PIE into a direct and an indirect effect:
\begin{align*}
    PIE^{\bar{a}_T} &= \mathbb{E}(Y) -\mathbb{E}\left\{Y(\bar{a}_T)\right\} = \underbrace{\mathbb{E}(Y) - \Phi^{\bar{a}_T}(P)}_{PIIE} + \underbrace{\Phi^{\bar{a}_T}(P) - \mathbb{E}\left\{Y(\bar{a}_T)\right\}}_{PIDE}.
\end{align*}

We show in Section S3 of the Supplementary Materials that $\Phi^{\bar{a}_T}(P)$ is identified by the longitudinal front-door functional when the following exchangeability assumptions hold: (i) $\underbar{M}_t(\bar{a}_T) \independent A_t \mid W, \bar{A}_{t-1}, \bar{M}_{t-1}$ and (ii) $Y(\bar{m}_T) \independent M_t \mid W, \bar{A}_t, \bar{M}_{t-1}$ and (iii) $Y(\bar{m}_T) \independent \bar{M}_T(\bar{a}_T) \mid W$. Importantly, these assumptions do not rule out paths directly from treatment to outcome. For example, the assumptions hold for the NPSEM-IE associated with the DAG in Figure \ref{fig:DAG_PIIE}.

\begin{figure}[h]
\begin{center}
        \begin{tikzpicture}
\node[draw, dashed, circle] (u) at (2, 2) {$\boldsymbol{U}$};
\node (l0) at (-1,1) {$W$};
\node (a0) at (0,0) {$A_0$};
\node (a1) at (2,0) {$A_1$}; 
\node (m1) at (3,-1) {$M_1$};
\node (m0) at (1,-1) {$M_0$};

\node (a2) at (3.5,0) {$A_2$}; 
\node (m2) at (5,-1) {$M_2$};
\node (y) at (6,-1) {$Y$};
\draw[-latex] (a0) -- (a1);
\draw[-latex] (a0) -- (m0);
\draw[-latex] (a1) -- (m1);
\draw[-latex] (m0) -- (m1);
\draw[-latex] (m1) -- (m2);
\draw[-latex] (m0) -- (a1);
\draw[-latex] (a0) -- (m1);
\draw[-latex] (m2) -- (y);
\draw[-latex] (u) -- (a1);
\draw[-latex] (u) -- (a0);
\draw[-latex] (u) -- (a2);
\draw[-latex] (u) -- (y);

\draw[-latex] (a1) -- (a2);
\draw[-latex] (a1) -- (m2);
\draw[-latex] (m1) -- (a2);
\draw[-latex] (a2) -- (m2);
\draw[->] (a0) to[out=-90, in=-135](m2);
\draw[->] (m0) to[out=-35, in=-160](m2);
\draw[->] (m1) to[out=-45, in=-135](y);
\draw[->] (a0) to[out=35, in=135](a2);              
\draw[->] (m0) to[out=-45, in=-135](y);
\draw[-latex, color=red] (a0)  to[out=45, in=110] (y);
\draw[-latex, color=red] (a1)  to[out=40, in=115] (y);
\draw[-latex, color=red] (a2)  to (y);

              \end{tikzpicture}
\end{center}
    \caption{Example DAG with three time-points.}
    \label{fig:DAG_PIIE}
\end{figure}

In Section S3 of the Supplementary Materials we also provide an interventionist interpretation of the longitudinal front-door functional under a dismissible component assumption, extending the work of \cite{wen2023causaleffectsinterveningvariables}. The identification strategy does not rely on the cross-world exchangeability assumption in (iii) and still allows for the presence of a direct effect of the exposure on the outcome not mediated by an intermediate variable.

\section{Representations of the F-functional} \label{sec:rep}
The F-functional  in \eqref{eq:F_functional} admits several nested expectation and  inverse probability weighted (IPW) representations. For reasons that we will expand upon below, we will not pursue estimators based on these representations. However, the representations will be useful for the construction of the estimators that we propose in Section \ref{sec:estimation}.

\subsection{Nested expectation representations}
The F-functional has a number of different nested expectation representations \citep{BangRobins2005}. Here we introduce two representations that will be useful later for the development of nonparametric efficient estimators. First, let $Q^{\bar{a}}_{M_{T+1}}(W, \bar{M}_T)\equiv  \sum_{\bar{a}_T'} Q_Y(W, \bar{a}_T', \bar{M}_T)  \prod_{t=0}^T \pi_t(a_t' \mid  W, \bar{M}_{t-1}, \bar{a}_{t-1}')$ and define recursively for $t=T,T-1,...,0$
\begin{align}
\label{eq:SR1}
       Q^{\bar{a}}_{M_t}(P)(W, \bar{M}_{t-1}) &\equiv \mathbb{E} \left\{Q^{\bar{a}}_{M_{t+1}}(P)(W, \bar{M}_t) \mid W, \bar{A}_t=\bar{a}_t, \bar{M}_{t-1} \right\}. 
\end{align}

Then the F-functional  can be expressed as $\Psi^{\bar{a}_T}(P)= \mathbb{E}\left\{ Q_{M_0}(P)(W)\right\}$, and estimators of $Q^{\bar{a}}_{M_t}(W, \bar{M}_{t-1})$  can be constructed by regressing $Q^{\bar{a}}_{M_{t+1}}(W, \bar{M}_t)$ on the observed data $(W, \bar{A}_t, \bar{M}_{t-1})$ sequentially for $t=T,T-1,...,0$.  More details on the construction of a sequential regression estimator based on \eqref{eq:SR1} are given in Section S2 of the Supplementary Materials. 

For the second nested expectation representation, let $R^{\bar{a}}_{A_{T+1}}(P)(W, \bar{A}_{T}, \bar{M}_{T}) \equiv Q_Y(P)(W, \bar{A}_T, \bar{M}_T)$ and define recursively for $t=T,T-1,...,0$
\begin{align}
\label{eq:SR2_1}
    R^{\bar{a}}_{M_t}(P)(W, \bar{A}_t, \bar{M}_{t-1}) &= \sum_{\bar{a}_t' \in \{0,1\}^t} \mathbbm{1}(\bar{A}_t=\bar{a}_t')  \kappa_t^{\bar{a}}(W,\bar{a}'_t, M_{t-1}), 
\end{align}
where $\kappa_t^{\bar{a}}(W,\bar{a}'_t, M_{t-1}) \equiv \mathbb{E} \left\{ R^{\bar{a}}_{A_{t+1}}(W, \bar{a}_t', \bar{M}_t)  \mid W, \bar{A}_t=\bar{a}_t, \bar{M}_{t-1} \right\}$, and
\begin{align}
    \label{eq:SR2_2}
    R^{\bar{a}}_{A_t}(P)(W, \bar{A}_{t-1}, \bar{M}_{t-1}) &\equiv \sum_{a_t' \in \{0,1\}} R^{\bar{a}}_{M_t}(W, a_t', \bar{A}_{t-1}, \bar{M}_{t-1}) \pi_t(a_t' \mid W, \bar{M}_{t-1}, A_{t-1}).
\end{align}

Then the F-functional can be expressed as $\Psi^{\bar{a}_T}(P)=\mathbb{E} \left\{ R_{A_0}(P)(W) \right\}$. Estimators of $\kappa^{\bar{a}}_t(W, \bar{a}_t', \bar{M}_{t-1})$ can be constructed by regressing $R^{\bar{a}}_{A_{t+1}}(W, \bar{a}_t', \bar{M}_t)$ on the observed data $(W, \bar{A}_t, \bar{M}_{t-1})$ among those with $\bar{A}_t=\bar{a}_t$. This should be done for all $\bar{a}_t' \in \{0,1\}^t$, and then $R^{\bar{a}}_{M_t}(W, \bar{A}_t, \bar{M}_{t-1})$ can be computed using the identity in \eqref{eq:SR2_1}. $R^{\bar{a}}_{A_t}(W, \bar{A}_{t-1}, \bar{M}_{t-1})$ can be computed by integrating out $A_t$ using the estimated propensity score $\pi_t$. More details are given in Section S2 of the Supplementary Materials.

Although it is possible to construct sequential regression estimators based on the nested expectation representations above, these estimators will rely on correct specification of $\pi_t$ and $Q_Y$, as well as the sequential regressions. Correct specification of the sequential regressions is difficult to do using parametric models, especially in the longitudinal setting, and may require data-adaptive (machine learning) algorithms. 
However, if using data-adaptive nuisance model estimates directly in the sequential regression estimator, the estimator will typically inherit the slower than root-$n$ convergence rates. Consequently, standard asymptotic results based on asymptotic normality generally do not apply.

\subsection{IPW representations}
Define the following random variables
\begin{align*}
    H_t^{\bar{a}}(W, \bar{A}_t, \bar{M}_t)  &\equiv \prod_{k=0}^t\frac{ g_k(M_k \mid W, \bar{a}_k, \bar{M}_{k-1})}{g_k(M_k \mid W, \bar{A}_k, \bar{M}_{k-1})},\\
    W_t^{\bar{a}}(W, \bar{A}_t, \bar{M}_{t-1}) &\equiv \prod_{k=0}^t \frac{\mathbbm{1}(A_k=a_k)}{ \pi_k(a_k \mid W, \bar{M}_{k-1}, \bar{A}_{k-1})}.
\end{align*}

The F-functional has (at least) two inverse probability weighted representations given by $\Psi^{\bar{a}_T}(P)=\mathbb{E} \left\{W_T^{\bar{a}}(W, \bar{A}_T, \bar{M}_{T-1})  \sum_{\bar{a}_T'} Q_{M_{T+1}}^{\bar{a}}(W, \bar{M}_T)\right\}$ and $\Psi^{\bar{a}_T}(P)= \mathbb{E} \left\{ H_T^{\bar{a}}(W, \bar{A}_T, \bar{M}_T) Q_Y\left(W, \bar{A}_T, \bar{M}_T\right)\right\}$. In Section S2 of the Supplementary Materials, we show how these representations can be used to construct simple reweighted estimators. However, for a number of reasons, we do not recommend to pursue this estimation strategy. For one, consistency of such estimators relies on correct specification of all nuisance models, and, similar to the sequential regression based estimators, they are not compatible with data-adaptive methods for nuisance estimation. Moreover, inverse probability weighted estimators are sensitive to near-positivity violations. 

In the next section, we will use tools from semiparametric theory to propose estimators which allow for the use of certain data-adaptive methods for nuisance estimation while still achieving asymptotic normality and valid inference.

\section{Estimation}\label{sec:estimation}
In what follows, we propose locally efficient estimators based on the efficient influence function (EIF) of the front-door functional. Unlike estimators based on the nested expectation and IPW representations in the previous section, estimators constructed using the efficient influence function are often multiply robust, i.e. they remain consistent when parts of the nuisance models are inconsistently estimated. In addition, they will often allow for the use of certain data-adaptive methods for nuisance estimation while still achieving asymptotic normality and valid inference.

\subsection{Efficient influence function}
The EIF is a key object in semiparametric efficiency theory as it characterizes the semiparametric efficiency bound of regular asymptotically linear estimators. The following result establishes the efficient influence function of the F-functional. 

\begin{theorem}[EIF] \label{theorem:eif} The efficient influence function for $\Psi^{\bar{a}_T}(P)$ can be represented as:
\begin{align*}
    D^*(P)(O)=& D^*_Y(P)(O) + \sum_{t=0}^T D^*_{M_t}(P)(O) + \sum_{t=0}^T D^*_{A_t}(P)(O) + Q_{M_0}^{\bar{a}}(P)(W) -  \Psi^{\bar{a}_T}(P),
\end{align*}
where
\begin{align}
         D^*_Y(P)(O)\equiv& H_T^{\bar{a}}(W, \bar{A}_T, \bar{M}_T)\left\{ Y - Q_Y(W, \bar{A}_T, \bar{M}_T)\right\}, \label{eq:eif:Y}\\
         D^*_{M_t}(P)(O) \equiv& W_t^{\bar{a}}(W, \bar{A}_t, \bar{M}_{t-1}) \left\{Q^{\bar{a}}_{M_{t+1}}(P)(W, \bar{M}_t) - Q^{\bar{a}}_{M_t}(P)(W, \bar{M}_{t-1})\right\}, \label{eq:eif:M}\\
      D^*_{A_t}(P)(O)\equiv& H_{t-1}^{\bar{a}}(W, \bar{A}_{t-1}, \bar{M}_{t-1}) \left\{R_{M_t}^{\bar{a}}(P)(W,\bar{A}_t,\bar{M}_{t-1}) -R_{A_t}^{\bar{a}}(P)(W, \bar{A}_{t-1},\bar{M}_{t-1}) \right\}. \label{eq:eif:A}
\end{align}

\end{theorem}
\begin{proof}
    See Section S1 of the Supplementary Materials. 
\end{proof}

Before presenting estimators based on the EIF, we discuss estimation of the conditional mediator density ratio $H_T^{\bar{a}}$.

\subsection{Estimation of the mediator density ratio}
When, as is typically the case, the mediator is high-dimensional, estimating the conditional mediator density directly may be challenging. Here we propose a reparameterization of the mediator density which overcomes this problem. 
Using Bayes' Theorem, we have
\begin{align*}
    g_t(M_t \mid \bar{a}_t, \bar{M}_{t-1})= \frac{p(\bar{a}_t, \bar{M}_t)}{p(\bar{a}_t, \bar{M}_{t-1})} =\frac{\prod_{j=0}^t p(a_j \mid W, \bar{a}_{j-1}, \bar{M}_t) p(W, \bar{M}_t)}{\prod_{j=0}^t p(a_j \mid W, \bar{a}_{j-1}, \bar{M}_{t-1}) p(W,\bar{M}_{t-1})}.
\end{align*}
Define $\gamma_{j,t}(A_j \mid W, \bar{A}_{j-1}, \bar{M}_t) \equiv p(A_j \mid W, \bar{A}_{j-1}, \bar{M}_t)$. Then the mediator density ratio can be written
\begin{align*}
    h_t^{\bar{a}}(W, \bar{A}_t, \bar{M}_t)= \prod_{j=0}^t\frac{\gamma_{j,t}(a_j \mid W, \bar{a}_{j-1}, \bar{M}_t)}{\gamma_{j,t-1}(a_j \mid W, \bar{a}_{j-1}, \bar{M}_{t-1})}\frac{\gamma_{j,t-1}(A_j \mid W, \bar{A}_{j-1}, \bar{M}_{t-1})}{\gamma_{j,t}(A_j \mid W, \bar{A}_{j-1}, \bar{M}_t)}.
\end{align*}

Using $\gamma_{t,t-1}(A_t \mid W, \bar{A}_{t-1}, \bar{M}_{t-1})=\pi_t(A_t \mid W,  \bar{A}_{t-1}, \bar{M}_{t-1})$ we can write $H_t^{\bar{a}}(W, \bar{A}_t, \bar{M}_t)= \prod_{k=0}^t h_k^{\bar{a}}(W, \bar{A}_k, \bar{M}_k)$ as
\begin{align}
    \label{eq:dens_ratio}
    H_t^{\bar{a}}(W, \bar{A}_t, \bar{M}_t)= \prod_{k=0}^t\frac{\gamma_{k, t}(a_k \mid W, \bar{a}_{k-1}, \bar{M}_t)}{\pi_k(a_k \mid W, \bar{a}_{k-1}, \bar{M}_{k-1})}\frac{\pi_k(A_k \mid W, \bar{A}_{k-1}, \bar{M}_{k-1})}{\gamma_{k,t}(A_k \mid W, \bar{A}_{k-1}, \bar{M}_t)}.
\end{align}

This means that the conditional mediator density ratio can be estimated using parametric or nonparametric binary outcome regression methods. 
A similar reparameterization of the mediator density was used by \cite{zheng2017longitudinal} in a longitudinal mediation setting. 

Note that $\gamma_{k,t}$ and $\pi_k$ are not variationally independent. 
The models must be compatible in order to be correctly specified.

\subsection{Proposed estimators}
We will use the EIF to construct two estimators: a one-step estimator and a targeted minimum-loss-based estimator (TMLE). The estimators share the same large sample properties but may perform differently in finite samples because the latter is a substitution-based estimator and the former is not.

\subsubsection{One-step estimator}

The one-step estimator is defined as the solution to the estimating equation formed by setting the empirical mean of the estimated efficient influence function to zero
\begin{align}
    \hat{\psi}_n^{OS} \equiv \mathbb{P}_n \left\{ D^*_Y(P_n)(O) + \sum_{t=0}^T D^*_{M_t}(P_n)(O) + \sum_{t=0}^T D^*_{A_t}(P_n)(O) + \hat{Q}_{n, M_0}^{\bar{a}}(W) \right\},
\end{align}
where $P_n = \left\{(\hat{Q}_{Y,n}, \hat{\pi}_{n,t}, \hat{H}_{n,t}^{\bar{a}}, \hat{Q}_{n,M_t}^{\bar{a}}, \hat{R}_{n,M_t}^{\bar{a}}) : t = 0,...,T\right\}$ denotes estimators of the relevant components of the data-generating mechanism. In particular, estimators of $\hat{Q}_{n,M_t}^{\bar{a}}$ and $\hat{R}_{n,M_t}^{\bar{a}}$ can be obtained by performing sequential regressions based on the nested expectation representations in Section \ref{sec:rep}. All estimators are potentially data-adaptive. 

The one-step estimator is simple to compute but has the drawback that estimates can lie outside the parameter space. Moreover, like inverse probability weighted estimators, the one-step estimator is sensitive to near-positivity violations. Below, we describe how to construct a substitution-based estimator which also satisfies the EIF. 

\subsubsection{TMLE}

Targeted minimum loss based estimation (TMLE) \citep{vdLRubin2006, VdLRose2011, VdLRose2018} is a general method for constructing substitution-based estimators of parameters in non- and semiparametric models that solve the efficient influence function. 
Construction of the estimator is based on updating initial estimators $\hat{P}_n$ of the nuisance functions in a targeted manner by minimizing the loss along a least favorable submodel through $\hat{P}_n$. In particular, the loss and least favorable submodel are chosen so that the score of the submodel equals the efficient influence function of the target parameter. Then a substitution based estimator is obtained by evaluating $\Psi^{\bar{a}}$ at the targeted estimator $\hat{P}_n^*$.

To construct our targeting algorithm, we will use the following representation of the efficient influence function
\begin{corollary} \label{cor:DA} The summand in  \eqref{eq:eif:A} can be expressed as
  \begin{multline*}
      D^*_{A_t}(P)(O)=H_{t-1}^{\bar{a}}(W, \bar{A}_{t-1}, \bar{M}_{t-1})Z_t^{\bar{a}}(P)(W, \bar{A}_{t-1}, \bar{M}_{t-1} ) \left\{ A_t- \pi_t(1 \mid W, \bar{M}_{t-1}, \bar{A}_{t-1})\right\},
  \end{multline*}
  where
  \begin{align*}
      Z_t^{\bar{a}}(P)(W, \bar{A}_{t-1}, \bar{M}_{t-1} ) \equiv R_{M_t}^{\bar{a}}(P)(W, 1, \bar{A}_{t-1}, \bar{M}_{t-1})- R_{M_t}^{\bar{a}}(P)(W, 0, \bar{A}_{t-1}, \bar{M}_{t-1}).
  \end{align*}
\end{corollary}

We will assume for now that $Y$ is binary or continuous and bounded, and we will consider the following negative loglikehood loss functions
\begin{align}
    \begin{split}
    \label{eq:loss_Y}
    \mathcal{L}_Y(Q_Y)(O) =& - \left\{ Y \log Q_Y(W, \bar{A}_T, \bar{M}_T) + (1-Y) \log \left(1 -  Q_Y(W, \bar{A}_T, \bar{M}_T) \right)\right\}, 
    \end{split}
    \\
    \begin{split}
    \label{eq:loss_A}
        \mathcal{L}_{A_t}(\pi_t)(O) =& - \big\{ A_t \log \pi_t(A_t \mid W, \bar{M}_{t-1}, \bar{A}_{t-1}) \\
        &\qquad \qquad \qquad \qquad \quad + \left(1-A_t \right)\log \left(1 -  \pi_t(A_t \mid W, \bar{M}_{t-1}, \bar{A}_{t-1}) \right)\big\}, 
    \end{split} \\  
    \begin{split}
    \label{eq:loss_M}
    \mathcal{L}_{M_t}(Q_{M_{t}})(O) =& - \big\{ Q^{\bar{a}}_{M_{t+1}}(W, \bar{M}_t) \log  Q^{\bar{a}}_{M_t}(W, \bar{M}_{t-1}) \\
    & \qquad \qquad \qquad \quad+ \left(1 - Q^{\bar{a}}_{M_{t+1}}(W, \bar{M}_t) \right)\log \left(1 - Q^{\bar{a}}_{M_t}(W, \bar{M}_{t-1})  \right)\big\} ,
    \end{split}
\end{align}

and under these loss functions, we consider the following least favorable submodels
\begin{align}
\begin{split}
    \label{eq:submodel_Y}
    Q_{Y, \varepsilon}(W, \bar{A}_T, \bar{M}_T)=& \text{ expit}\left(\text{logit} (Q_Y(W, \bar{A}_T, \bar{M}_T)) + \varepsilon  H_T^{\bar{a}}(W, \bar{A}_T, \bar{M}_T)\right), 
\end{split}
    \\
\begin{split}
    \label{eq:submodel_A}    
    \pi_{t, \varepsilon}(A_t \mid W, \bar{M}_{t-1}, \bar{A}_{t-1} ) =& \text{ expit}\big(\text{logit}(\pi_t(A_t \mid W, \bar{M}_{t-1}, \bar{A}_{t-1} )) \\
    & \qquad \qquad + \varepsilon H_{t-1}^{\bar{a}}(W, \bar{A}_{t-1}, \bar{M}_{t-1}) Z_t^{\bar{a}}(W. \bar{A}_{t-1}, \bar{M}_{t-1})\big),\end{split}\\
    \label{eq:submodel_M}
    Q_{M_t, \varepsilon}(W,\bar{M}_{t-1})=& \text{ expit}\left(\text{logit} \left(Q^{\bar{a}}_{M_t}(W, \bar{M}_{t-1})\right) + \varepsilon  W_t^{\bar{a}}(W, \bar{A}_t, \bar{M}_{t-1}) \right).
\end{align}
Together, these loss functions and submodels satisfy
\begin{align*}
    \frac{d}{d\varepsilon}\Bigg\vert_{\varepsilon=0}  \mathcal{L}_Y(Q_{Y, \varepsilon})(O) &= D^*_Y(P)(O), \\ 
        \frac{d}{d\varepsilon}\Bigg\vert_{\varepsilon=0}  \mathcal{L}_{A_t}(\pi_{t, \varepsilon})(O) &=  D^*_{A_t}(P)(O), \\ 
    \frac{d}{d\varepsilon}\Bigg\vert_{\varepsilon=0}  \mathcal{L}_{M_t}(Q_{M_t, \varepsilon})(O) &= D^*_{M_t}(P)(O). 
\end{align*}

Let $\hat{H}_{n,t}$, $\hat{\pi}_{n,t}$, $\hat{Q}_{n,Y}$ be initial, potentialy data-adaptive, estimators of $H_t$, $\pi_t$, $Q_Y$. The targeting steps can then be summarized as follows.
\begin{description}
      \item[Update $\mathbf{Q_Y}$:] Evaluate the submodel \eqref{eq:submodel_Y} in the estimators $\hat{Q}_{n,Y}$, $\hat{H}_{n,t}$ and minimize the loss \eqref{eq:loss_Y}. Let $\hat{Q}_{n,Y}^*$ denote the updated estimator. 
      \item[Update $\mathbf{\pi_t}$:] Let $\hat{R}_{n, A_{T+1}}^* =\hat{Q}_{n,Y}^*$. For $t=T,...,0$ in decreasing order
      \begin{itemize}
          \item[(i)] Compute an estimate $\hat{R}_{n,M_t}$ of $R_{M_t}$ based on the identity in \eqref{eq:SR2_1} evaluated in $\hat{R}_{n, A_{t+1}}^*$. The outer expectation can be estimated using data-adaptive methods. 
          \item[(ii)] Evaluate the submodel \eqref{eq:submodel_A}  in the estimators $\hat{R}_{n,M_t}$, $\hat{H}_{n,t}$, $\hat{\pi}_{n,t}$ and minimize the loss \eqref{eq:loss_A}. Let $\hat{\pi}_{n,t}^*$ denote the updated estimator. 
          \item[(iii)] Compute $\hat{R}_{n,A_t}^*$ based on the identity in \eqref{eq:SR2_2} evaluated in $\hat{\pi}_{n,t}^*$, $\hat{R}_{n,M_t}$.  
      \end{itemize}
      \item[Update $\mathbf{Q_{M_t}}$:] Compute $\hat{Q}_{n, M_{T+1}}^*(W, \bar{M}_T)= \sum_{\bar{a}_T'} Q_{n,Y}^*(W, \bar{a}_T', \bar{M}_T) \prod_{t=0}^T \hat{\pi}_{n,t}^*(a_t' \mid W, \bar{M}_{t-1}, \bar{a}_{t-1}')$. For $t=T,...,0$ in decreasing order
      \begin{itemize}
          \item[(i)] Compute an estimate $\hat{Q}_{n, M_t}$ of $Q_{M_t}$ based on \eqref{eq:SR1} evaluated in $\hat{Q}_{n, M_{t+1}}^*$. The outer expectation can be estimated using data-adaptive methods. 
          \item[(ii)]  Evaluate the submodel \eqref{eq:submodel_M}  in the estimators $\hat{Q}_{n,M_{t+1}}^*$, $\hat{Q}_{n,M_t}$, $\hat{\pi}_{n,t}^*$ and minimize the loss \eqref{eq:loss_M}. Let $\hat{Q}_{n,M_t}^*$ denote the updated estimator.  
      \end{itemize}
  \end{description}

We can then estimate the target parameter as $\hat{\psi}_n^{TMLE}= \mathbb{P}_n\left\{\hat{Q}_{n, M_0}^*(W)\right\}$.

More details on the implementation of the TMLE algorithm are given in Section S2 of the Supplementary Materials.

The TMLE solves the efficient influence equation and is asymptotically equivalent to the one-step estimator. However, it potentially has better finite sample properties due to being a substitution estimator, guaranteeing that estimates lie within the bounds of the parameter space.

\subsection{Large sample properties}

Both the one-step estimator and the TMLE solve the efficient influence equation and are asymptotically equivalent. Below, we study the asymptotic distribution of the estimators. 

First, define $\tilde{Q}^{\bar{a}}_{M_t}( \hat{Q}_{n, M_{t+1}})(W, \bar{M}_{t-1})=\mathbb{E}\left\{ \hat{Q}_{n, M_{t+1}}(W, \bar{M}_t) \mid W, \bar{A}_t = \bar{a}_t, \bar{M}_{t-1} \right\}$ and  $\tilde{\kappa}_t^{\bar{a}}(W, \bar{a}_t', \bar{M}_{t-1})=\mathbb{E}\left\{\hat{R}^{\bar{a}}_{n, A_{t+1}}(W, \bar{a}_t', \bar{M}_t) \mid W, \bar{A}_t=\bar{a}_t, \bar{M}_{t-1} \right\}$ for $t=T,...,0$. Let $Q_Y^*$ be the large sample limit of $\hat{Q}_{n,Y}$ and for $t=0,...,T$ let $\pi_t^*, H_t^{\bar{a},*}, Q_{M_t}^{\bar{a},*}, R_{M_t}^{\bar{a},*}$,  be the limits of $\hat{\pi}_{n,t}, \hat{H}_{n,t}^{\bar{a}},  \hat{Q}_{n,M_t}^{\bar{a}}, \hat{R}_{n,M_t}^{\bar{a}}$.
Further, let $\tilde{Q}^{\bar{a},*}_{M_t}=\tilde{Q}^{\bar{a}}_{M_t}(Q_{M_{t+1}}^{\bar{a},*})$ and $\tilde{R}^{\bar{a},*}_{M_t}=\tilde{R}^{\bar{a}}_{M_t}(R_{A_{t+1}}^{\bar{a},*})$.

The following lemma is a direct consequence of Theorem S1 in the Supplementary Material which establishes the second order remainder term. 
\begin{lemma}[Multiple robustness] \label{lemma:DR} $\mathbb{E}\left\{D^*(P^*)(O)\right\} = \Psi^{\bar{a}_T}(P)-\Psi^{\bar{a}_T}(P^*)$ if either of the following conditions holds.
    \begin{itemize}
        \item[(i)] $H_t^{\bar{a},*}=H_t^{\bar{a}}$ and $\pi_t^*=\pi_t$ for $t=0,...,T$,
        \item[(ii)] $Q_Y^*=Q_Y$ and $\pi_t^*=\pi_t$ for $t=0,...,T$,
        \item[(iii)] $Q_{M_t}^{\bar{a},*}=\tilde{Q}^{\bar{a},*}_{M_t}$, $R_{M_t}^{\bar{a},*}=\tilde{R}^{\bar{a},*}_{M_t}$ and $H_t^{\bar{a},*}=H_t^{\bar{a}}$ for $t=0,...,T$.
    \end{itemize}
\end{lemma}

Lemma \ref{lemma:DR} implies that both the one-step estimator and the TMLE are multiply robust estimators.

When estimating the mediator density using the representation in \eqref{eq:dens_ratio} the third condition above becomes redundant as $H_t^{\bar{a},*}=H_t^{\bar{a}}$ implies $\pi_t^*=\pi_t$. Let $\gamma_{j,t}^*$  be the large sample limits of $\hat{\gamma}_{n, j,t}$ for $t =0,...,T$, $j<t$. Then the double robustness properties can be stated as follows. 
\begin{corollary} $\mathbb{E}\left\{D^*(P^*)(O)\right\} = \Psi^{\bar{a}_T}(P)-\Psi^{\bar{a}_T}(P^*)$ if either of the following conditions holds.
    \begin{itemize}
        \item[(i)] $\gamma_{j,t}^* = \gamma_{j,t}$ and $\pi_t^*=\pi_t$ for $t=0,...,T$, $j<t$, 
        \item[(ii)] $Q_Y^*=Q_Y$ and $\pi_t^*=\pi_t$ for $t=0,...,T$,
    \end{itemize}
\end{corollary}

To establish the asymptotic distribution of the estimators consider the following assumptions. 
\begin{assumption} \label{ass:AL}  Suppose the following conditions hold.
\begin{itemize}
    \item[(i)] $D^*(P_n)- \psi^{\bar{a}_T}(P_n)$ belongs to a Donsker class.
    \item[(ii)] Regularity conditions: for $t=0,...,T$
    \begin{itemize}
        \item[(a)] $\prod_{k=0}^t\hat{\pi}_{n,k}(a_k \mid W, \bar{M}_{k-1}, \bar{a}_{k-1}) > \delta >0$.
        \item[(b)] $\hat{R}^{\bar{a}}_{n,M_t}(W, a_t', \bar{A}_{t-1}, \bar{M}_{t-1}) < \eta < \infty$ for $a_t'=0,1$.
    \end{itemize}
    \item[(iii)] All nuisance parameters are estimated at rate faster than $n^{-1/4}$.
\end{itemize}
\end{assumption}

Note that the convergence rates in Assumption \ref{ass:AL} (iii) are much slower than the the parametric rate $n^{-1/2}$, and may be satisfied for certain data-adaptive (machine learning) methods. If using super learners to estimate the nuisance models the rate condition can be satisfied if it is satisfied for one of the candidate estimators in the library \citep{vdL2004Oracle, vdL2007SuperLearner}. The Donsker class condition in (i) puts some restrictions on the complexity of the nuisance function estimators \citep{vanderVaart2000}. This assumption can be relaxed by using sample splitting \citep{chernozhukov2018double, zheng2010asymptotic}.

Asymptotic normality of the estimators is established in the following theorem. 
 \begin{theorem}[Asymptotic linearity] \label{theorem:AL}
     Under Assumption \ref{ass:positivity} and Assumption \ref{ass:AL} (i)-(iii) $\sqrt{n}\left\{\hat{\psi}_n^{OS}-\Psi^{\bar{a}_T}(P)\right\}=\sqrt{n} \mathbb{P}_n D^*(P)(O) + o_p(1)$, i.e., $\hat{\psi}_n^{OS}$ is asymptotically linear with influence function equal to the efficient influence function. Under the same assumptions with $P_n$ replaced by $P_n^*$ $\sqrt{n}\left\{\hat{\psi}_n^{TMLE}-\Psi^{\bar{a}_T}(P)\right\}=\sqrt{n} \mathbb{P}_n D^*(P)(O) + o_p(1)$, i.e., $\hat{\psi}_n^{TMLE}$ is asymptotically linear with influence function equal to the efficient influence function.
 \end{theorem}
 \begin{proof}
     See Section S1 of the Supplementary Materials.
 \end{proof}

Theorem \ref{theorem:AL} implies that Wald-type confidence intervals for the one-step estimator can be constructed as $\hat{\psi}_n^{OS} \pm z_{1-\alpha/2} \sqrt{\hat{\sigma}^2/n}$ where $\hat{\sigma}^2$ is the empirical sample variance of the estimated EIF. Analogously, Wald-type confidence intervals for the TMLE can be constructed as $\hat{\psi}_n^{TMLE} \pm z_{1-\alpha/2} \sqrt{\hat{\sigma}^2/n}$.

\section{Simulation study}\label{sec:sim}
The aim of the simulation studies is to demonstrate the finite-sample performances of the proposed estimators. The R code to replicate the simulation studies is available in our GitHub repository \href{https://github.com/mariesbreum/Longitudinal-front-door-estimation}{mariesbreum/Longitudinal-front-door-estimation}.

\subsection{Set-up}
We simulated data from data-generating distributions with two time-points. In order to illustrate the theoretical properties of the proposed estimators, we considered four different data-generating mechanisms. DGM (1): The mediators are binary and there are no direct effects of exposure on the outcome. DGM (2): The mediators are binary and there is a direct effect of exposure on outcome. DGM (3): The mediators are continuous and there are no direct effects of exposure on the outcome. DGM (4): The mediators are continuous and there is a direct effect of exposure on the outcome. In all data-generating distributions, exposure and outcome are binary and affected by an unmeasured confounder. A more detailed description of the data generating distributions is given in Section S4 of the Supplementary Materials. 

In DGM (1) and DGM (2) the mediator density was estimated directly, and in order to evaluate the robustness of the proposed estimators when some or all of the nuisance models are misspecified, we considered the following scenarios: (a)  All nuisance models are (approximately) correctly specified. (b) $Q_Y$ and the sequential regressions are misspecified. (c) $H_t^a$  and the sequential regressions are misspecified. (d) $\pi_t$ and $Q_Y$ are misspecified. (e) All nuisance models are misspecified. In DGM (3) and DGM (4) the mediator density was estimated based on the parameterization in \eqref{eq:dens_ratio} and we considered the following scenarios: (a*) All nuisance models are (approximately) correctly specified. (b*)  $Q_Y$ and the sequential regressions are misspecified. (c*) $\gamma_{j,t}$  and the sequential regressions are misspecified. (d*) All nuisance models are misspecified.

The (approximately) correctly specified estimators of $Q_Y$, $\pi_t$ and the sequential regressions were obtained using ensemble super learners \cite{vdL2007SuperLearner, VdLRose2011} implemented using the \texttt{SuperLearner} R-package \citep{SuperLearner}.  The library of candidate learners for the super learners included a main effects glm, glm with second order interactions and bayes glm. Each learner was included with and without a correlation based variable screener. Details of the library of candidate learners for the super learners are provided in Section S4 of the Supplementary Materials. Correctly specified estimators of the mediator density $g_t$ were obtained using a correctly specified logistic regression model. Misspecified esimators were obtained by fitting parametric regression models that do not adjust for all relevant variables.  

\subsection{Results}
For each data-generating distribution, we generated $1000$ datasets with a sample size of $n \in \{500, 1000, 2000, 4000\}$. For each simulated data set, we computed the estimators derived in Sections \ref{sec:estimation} as well as the IPW and sequential regression estimators based on the representations described in Section \ref{sec:rep}. We computed the absolute bias scaled by $\sqrt{n}$ of all estimators. The results for DGM (1) are displayed in Figure \ref{fig:sim_bias}. The remaining simulation results for DGMs (2)-(4) can be found in Section S4 of the Supplementary Materials. 

\begin{figure}
    \centering
    \includegraphics[width=0.9\textwidth]{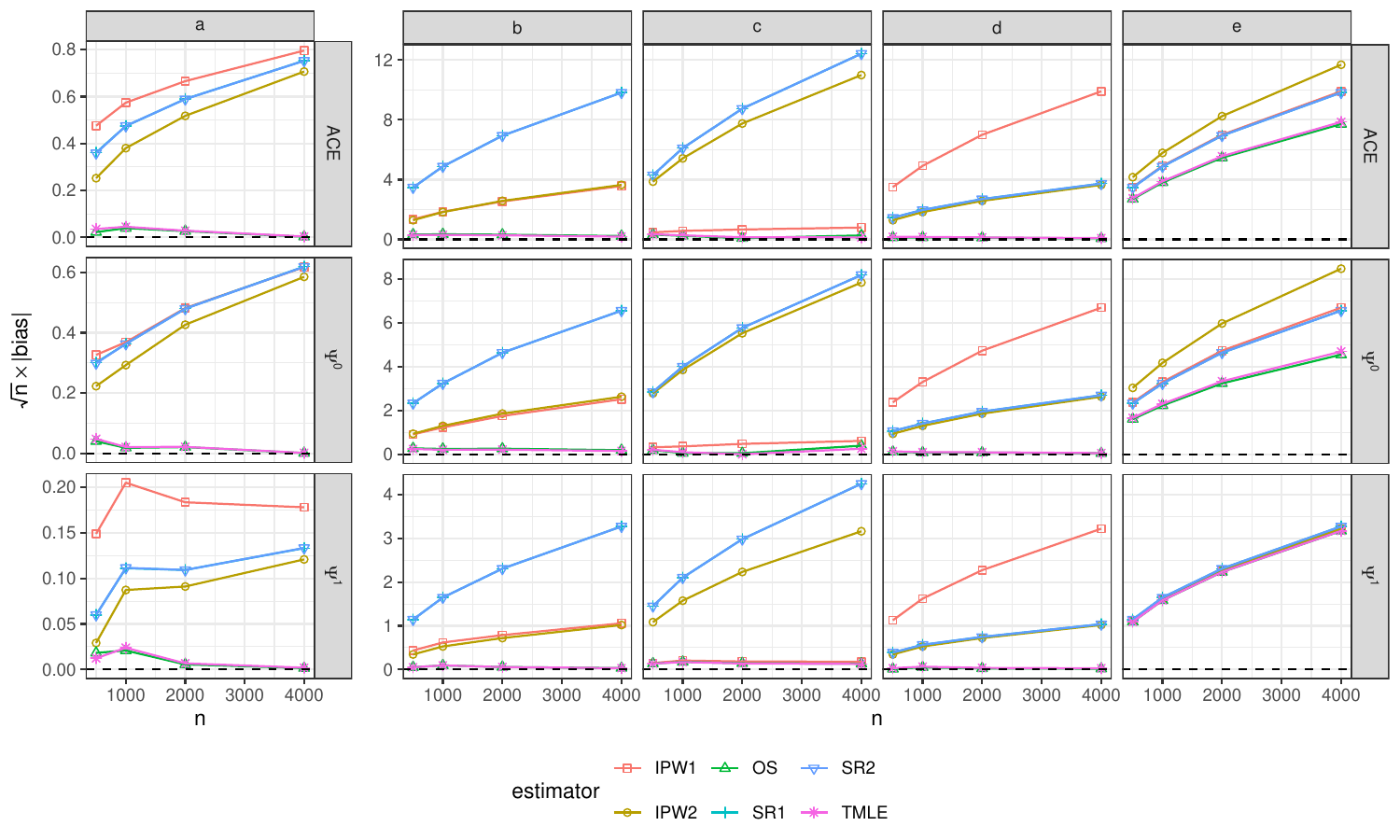}
    \caption{Simulation results for DGM (1). Absolute bias scaled by $\sqrt{n}$.}
    \label{fig:sim_bias}
\end{figure}

The simulation results show that the biases of all estimators are small in (a) when all nuisance models are approximately correctly specified. The biases of the one-step estimator and the TMLE converge to zero at rate $\sqrt{n}$ as expected by Theorem \ref{theorem:AL}, while the biases of the IPW and sequential regression estimators do not.  Further, the TMLE and the one-step estimator remain unbiased in scenarios (b)-(d) as expected by Lemma \ref{lemma:DR}, and are biased in (e) where all nuisance models are misspecified. 

For the one-step estimator and the TMLE we estimated the variance by computing the empirical variance of the efficient influence function and used this to compute the mean square error (MSE) scaled by $n$ and the coverage of a 95\% Wald-type confidence interval. These metrics are displayed 
in Figures \ref{fig:sim_mse} and \ref{fig:sim_cov} below.

\begin{figure}[H]
    \centering
    \includegraphics[width=0.9\textwidth]{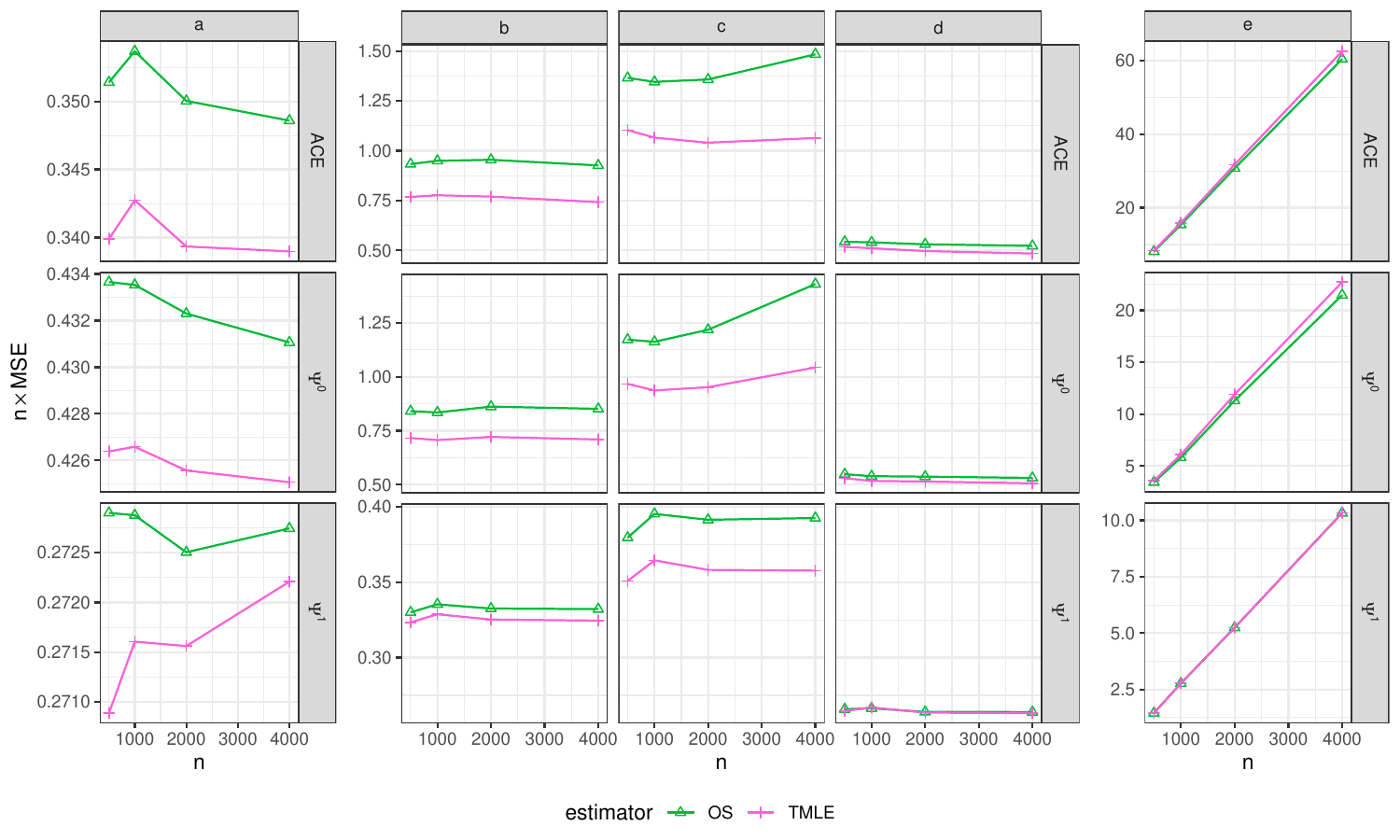}
    \caption{Simulation results for DGM (1). $\mbox{MSE}$ scaled by $n$.}
    \label{fig:sim_mse}
\end{figure}

\begin{figure}[H]
    \centering
    \includegraphics[width=0.9\textwidth]{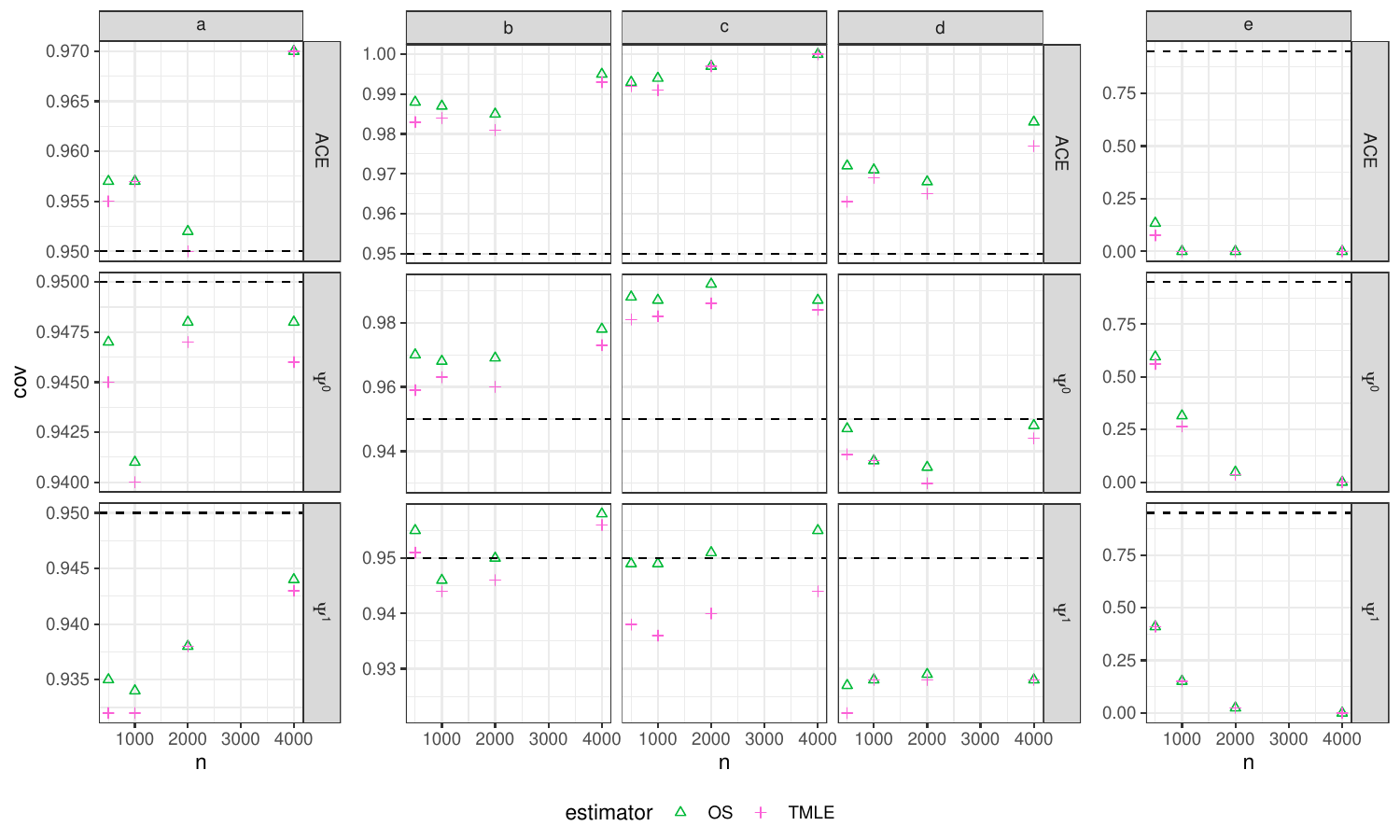}
    \caption{Simulation results for DMG (1). Coverage probability.}
    \label{fig:sim_cov}
\end{figure}

From Figure \ref{fig:sim_mse}, we see that the targeted minimum loss-based estimator yields a better mean squared error than the one-step estimator across (a)-(d). As seen in Figure \ref{fig:sim_cov}, the coverage probabilities for the TMLE and the one-step estimator are nearly identical. In scenario (a), where all models are correctly specified, the coverage probabilities are close to the nominal level. When some nuisance models are misspecified, the influence function based standard errors no longer provide valid inference, according to theory. Surprisingly, the coverage probabilities remain close to the nominal level in (b)-(d). In  (e) the coverage probabilities approach zero as $n$ increases.

\section{Application: Peanut allergy trial}\label{sec:app}

We now apply the longitudinal front-door framework to investigate the causal effect of peanut consumption in infants on development of peanut allergy. Our illustrative data comes from the peanut allergy trial of \cite{du2015randomized}, a two-arm randomized trial which assigned 640 infants between the ages of 4 and 11 months to either consume peanuts or avoid peanuts for the duration of the trial. Let $t=1,2,3$ represent the scheduled visits when participants were 12, 30 and 60 months of age. At baseline and at each visit, serum levels of peanut-specific immunoglobulin antibodies (IgE, IgG, and IgG4) were measured. We let $M_t = (\mbox{IgE}_t, \mbox{IgG}_t, \mbox{IgG4}_t)$ be the time-varying mediator. Adherence was assessed using telephone interviews between visits, which were conducted weekly until visit 1, every 2 weeks between visits 1 and 2, and monthly thereafter. We define exposure $A_t$ to be the indicator of having a weekly average self reported consumption of more than $3$ grams per week between visits $t-1$ and $t$. The primary outcome ($Y$) was peanut allergy at 60 months of age, which was determined by an oral food challenge. Missing mediator values were imputed by carrying forward the last recorded value. Subjects with missing outcome values were removed from the data set. Note that there are very few missing values and this is unlikely to affect the results. We included sex, age, and results of a baseline skin-prick test as baseline covariates, as well as baseline values of IgE, IgG, and IgG4.

The causal effect of self-reported peanut consumption on development of peanut allergy is not identified using the back-door criterion alone because the reason for not adhering to the assigned treatment arm is generally unknown. However, if the measured immunoglobulin antibodies completely mediate the effect of peanut consumption on the primary outcome, and are unconfounded with exposure and outcome conditional on the measured baseline covariates, then the longitudinal front-door criterion allows for identification of the intervention specific mean outcome. 
We estimate the average causal effect (ACE) of high peanut consumption throughout the trial versus low peanut consumption throughout the trial. We also estimate the population intervention effect that contrasts the observed mean outcome and the outcome if everyone had a high peanut consumption throughout the trial ($\mbox{PIE}^1$), and the population intervention effect that contrasts the observed mean outcome and the outcome if everyone had a low peanut consumption throughout the trial ($\mbox{PIE}^0$). 
We note that if peanut consumption has direct effects on the primary outcome not mediated by the measured immunoglobulin antibodies, then the ACE and the PIE are not point identified, but the contrast between the observed outcome mean and the front-door functional may still be interpreted as the indirect component of the PIE, the population intervention indirect effect (PIIE), under a different set of assumptions as discussed in Section \ref{sec:prelim}.
In this application, it seems likely that the effect of peanut consumption on development of peanut allergy is completely mediated by the immune modulation. The main concerns regarding identification are that the measured antibodies may not perfectly capture the true immune modulation process or that the mediator and outcome are affected by an unmeasured confounder such as household peanut exposure levels in which case the unconfounded mediator assumption will be violated. Developing methods for assessing the sensitivity of our results to violations of the unconfounded mediator assumption is an important topic for future research. 

We apply the one-step estimator and the TMLE proposed in Section \ref{sec:estimation}. All nuisance models were estimated using ensemble super learners \citep{vdL2007SuperLearner, VdLRose2011} implemented using the \texttt{SuperLearner} R-package \citep{SuperLearner}.   The library of candidate learners included a main effects glm, glm with second order interactions and bayes glm. Each learner was included with and without a correlation based variable screener.

\begin{table}[]
\centering
\begin{tabular}{cllrrr}
\hline
estimand  &    parameter                 & estimator & \multicolumn{1}{c}{est} & CI low  & CI up    \\ \hline
\multirow{2}{*}{ACE}     & \multirow{2}{*}{$\Psi^{\bar{1}}(P)-\Psi^{\bar{0}}(P)$}     & TMLE      & $-0.0776$              & $-0.102$  & $-0.0527$  \\
                      &   & OS       & $-0.0808$             & $-0.106$  & $-0.0553$  \\
\multirow{2}{*}{$\mbox{PIE}^1$ or $\mbox{PIIE}^1$}     & \multirow{2}{*}{$\mathbb{E}(Y) - \Psi^{\bar{1}}(P)$} & TMLE      & $0.0701$                  & $0.0472$  & $0.0930$    \\
                      &   & OS       & $0.0694$               & $0.0461$  & $0.0927$   \\
\multirow{2}{*}{$\mbox{PIE}^0$ or $\mbox{PIIE}^0$}     &    \multirow{2}{*}{$\mathbb{E}(Y) - \Psi^{\bar{0}}(P)$} & TMLE      & $-0.00752$                & $-0.0131$ & $-0.00198$ \\
                      &   & OS       & $-0.0114$                 & $-0.0173$ & $-0.00553$ \\
\hline
\end{tabular}
\caption{Peanut data trial. est is the estimate; CI lower and CI upper are the upper and lower bounds respectively of a Wald-type 95\% confidence interval.}
\label{tab:peanut_res}
\end{table}

The results are summarized in Table \ref{tab:peanut_res}. We see that the TMLE and the one-step estimator resulted in almost identical point estimates and inferences. The estimated ACE is negative, indicating that peanut consumption has a protective effect on the development of peanut allergy compared to peanut avoidance. 
The estimated $\mbox{PIE}^1$ is non-negative indicating that peanut consumption has a protective effect on the development of peanut allergy compared to the observed risk of peanut allergy. 
Note that since this is a trial, the observed risk of peanut allergy does not correspond to the risk of peanut allergy in a population without intervention. 
We also see that the estimated $\mbox{PIE}^0$ is negative indicating that peanut avoidance increases the risk of peanut allergy compared to the observed risk of peanut allergy (which again is different from the population risk of peanut allergy).

\section{Concluding remarks}\label{sec:dis}

In this work, we developed estimators of the longitudinal front-door functional, which identifies either the intervention specific mean outcome or the counterfactual component of the PIIE, depending on the identification assumptions one is willing to make. Although the longitudinal front-door criterion has been available in the literature for several years, this is the first methodology developed for nonparametric efficient estimation. In particular, we derived the efficient influence function of the longitudinal front-door functional and presented two types of estimators: a one-step estimator and a TMLE. We showed that the estimators remain consistent and asymptotically normal when nuisance models are estimated data-adaptively as long as they satisfy certain rate conditions. This is an important property in longitudinal settings, where simultaneously correctly specifying all nuisance models using parametric models is typically not feasible. We also proposed a reparameterization of the mediator density ratio based on Bayes' theorem, which allowed us to avoid estimating the mediator density directly. This makes our method applicable in settings with continuous and/or high dimensional mediators.

The proposed methodology has some limitations. First, the proposed method is limited to settings with binary exposure at each time point. Extending the methodology to settings with continuous or multi-valued exposure at each time point is a topic for future research. Another important topic for future research is to develop methods for assessing the sensitivity of our results to violations of key identification assumptions such as the unconfounded mediator assumption, and finding Verma or other constraints that can be used to confirm them, under faithfulness, or falsify them.

\section{Disclosure statement}

The authors have no conflicts of interest.

\section{Data Availability Statement}

The peanut consumption trial data used in Section \ref{sec:app} are publicly available and can be downloaded from the Immune Tolerance Network TrialShare website (https://www.itntrialshare.org/, study identifier: ITN032AD).

\section*{Acknowledgments}
MSB and EEG were partially supported by a grant from Novo Nordisk fonden (NNF22OC0076595) and EEG by the Pioneer Centre for SMARTbiomed.

\bibliographystyle{abbrvnat}
\bibliography{bibliography}

@article{hubbard2008population,
  title={Population intervention models in causal inference},
  author={Hubbard, Alan E and Van der Laan, Mark J},
  journal={Biometrika},
  volume={95},
  number={1},
  pages={35--47},
  year={2008},
  publisher={Oxford University Press}
}

@article{BangRobins2005,
   author = {Bang, Heejung and Robins, James M.},
   title = {Doubly Robust Estimation in Missing Data and Causal Inference Models},
   journal = {Biometrics},
   volume = {61},
   number = {4},
   pages = {962-973},
   year = {2005}
}

@article{zheng2017longitudinal,
  title={Longitudinal mediation analysis with time-varying mediators and exposures, with application to survival outcomes},
  author={Zheng, Wenjing and Van Der Laan, Mark},
  journal={Journal of causal inference},
  volume={5},
  number={2},
  pages={20160006},
  year={2017},
  publisher={De Gruyter}
}

@article{fulcher2020robust,
  title={Robust inference on population indirect causal effects: the generalized front door criterion},
  author={Fulcher, Isabel R and Shpitser, Ilya and Marealle, Stella and Tchetgen Tchetgen, Eric J},
  journal={Journal of the Royal Statistical Society Series B: Statistical Methodology},
  volume={82},
  number={1},
  pages={199--214},
  year={2020},
  publisher={Oxford University Press},
keywords={supp}
}

@article{wen2023causaleffectsinterveningvariables,
  title={Causal effects of intervening variables in settings with unmeasured confounding},
  author={Wen, Lan and Sarvet, Aaron and Stensrud, Mats},
  journal={Journal of Machine Learning Research},
  volume={25},
  number={345},
  pages={1--54},
  year={2024},
keywords={supp}
}

@misc{guo2023targetedmachinelearningaverage,
      title={Targeted Machine Learning for Average Causal Effect Estimation Using the Front-Door Functional}, 
      author={Anna Guo and David Benkeser and Razieh Nabi},
      year={2023},
      eprint={2312.10234},
      archivePrefix={arXiv},
      primaryClass={stat.ME},
      url={https://arxiv.org/abs/2312.10234}, 
}

@article{bellemare2024paper,
  title={The Paper of How: Estimating Treatment Effects Using the Front-Door Criterion},
  author={Bellemare, Marc F and Bloem, Jeffrey R and Wexler, Noah},
  journal={Oxford Bulletin of Economics and Statistics},
  volume={86},
  number={4},
  pages={951--993},
  year={2024},
  publisher={Wiley Online Library}
}

@article{sjolander2013confounding,
  title={Confounding adjustment through the blocking of front-doors in longitudinal studies},
  author={Sjölander, A and Bellocco, R},
  journal={Epidemiology Biostatistics and Public Health},
  volume={10},
  number={1},
  year={2013},
  publisher={Prex SpA}
}

@article{du2015randomized,
  title={Randomized trial of peanut consumption in infants at risk for peanut allergy},
  author={Du Toit, George and Roberts, Graham and Sayre, Peter H and Bahnson, Henry T and Radulovic, Suzana and Santos, Alexandra F and Brough, Helen A and Phippard, Deborah and Basting, Monica and Feeney, Mary and others},
  journal={New England Journal of Medicine},
  volume={372},
  number={9},
  pages={803--813},
  year={2015},
  publisher={Mass Medical Soc}
}

@article{piccininni2023effect,
  title={The effect of mobile stroke unit care on functional outcomes: an application of the front-door formula},
  author={Piccininni, Marco and Kurth, Tobias and Audebert, Heinrich J and Rohmann, Jessica L},
  journal={Epidemiology},
  volume={34},
  number={5},
  pages={712--720},
  year={2023},
  publisher={LWW}
}

@article{glynn2018front,
  title={Front-door versus back-door adjustment with unmeasured confounding: Bias formulas for front-door and hybrid adjustments with application to a job training program},
  author={Glynn, Adam N and Kashin, Konstantin},
  journal={Journal of the American Statistical Association},
  volume={113},
  number={523},
  pages={1040--1049},
  year={2018},
  publisher={Taylor \& Francis}
}

@article{inoue2022causal,
  title={Causal effect of chronic pain on mortality through opioid prescriptions: application of the front-door formula},
  author={Inoue, Kosuke and Ritz, Beate and Arah, Onyebuchi A},
  journal={Epidemiology},
  volume={33},
  number={4},
  pages={572--580},
  year={2022},
  publisher={LWW}
}

@book{vanderVaart2000,
	Author = {van der Vaart, A. W.},
	Publisher = {Cambridge University Press},
	Title = {Asymptotic statistics},
	Year = {2000},
	edition = {3}
}

@techreport{vdL2004Oracle,
author = {{van der Laan}, M. J. and S. Dudoit},
institution = {University of California, Berkeley},
series = {U.C. Berkeley Division of Biostatistics Working Paper Series},
title = "{Unified cross-validation methodology for selection among estimators and a general cross-validated adaptive epsilon-net estimator: Finite sample oracle inequalities and examples}",
year = {2003}
}

@article{vdL2007SuperLearner,
    author = {{van der Laan}, M. J. and E. C. Polley and A. E. Hubbard },
    title = "{Super Learner}",
    journal = {Stat Appl Genet Mol Biol.},
    volume = {6},
    pages = {Article 25},
    year = {2007}
}

@article{Hines2022demystifying,
	year = {2022},
	volume = {76},
	number = {3},
	pages = {292--304},
	author = {Oliver Hines and Oliver Dukes and Karla Diaz-Ordaz and Stijn Vansteelandt},
	title = {Demystifying Statistical Learning Based on Efficient Influence Functions},
	journal = {The American Statistician}
}

@article{Pearl1995causal,
    author = {Pearl, J.},
    title = {Causal Diagrams for Empirical Research},
    journal = {Biometrika},
    volume={82},
    number={4},
    year =  {1995},
pages={669-688}
}

@inproceedings{bhattacharya2022testability,
  title={On testability of the front-door model via Verma constraints},
  author={Bhattacharya, Rohit and Nabi, Razieh},
  booktitle={Uncertainty in Artificial Intelligence},
  pages={202--212},
  year={2022},
  organization={PMLR}
}

@inproceedings{cavalcante2025nonparametric,
  title={Nonparametric Estimation of the Average Causal Effect using the Front-door Criterion},
  author={Cavalcante, Ana PL and Izbicki, Rafael and Stern, Rafael Bassi},
year={2025},
  booktitle={UAI 2025 Workshop on Causal Abstractions and Representations}
}

@article{vdLRubin2006,
    author = {{van der Laan}, M. J. and D. Rubin},
    title = {Targeted Maximum Likelihood Learning},
    journal = {International Journal of Biostatistics},
    volume = {2},
    number={1},
    year = {2006}}

@book{VdLRose2011,
	Author = {{van der Laan}, M. J. and S. Rose},
	Title = {Targeted Learning: Causal Inference for Observational and Experimental Data},
    Year = {2011},
    Publisher= {Springer Science \& Business Media}
}

@book{VdLRose2018,
	Author = {{van der Laan}, M. J. and S. Rose},
	Title = {Targeted Learning in Data Science: Causal Inference for Complex Longitudinal Studies},
    Year = {2018},
    Publisher= {Springer Science \& Business Media}
}

@techreport{zheng2010asymptotic,
author = {Zheng, Wenjing and Van Der Laan, Mark J},
file = {http://biostats.bepress.com/ucbbiostat/paper273},
institution = {University of California, Berkeley},
series = {U.C. Berkeley Division of Biostatistics Working Paper Series},
title = {Asymptotic theory for cross-validated targeted maximum likelihood estimation},
year = {2010}
}

@article{chernozhukov2018double,
  title={Double/debiased machine learning for treatment and structural parameters},
  author={Chernozhukov, Victor and Chetverikov, Denis and Demirer, Mert and Duflo, Esther and Hansen, Christian and Newey, Whitney and Robins, James},
  year={2018},
  journal={The Econometrics Journal},
volume={21},
number={1},
pages={C1-C68}
}

@misc{kennedy2023semiparametric,
      title={Semiparametric doubly robust targeted double machine learning: a review}, 
      author={Edward H. Kennedy},
      year={2023},
   journal = {arXiv preprint},
   howpublished={arXiv preprint \href{https://arxiv.org/abs/2203.06469}{arXiv:2203.06469}}
}

@software{SuperLearner,
title = {SuperLearner: Super Learner Prediction},
    author = {Eric Polley and Erin LeDell and Chris Kennedy and Mark {van der Laan}},
    year = {2024},
    note = {R package version 2.0-29},
    url = {https://CRAN.R-project.org/package=SuperLearner},
    doi = {10.32614/CRAN.package.SuperLearner},
}

@article{mao2025front,
  title={Front-door Reducibility: Reducing ADMGs to the Standard Front-door Setting via a Graphical Criterion},
  author={Mao, Jianqiao and Little, Max A},
  journal={arXiv preprint arXiv:2511.15679},
  year={2025}
}

@article{glynn2017front,
  title={Front-door difference-in-differences estimators},
  author={Glynn, Adam N and Kashin, Konstantin},
  journal={American Journal of Political Science},
  volume={61},
  number={4},
  pages={989--1002},
  year={2017},
  publisher={Wiley Online Library}
}

@article{imbens2020potential,
  title={Potential outcome and directed acyclic graph approaches to causality: Relevance for empirical practice in economics},
  author={Imbens, Guido W},
  journal={Journal of Economic Literature},
  volume={58},
  number={4},
  pages={1129--1179},
  year={2020},
  publisher={American Economic Association 2014 Broadway, Suite 305, Nashville, TN 37203-2425}
}

@inproceedings{shpitser2006identification,
  title={Identification of joint interventional distributions in recursive semi-Markovian causal models},
  author={Shpitser, Ilya and Pearl, Judea},
  booktitle={AAAI},
  pages={1219--1226},
  year={2006}
}

@article{cox1995causal,
  title={Causal diagrams for empirical research: Discussion of ‘Causal diagrams for empirical research’by J. Pearl},
  author={Cox, David R and Wermuth, Nanny},
  journal={Biometrika},
  volume={82},
  number={4},
  pages={688--689},
  year={1995},
  publisher={Oxford University Press}
}

@book{koller2009probabilistic,
  title={Probabilistic graphical models: principles and techniques},
  author={Koller, Daphne and Friedman, Nir},
  year={2009},
  publisher={MIT press}
}

\newpage
\centerline{\bf\LARGE Supplementary Materials}
\bigskip
\setcounter{section}{0}
\setcounter{figure}{0}
\setcounter{equation}{0}
\makeatletter
\renewcommand \thesection{S\@arabic\c@section}
\renewcommand\thetable{S\@arabic\c@table}
\renewcommand \thefigure{S\@arabic\c@figure}
\renewcommand \thetheorem{S\@arabic\c@theorem}
\renewcommand \thelemma{S\@arabic\c@lemma}
\renewcommand \thecorollary{S\@arabic\c@corollary}
\renewcommand \theequation{S\@arabic\c@equation}
\makeatother
\setcounter{page}{1}

\section{Proofs}

\subsection{Identification} \label{app:Identification}

We have
\begin{align}
\label{eq:ID1}
    \Psi^{\bar{a}_T}(P) &= \iint \mathbb{E}\left\{ Y(\bar{a}_T) \mid u, w \right\} p(u, w) \diff u \diff w \nonumber \\
    &= \iint \mathbb{E}\left\{ Y(\bar{a}_T) \mid u, w, a_0 \right\} p(u, w) \diff u \diff w \nonumber \\
    &=\iiint \mathbb{E}\left\{ Y(\bar{a}_T) \mid u, w, a_0, m_0 \right\} p(m_0 \mid u, w, a_0) p(u, w) \diff m_0 \diff u \diff w \nonumber \\
    &= \dots \nonumber \\
    &= \iiint  \mathbb{E}\left\{ Y(\bar{a}_T) \mid u, w, \bar{a}_T,\bar{m}_T \right\} \prod_{t=0}^T p(m_t \mid u, w, \bar{a}_T, \bar{m}_{t-1}) p(u, w) \diff \bar{m}_T \diff u \diff w \nonumber \\
    &=\iiint  \mathbb{E}\left( Y \mid u, w, \bar{a}_T,  \bar{m}_T \right) \prod_{t=0}^T p(m_t \mid u, w, \bar{a}_T, \bar{m}_{t-1}) p(u, w)  \diff \bar{m}_T \diff u \diff w \nonumber  \nonumber \\
    &=\iint \underbrace{\int \mathbb{E}\left( Y \mid u, w, \bar{a}_T, \bar{m}_T \right) p(u \mid w) \diff u}_{\equiv (*)} \prod_{t=0}^T g_t(m_t \mid w, \bar{a}_T, \bar{m}_{t-1}) p(w) \diff \bar{m}_T \diff w,
\end{align}
where the first equality follows from the law of total probability (LTP), the second from Assumption 2 (iii), and the third from the LTP. We repeat the arguments in steps two and three until we arrive at the expression on the right-hand side of the fifth equality.  The sixth equality follows from Assumption 1 and the seventh from Assumption 2 (i) and by rearranging. 

We can write $p(u \mid w)$ as
\begin{align}
\label{eq:ID2}
    p(u \mid w) &= \int p(u \mid w, a_0') \pi_0(a_0' \mid w)  \diff a_0' \nonumber \\
    &=  \int  p(u \mid w, a_0', m_0) \pi_0(a_0' \mid w) \diff a_0' \nonumber \\
    &= \dots \nonumber \\
    &= \int p(u \mid w, \bar{a}_T', \bar{m}_T) \prod_{t=0}^T \pi_t(a_t' \mid w, \bar{m}_{t-1}, \bar{a}_{t-1}')\diff \bar{a}_T',
\end{align}
where the first equality follows from the LTP, the second equality from Assumption 2 (i), and the last equality from repeating the arguments in steps 1 and 2.  

We can now write $(*)$ as
\begin{align}
\label{eq:ID3}
    (*) &= \iint \mathbb{E}\left( Y \mid u, w, \bar{m}_T, \bar{a}_T \right) p(u \mid w, \bar{a}_T', \bar{m}_T) \prod_{t=0}^T \pi_t(a_t' \mid w,  \bar{m}_{t-1}, \bar{a}_{t-1}')  \diff \bar{a}_T' \diff u \nonumber \\
    &=\iint  \mathbb{E}\left( Y \mid u, w, \bar{m}_T, \bar{a}_T' \right) p(u \mid w, \bar{a}_T', \bar{m}_T) \diff u \prod_{t=0}^T \pi_t(a_t' \mid w,  \bar{m}_{t-1}, \bar{a}_{t-1}') \diff \bar{a}_T' \nonumber \\
    &=\int Q_Y\left( Y \mid w, \bar{a}_T', \bar{m}_T \right)  \prod_{t=0}^T \pi_t(a_t' \mid w,  \bar{m}_{t-1}, \bar{a}_{t-1}') \diff \bar{a}_T', 
\end{align}
where the first equality follows from inserting \eqref{eq:ID2}, the second equality from Assumption 2 (ii), and the third by integrating out $u$. 

Combing \eqref{eq:ID1} and \eqref{eq:ID3} gives the identifying functional in Theorem 1. 

\subsection{EIF}  \label{app:EIF}
There are different strategies for deriving efficient influence functions in the literature. Here we will compute the Gâteaux derivative in the direction of a point mass contamination as described by \cite{Hines2022demystifying, kennedy2023semiparametric}. The Gâteaux derivative is the pathwise derivative
\begin{align*}
    \frac{\partial}{\partial \varepsilon} \Big\vert_{\varepsilon=0} \Psi(P_\varepsilon) ,
\end{align*}
where $P_\varepsilon =  (1-\varepsilon) P + \varepsilon \delta_{o'}$ and $\delta_{o}$ is the Dirac measure at $O=o$.

For simplicity, we consider $T=1$ and assume that $O=(W, A_0, M_0, A_1, M_1, Y)$ are discrete. Then the target parameter can be written
\begin{multline*}
    \Psi^{\bar{a}_T}(P)=\sum_{w, \bar{m}_1} p(m_1 \mid w, \bar{a}_1, m_0)p(m_0 \mid w, a_0) \sum_{\bar{a}_1'} \sum_y y  p\left(y \mid w, \bar{a}_1', \bar{m}_1\right) \\  \times p(a_1' \mid w, a_0', m_0)p(a_0'\mid w) p(w).
\end{multline*}

In the discrete model $\delta_{o} = \mathbbm{1}(O=o)$ and the Gâteaux derivative can be computed as follows
\small{
\begin{align*}
   &\frac{\partial}{\partial \varepsilon} \Big\vert_{\varepsilon=0} \Psi^{\bar{a}_T}(P_\varepsilon) \\
   =&\frac{\partial}{\partial \varepsilon} \Big\vert_{\varepsilon=0} \Bigg\{\sum_{w, \bar{m}_1} p_{\varepsilon}(m_1 \mid w, \bar{a}_1, m_0)p_{\varepsilon}(m_0 \mid w, a_0) \\
   & \qquad \qquad \quad \times\sum_{\bar{a}_1'} \sum_y y p_{\varepsilon}\left(y \mid w, \bar{a}_1', \bar{m}_1\right) p_{\varepsilon}(a_1' \mid w, a_0', m_0)p_{\varepsilon}(a_0'\mid w) p_{\varepsilon}(w) \Bigg\}\\
   =&\sum_{w, \bar{m}_1} \frac{\mathbbm{1}(W=w, \bar{A}_1=\bar{a}_1, M_0=m_0)}{p(a_1 \mid w, m_0, a_0)p(a_0\mid w)} \left\{\mathbbm{1}(M_1=m_1)-p(m_1 \mid w, \bar{a}_1, m_0)\right\} \\
   &\quad \times \sum_{\bar{a}_1'} Q_Y\left(w, \bar{a}_1', \bar{m}_1\right) p(a_1' \mid w, a_0', m_0)p(a_0' \mid w)\\
   &+\sum_{w, \bar{m}_1}  p(m_1 \mid w, \bar{a}_1, m_0) \frac{\mathbbm{1}(W=w, A_0=a_0)}{p(a_0\mid w)} \left\{\mathbbm{1}(M_0=m_0)-p(m_0 \mid w, a_0)\right\} \\
   &\qquad \times \sum_{\bar{a}_1'} Q_Y\left(w, \bar{a}_1', \bar{m}_1\right) p(a_1' \mid w, a_0', m_0)p(a_0'\mid w)  \\
   &+ \sum_{w, \bar{m}_1} p(m_1 \mid w, \bar{a}_1, m_0)p(m_0 \mid w, a_0) \sum_{\bar{a}_1'} \frac{\mathbbm{1}(W=w, \bar{A}_1=\bar{a}_1', \bar{M}_1=\bar{m}_1)}{p(m_1 \mid w, \bar{a}_1', m_0)p(m_0 \mid w, a_0')} \left\{Y-Q_Y\left(w, \bar{a}_1', \bar{m}_1\right) \right\} \\
   &+ \sum_{w, \bar{m}_1} p(m_1 \mid w, \bar{a}_1, m_0)p(m_0 \mid w, a_0) \sum_{\bar{a}_1'} Q_Y\left(w, \bar{a}_1', \bar{m}_1\right) \frac{\mathbbm{1}(W=w,A_0=a_0', M_0=m_0)}{p(m_0 \mid w, a_0')} \\
   & \qquad \times \big\{\mathbbm{1}(A_1=a_1')-p(a_1' \mid w,a_0', m_0) \big\}  \\
    &+ \sum_{w, \bar{m}_1} p(m_1 \mid w, \bar{a}_1, m_0)p(m_0 \mid w,a_0) \sum_{\bar{a}_1'} Q_Y\left(w,\bar{a}_1', \bar{m}_1\right)p(a_1' \mid w, a_0', m_0) \mathbbm{1}(W=w) \\
    &\qquad \times \left\{\mathbbm{1}(A_0=a_0')-p(a_0'\mid w) \right\} \\
    &+\sum_{w, \bar{m}_1} p(m_1 \mid w, \bar{a}_1, m_0)p(m_0 \mid w, a_0) \sum_{\bar{a}_1'} Q_Y\left(w, \bar{a}_1', \bar{m}_1\right) p(a_1' \mid w, a_0', m_0)p(a_0'\mid w) \left\{\mathbbm{1}(W=w)-p(w) \right\}\\
    =&\frac{\mathbbm{1}(\bar{A}_1=\bar{a}_1)}{p(a_1 \mid W, M_0, a_0)p(a_0\mid W)} \left\{ Q^{\bar{a}}_{M_2}(P)(W, \bar{M}_1) - Q^{\bar{a}}_{M_1}(P)(W, M_0) \right\} \\
    &+\frac{\mathbbm{1}(A_0=a_0)}{p(a_0\mid W)} \left\{Q^{\bar{a}}_{M_1}(P)(W,M_0) - Q^{\bar{a}}_{M_0}(P)(W)\right\}  \\
    & +\frac{p(M_1 \mid W, \bar{a}_1, M_0)p(M_0 \mid W, a_0)}{p(M_1 \mid W,\bar{A}_1, M_0)p(M_0 \mid W, A_0)} \left\{ Y - Q_Y(W, \bar{A}_1, \bar{M}_1)\right\} \\
    &+\frac{p(M_0 \mid W, a_0)}{p(M_0 \mid W, A_0)} \left\{R_{M_1}^{\bar{a}}(P)(W, \bar{A}_1,M_0) -R_{A_1}^{\bar{a}}(P)(W, \bar{A}_0,M_0) \right\} \\
    &+ Q_{M_0}^{\bar{a}}(P)(W, A_0)  - \Psi^{\bar{a}_T}(P),
\end{align*}
}
where the first equality follows by definition, the second equality from the chain rule, and the third equality from evaluating sums and rearranging. 
The generalization to arbitrary $T$ follows immediately.

\subsection{Large sample properties} \label{app:asymptotic}
The analysis of the large sample properties of the proposed estimators will be based on the following expansion:
\begin{align*}
    \Psi^{\bar{a}_T}(\hat{P}_n)- \Psi^{\bar{a}_T}(P) 
    =& \mathbb{P}_n D^*(O)(P) \\ 
    &-\mathbb{P}_n D^*(O)(\hat{P}_n) \\
    &+ (\mathbb{P}_n-P) \left\{D^*(O)(\hat{P}_n)- D^*(O)(P) \right\} \\
    &+R_2(\hat{P}_n, P),
\end{align*}
where $R_2(\hat{P}_n, P)= \mathbb{E}\left\{ D^*(\hat{P}_n)(O) \right\} +  \Psi^{\bar{a}_T}(P_n)- \Psi^{\bar{a}_T}(P)$ is known as the second-order remainder term. 

The term $\mathbb{P}_n D^*(O)(P)$ is asymptotically normally distributed by the Central Limit Theorem. The term $\mathbb{P}_n D^*(O)(\hat{P}_n)$ is known as the `plug-in bias' or first order bias term. The one-step estimator accounts for this bias by adding it on to the estimator $\Psi^{\bar{a}_T}(\hat{P}_n)$. By construction the TMLE solves the efficient influence curve equation, i.e., $\mathbb{P}_n D^*(O)(\hat{P}_n)=o_p(n^{-1/2})$. Assumption 4 (i) implies that $(\mathbb{P}_n-P) \left\{D^*(O)(\hat{P}_n)- D^*(O)(P) \right\}=o_p(n^{-1/2})$. In the following, we derive the remainder term and we show that Assumption 4 (ii) and (iii) implies $R_2(\hat{P}_n, P)=o_p(n^{-1/2})$. 

\subsubsection{Second-order remainder term}

To derive the second-order remainder term, we need the following helper lemmas. 

\begin{lemma} \label{lemma:remain:D_Y}
\small{
\begin{align*}
     &\mathbb{E}\left\{D^*_Y(\hat{P}_n)(O) \right\}\\
     =&\Psi(P) - \mathbb{E}\left\{ H^{\bar{a}}_T(W, \bar{A}_T, \bar{M}_T)\hat{Q}_{n, Y}(W, \bar{A}_T, \bar{M}_T)\right\}\\
    &+\mathbb{E}\left[\left\{\hat{H}^{\bar{a}}_{n, T}(W, \bar{A}_T, \bar{M}_T) - H^{\bar{a}}_T(W, \bar{A}_T, \bar{M}_T) \right\}\left\{Q_Y(W, \bar{A}_T, \bar{M}_T)-\hat{Q}_{n, Y}(W, \bar{A}_T, \bar{M}_T)\right\}\right].
\end{align*}
   }
\end{lemma}

\begin{proof}
This follows by a straightforward expansion:
\begin{align*}
    &\mathbb{E}\left\{D^*_Y(\hat{P}_n)(O) \right\} \\
     =& \mathbb{E}\left[\hat{H}^{\bar{a}}_{n,T}(W, \bar{A}_T, \bar{M}_T)\left\{Q_Y(W, \bar{A}_T, \bar{M}_T)-\hat{Q}_{n, Y}(W, \bar{A}_T, \bar{M}_T)\right\}\right] \\
     =& \mathbb{E}\left[ H^{\bar{a}}_T(W, \bar{A}_T, \bar{M}_T)\left\{Q_Y(W, \bar{A}_T, \bar{M}_T)-\hat{Q}_{n, Y}(W, \bar{A}_T, \bar{M}_T)\right\}\right]\\
    &+\mathbb{E}\left[\left\{\hat{H}^{\bar{a}}_{n, T}(W, \bar{A}_T, \bar{M}_T) - H^{\bar{a}}_T(W, \bar{A}_T, \bar{M}_T) \right\}\left\{Q_Y(W, \bar{A}_T, \bar{M}_T)-\hat{Q}_{n, Y}(W, \bar{A}_T, \bar{M}_T)\right\}\right] \\
    =& \Psi(P) - \mathbb{E}\left\{ H^{\bar{a}}_T(W, \bar{A}_T, \bar{M}_T)\hat{Q}_{n, Y}(W, \bar{A}_T, \bar{M}_T)\right\}\\
    &+\mathbb{E}\left[\left\{\hat{H}^{\bar{a}}_{n, T}(W, \bar{A}_T, \bar{M}_T) - H^{\bar{a}}_T(W, \bar{A}_T, \bar{M}_T) \right\}\left\{Q_Y(W, \bar{A}_T, \bar{M}_T)-\hat{Q}_{n, Y}(W, \bar{A}_T, \bar{M}_T)\right\}\right].
\end{align*}

\end{proof}

\begin{lemma} \label{lemma:remain:D_M}

       \begin{align*}
        &\mathbb{E}\left\{\sum_{t=0}^T D^*_{M_t}(\hat{P}_n)(O) \right\} \\
        =&\mathbb{E}\left[\sum_{t=0}^T \left\{ \hat{W}_{n,t}^{\bar{a}}(W, \bar{A}_t, \bar{M}_{t-1})-W_t^{\bar{a}}(W, \bar{A}_t, \bar{M}_{t-1}) \right\} \left\{\tilde{Q}^{\bar{a}}_{M_t}(W, \bar{M}_{t-1})-\hat{Q}^{\bar{a}}_{n,M_t}(W, \bar{M}_{t-1})\right\}\right]\\
        &+\mathbb{E}\left\{W_T^{\bar{a}}(W, \bar{A}_T, \bar{M}_{T-1}) \hat{Q}_{n, M_{T+1}}^{\bar{a}}(W, \bar{M}_T) \right\} - \mathbb{E}\left\{\hat{Q}_{n,M_0}^{\bar{a}}(W) \right\}.
    \end{align*}
 
\end{lemma}

\begin{proof}

We have
       \begin{align*}
        &\mathbb{E}\left\{\sum_{t=0}^T D^*_{M_t}(\hat{P}_n)(O) \right\} \\
         =&\mathbb{E}\left[\sum_{t=0}^T  \hat{W}_{n,t}^{\bar{a}}(W, \bar{A}_t, \bar{M}_{t-1}) \left\{\tilde{Q}^{\bar{a}}_{M_t}(W, \bar{M}_{t-1})-\hat{Q}^{\bar{a}}_{n,M_t}(W, \bar{M}_{t-1})\right\}\right]\\
        =&\mathbb{E}\left[\sum_{t=0}^T \left\{ \hat{W}_{n,t}^{\bar{a}}(W, \bar{A}_t, \bar{M}_{t-1})-W_t^{\bar{a}}(W, \bar{A}_t, \bar{M}_{t-1}) \right\} \left\{\tilde{Q}^{\bar{a}}_{M_t}(W, \bar{M}_{t-1})-\hat{Q}^{\bar{a}}_{n,M_t}(W, \bar{M}_{t-1})\right\}\right]\\
        &+\mathbb{E}\left[\sum_{t=0}^T W_t^{\bar{a}}(W, \bar{A}_t, \bar{M}_{t-1}) \left\{\tilde{Q}^{\bar{a}}_{M_t}(W, \bar{M}_{t-1})-\hat{Q}^{\bar{a}}_{n,M_t}(W, \bar{M}_{t-1})\right\}\right],
    \end{align*}
where the first equality follows by iterated expectations and the second by a simple expansion. The last term in the display above is a telescoping sum and may be written as follows.
\begin{align*}
    &\mathbb{E}\left[\sum_{t=0}^T W_t^{\bar{a}}(W, \bar{A}_t, \bar{M}_{t-1}) \left\{\tilde{Q}^{\bar{a}}_{M_t}(W, \bar{M}_{t-1})-\hat{Q}^{\bar{a}}_{n,M_t}(W, \bar{M}_{t-1})\right\}\right]\\
    =& \mathbb{E}\left\{W_T^{\bar{a}}(W, \bar{A}_T, \bar{M}_{T-1})\hat{Q}_{n, M_{T+1}}^{\bar{a}}(W, \bar{M}_T)\right\} - \mathbb{E}\left\{\hat{Q}_{n,M_0}^{\bar{a}}(W) \right\}.
\end{align*}

\end{proof}

\begin{lemma} \label{lemma:remain:D_A}

\small{
\begin{align*}
        &\mathbb{E}\left\{\sum_{t=0}^T D^*_{A_t}(\hat{P}_n)(O) \right\} \\
=&\mathbb{E}\Biggl[ \sum_{t=1}^T \left\{\hat{H}^{\bar{a}}_{n,t-1}(W, \bar{A}_{t-1}, \bar{M}_{t-1}) -H^{\bar{a}}_{t-1}(W, \bar{A}_{t-1}, \bar{M}_{t-1})\right\}\sum_{a'_t}  \hat{R}^{\bar{a}}_{n,M_t}(W, a'_t, \bar{A}_{t-1}, \bar{M}_{t-1})\\
    & \qquad \times \left\{ \pi_t(a'_t \mid W, \bar{M}_{t-1}, \bar{A}_{t-1} ) - \hat{\pi}_{n,t}(a'_t \mid W, \bar{M}_{t-1}, \bar{A}_{t-1} )\right\}\Biggr] \\
    &+\mathbb{E}\Bigg[ \sum_{t=0}^T  W_{t-1}^{\bar{a}}(W, \bar{A}_{t-1}, \bar{M}_{t-2}) \sum_{\bar{a}_t'}  \left\{\tilde{R}^{\bar{a}}_{M_t}(W, \bar{a}'_t,  \bar{M}_{t-1})-\hat{R}^{\bar{a}}_{n,M_t}(W, \bar{a}'_t, \bar{M}_{t-1}) \right\}  \\
     &\qquad \times \left\{\prod_{k=0}^t  \hat{\pi}_{n,k}(a'_k \mid W, \bar{M}_{k-1}, \bar{a}'_{k-1} )- \prod_{k=0}^t\pi_k(a'_k \mid W, \bar{M}_{k-1}, \bar{a}_{k-1}' ) \right\}\Bigg]  \\
    &+\mathbb{E}\left\{H^{\bar{a}}_T(W, \bar{A}_T, \bar{M}_T) \hat{Q}_{n,Y}(W, \bar{A}_T, \bar{M}_T) \right\} - \mathbb{E} \left\{W_T^{\bar{a}}(W, \bar{A}_T, \bar{M}_{t-1})\hat{Q}_{n, M_{T+1}}(W, \bar{A}_T, \bar{M}_T)\right\}.
\end{align*}
}
\end{lemma}
\begin{proof}
We have by straightforward expansion
\begin{align*}
        &\mathbb{E}\left\{\sum_{t=0}^T D^*_{A_t}(\hat{P}_n)(O) \right\} \\
        =& \mathbb{E}\left[\sum_{t=1}^T \left\{ \hat{H}^{\bar{a}}_{n,t-1}(W, \bar{A}_{t-1}, \bar{M}_{t-1}) -  H^{\bar{a}}_{t-1}(W, \bar{A}_{t-1}, \bar{M}_{t-1})\right\}\left\{ \hat{R}^{\bar{a}}_{n,M_t}(W, \bar{A}_t, \bar{M}_{t-1}) - \hat{R}^{\bar{a}}_{n,A_t}(W, \bar{A}_{t-1}, \bar{M}_{t-1})\right\} \right] \\
                &+ \mathbb{E}\left[\sum_{t=0}^T H^{\bar{a}}_{t-1}(W, \bar{A}_{t-1}, \bar{M}_{t-1}) \left\{ \hat{R}^{\bar{a}}_{n,M_t}(W, \bar{A}_t, \bar{M}_{t-1}) - \hat{R}^{\bar{a}}_{n,A_t}(W, \bar{A}_{t-1}, \bar{M}_{t-1})\right\} \right]. 
\end{align*}
The first term on the right hand side of the display above may be written as
\begin{align*}
    &\mathbb{E}\left[\sum_{t=1}^T \left\{ \hat{H}^{\bar{a}}_{n,t-1}(W, \bar{A}_{t-1}, \bar{M}_{t-1}) -  H^{\bar{a}}_{t-1}(W, \bar{A}_{t-1}, \bar{M}_{t-1})\right\}\left\{ \hat{R}^{\bar{a}}_{n,M_t}(W, \bar{A}_t, \bar{M}_{t-1}) - \hat{R}^{\bar{a}}_{n,A_t}(W, \bar{A}_{t-1}, \bar{M}_{t-1})\right\} \right] \\
    =&\mathbb{E}\Biggl[ \sum_{t=1}^T \left\{\hat{H}^{\bar{a}}_{n,t-1}(W, \bar{A}_{t-1}, \bar{M}_{t-1}) -H^{\bar{a}}_{t-1}(W, \bar{A}_{t-1}, \bar{M}_{t-1})\right\}\sum_{a'_t}  \hat{R}^{\bar{a}}_{n,M_t}(W, a'_t, \bar{A}_{t-1}, \bar{M}_{t-1})\\
    & \qquad \times \left\{ \pi_t(a'_t \mid W, \bar{M}_{t-1}, \bar{A}_{t-1} ) - \hat{\pi}_{n,t}(a'_t \mid W, \bar{M}_{t-1}, \bar{A}_{t-1} )\right\}\Biggr],
\end{align*}
and the second term as
\begin{align*}
& \mathbb{E}\left[\sum_{t=0}^T H^{\bar{a}}_{t-1}(W, \bar{A}_{t-1}, \bar{M}_{t-1}) \left\{ \hat{R}^{\bar{a}}_{n,M_t}(W, \bar{A}_t, \bar{M}_{t-1}) - \hat{R}^{\bar{a}}_{n,A_t}(W, \bar{A}_{t-1}, \bar{M}_{t-1})\right\} \right] \\
        =&\mathbb{E}\left\{H^{\bar{a}}_{T-1}(W, \bar{A}_{T-1}, \bar{M}_{T-1}) \hat{R}^{\bar{a}}_{n,M_T}(W, \bar{A}_T, \bar{M}_{T-1})  \right\} \\
        &+\mathbb{E}\left[\sum_{t=0}^{T-1} H^{\bar{a}}_{t-1}(W, \bar{A}_{t-1}, \bar{M}_{t-1}) \left\{ \hat{R}^{\bar{a}}_{n,M_t}(W, \bar{A}_t, \bar{M}_{t-1}) - \frac{g_t(M_t \mid W, \bar{a}_t, \bar{M}_{t-1})}{g_t(M_t \mid W, \bar{A}_t, \bar{M}_{t-1})}\hat{R}^{\bar{a}}_{n,A_{t+1}}(W, \bar{A}_t, \bar{M}_t)\right\} \right] \\
        &-\mathbb{E}\left\{\hat{R}_{n, A_0}(W) \right\} \\
=&\mathbb{E}\left[ \sum_{t=0}^T H^{\bar{a}}_{t-1}(W, \bar{A}_{t-1}, \bar{M}_{t-1})  \left\{\hat{R}^{\bar{a}}_{n,M_t}(W, \bar{A}_t, \bar{M}_{t-1})-\tilde{R}^{\bar{a}}_{M_t}(W, \bar{A}_t, \bar{M}_{t-1})\right\} \right]  \\
&+\mathbb{E}\left\{H^{\bar{a}}_T(W, \bar{A}_T, \bar{M}_T) \hat{Q}_{n,Y}(W, \bar{A}_T, \bar{M}_T) \right\} - \mathbb{E}\left\{\hat{R}_{n, A_0}(W) \right\}.
\end{align*}

We can rewrite the first term on the right hand side of the display above as
\begin{align*}
&\mathbb{E}\left[ \sum_{t=0}^T H^{\bar{a}}_{t-1}(W, \bar{A}_{t-1}, \bar{M}_{t-1})  \left\{\hat{R}^{\bar{a}}_{n,M_t}(W, \bar{A}_t, \bar{M}_{t-1})-\tilde{R}^{\bar{a}}_{M_t}(W, \bar{A}_t, \bar{M}_{t-1})\right\} \right] \\
=&      \mathbb{E} \Bigg[\sum_{t=0}^T W_{t-1}^{\bar{a}}(W, \bar{A}_{t-1}, \bar{M}_{t-2}) \sum_{\bar{a}_t'} \left\{\hat{R}_{n, M_t}(W, \bar{a}_t', \bar{M}_{t-1})-\tilde{R}_{M_t}(W, \bar{a}_t', \bar{M}_{t-1}) \right\} \\
& \qquad \times \prod_{k=0}^t \pi_k(a_k' \mid W, \bar{M}_{k-1}, \bar{a}_{k-1}')\Bigg] \\
=&\mathbb{E}\Bigg[ \sum_{t=0}^T  W_{t-1}^{\bar{a}}(W, \bar{A}_{t-1}, \bar{M}_{t-2}) \sum_{\bar{a}_t'}  \left\{\tilde{R}^{\bar{a}}_{M_t}(W, \bar{a}'_t,  \bar{M}_{t-1})-\hat{R}^{\bar{a}}_{n,M_t}(W, \bar{a}'_t, \bar{M}_{t-1}) \right\}  \\
     &\qquad \times \left\{\prod_{k=0}^t  \hat{\pi}_{n,k}(a'_k \mid W, \bar{M}_{k-1}, \bar{a}'_{k-1} )- \prod_{k=0}^t\pi_k(a'_k \mid W, \bar{M}_{k-1}, \bar{a}_{k-1}' ) \right\}\Bigg]  \\
  &-   \mathbb{E} \Bigg[\sum_{t=0}^T W_{t-1}^{\bar{a}}(W, \bar{A}_{t-1}, \bar{M}_{t-2}) \sum_{\bar{a}_t'} \left\{\tilde{R}_{M_t}(W, \bar{a}_t', \bar{M}_{t-1}) -\hat{R}_{n, M_t}(W, \bar{a}_t', \bar{M}_{t-1})\right\} \\
  & \qquad \times \prod_{k=0}^t \hat{\pi}_{n,k}(a_k' \mid W, \bar{M}_{k-1}, \bar{a}_{k-1}')\Bigg].
\end{align*}
The last term in the display above is a telescoping sum and may be written as follows.
\begin{align*}
         &\mathbb{E} \Bigg[\sum_{t=0}^T W_{t-1}^{\bar{a}}(W, \bar{A}_{t-1}, \bar{M}_{t-2}) \sum_{\bar{a}_t'} \left\{\tilde{R}_{M_t}(W, \bar{a}_t', \bar{M}_{t-1}) -\hat{R}_{n, M_t}(W, \bar{a}_t', \bar{M}_{t-1})\right\} \\
     &\qquad \times\prod_{k=0}^t \hat{\pi}_{n,k}(a_k' \mid W, \bar{M}_{k-1}, \bar{a}_{k-1}')\Bigg] \\
     =& \mathbb{E} \left\{W_T^{\bar{a}}(W, \bar{A}_T, \bar{M}_{t-1})\hat{Q}_{n, M_{T+1}}(W, \bar{A}_T, \bar{M}_T)\right\} -\mathbb{E}\left\{\hat{R}_{n, A_0}(W) \right\} .
\end{align*}
Combining the terms above gives the desired expression. 
\end{proof}

\begin{theorem}[Remainder term]\label{theorem:remain}
    We have
    \small{
\begin{align*}
    &R_2(\hat{P}_n, P) \\
         =&\mathbb{E}\left[\left\{\hat{H}^{\bar{a}}_{n, T}(W, \bar{A}_T, \bar{M}_T) - H^{\bar{a}}_T(W, \bar{A}_T, \bar{M}_T) \right\}\left\{Q_Y(W, \bar{A}_T, \bar{M}_T)-\hat{Q}_{n, Y}(W, \bar{A}_T, \bar{M}_T)\right\}\right] \\
    &+   \mathbb{E}\left[ \sum_{t=0}^T \left\{\hat{W}^{\bar{a}}_{n,t}(W, \bar{A}_t, \bar{M}_{t-1})-W^{\bar{a}}_t(W, \bar{A}_t, \bar{M}_{t-1})\right\} \left\{\tilde{Q}_{M_t}(W, \bar{M}_{t-1})-\hat{Q}_{n, M_t}(W, \bar{M}_{t-1})\right\} \right] \\
     &+\mathbb{E}\Bigg[ \sum_{t=0}^T  W_{t-1}^{\bar{a}}(W, \bar{A}_{t-1}, \bar{M}_{t-2}) \sum_{\bar{a}_t'}  \left\{\tilde{R}^{\bar{a}}_{M_t}(W, \bar{a}'_t,  \bar{M}_{t-1})-\hat{R}^{\bar{a}}_{n,M_t}(W, \bar{a}'_t, \bar{M}_{t-1}) \right\}  \\
     &\qquad \times \left\{\prod_{k=0}^t \pi_k(a'_k \mid W, \bar{M}_{k-1}, \bar{a}_{k-1}' ) - \prod_{k=0}^t\hat{\pi}_{n,k}(a'_k \mid W, \bar{M}_{k-1}, \bar{a}'_{k-1} )\right\}\Bigg]  \\
    &+\mathbb{E}\Biggl[ \sum_{t=1}^T \left\{\hat{H}^{\bar{a}}_{n,t-1}(W, \bar{A}_{t-1}, \bar{M}_{t-1}) -H^{\bar{a}}_{t-1}(W, \bar{A}_{t-1}, \bar{M}_{t-1})\right\}\sum_{a'_t}  \hat{R}^{\bar{a}}_{n,M_t}(W, a'_t, \bar{A}_{t-1}, \bar{M}_{t-1})\\
    & \qquad \times \left\{ \pi_t(a'_t \mid W, \bar{M}_{t-1}, \bar{A}_{t-1} ) - \hat{\pi}_{n,t}(a'_t \mid W, \bar{M}_{t-1}, \bar{A}_{t-1} )\right\}\Biggr]. 
\end{align*}
}
\end{theorem}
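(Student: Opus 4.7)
The plan is essentially to combine the three helper lemmas (Lemma~\ref{lemma:remain:D_M}, Lemma~\ref{lemma:remain:D_Y}, and Lemma~\ref{lemma:remain:D_A}) together with the unnamed intermediate algebraic identity, and then track which terms cancel. Recall from the EIF decomposition that
\[
\mathbb{E}\{D^*(\hat{P}_n)(O)\} = \mathbb{E}\{D^*_Y(\hat{P}_n)\} + \sum_{t=0}^T \mathbb{E}\{D^*_{M_t}(\hat{P}_n)\} + \sum_{t=0}^T \mathbb{E}\{D^*_{A_t}(\hat{P}_n)\} + \mathbb{E}\{\hat{Q}^{\bar{a}}_{n,M_0}(L_0)\} - \Psi^{\bar{a}_T}(\hat{P}_n),
\]
so $R_2(\hat{P}_n,P)$ equals the sum of the three expectations $\mathbb{E}\{D_Y^*\},\mathbb{E}\{\sum D_{M_t}^*\},\mathbb{E}\{\sum D_{A_t}^*\}$ plus $\mathbb{E}\{\hat{Q}^{\bar{a}}_{n,M_0}(L_0)\} - \Psi^{\bar{a}_T}(P)$. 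I would substitute the right-hand sides of the three lemmas into this expression.

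Once this substitution is done, three pairs of terms cancel essentially for free. First, the $\Psi^{\bar{a}_T}(P) - \mathbb{E}\{\hat{Q}_{n,M_0}(L_0)\}$ leftover from Lemma~\ref{lemma:remain:D_M} cancels the $+\mathbb{E}\{\hat{Q}^{\bar{a}}_{n,M_0}(L_0)\} - \Psi^{\bar{a}_T}(P)$ hanging in $R_2$. Second, the $\hat{Q}_{n,Y}$-weighted term in Lemma~\ref{lemma:remain:D_Y} involving $\{\prod\hat{\pi}_t(a_t') - \prod\pi_t(a_t')\}$ is the exact negative of the analogous term in Lemma~\ref{lemma:remain:D_A}, so those cancel as well. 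Third, the weight-mismatched term in Lemma~\ref{lemma:remain:D_M}, which has $\mathbbm{1}(\bar{A}_T=\bar{a}_T)/\prod\hat{\pi}$ multiplied by $(\hat{Q}_{n,M_{T+1}} - Q_{M_{T+1}})$, pairs with the corresponding term in Lemma~\ref{lemma:remain:D_Y} having weight $1/\prod\pi$ to form precisely the left-hand side of the unnamed middle lemma.

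Next I would apply that middle lemma. It rewrites this combined weight-mismatched piece as (a) a clean cross product $\{\mathbbm{1}(\bar{A}_t=\bar{a}_t)/\prod\hat{\pi} - \mathbbm{1}(\bar{A}_t=\bar{a}_t)/\prod\pi\}\{\tilde{Q}_{M_t} - \hat{Q}_{n,M_t}\}$ summed over $t$, plus (b) a residual $\sum_t \mathbb{E}[\,\mathbbm{1}(\bar{A}_{t-1}=\bar{a}_{t-1})/\prod_{k<t}\hat{\pi}\,\{\mathbbm{1}(A_t=a_t)/\hat{\pi}_{n,t} - 1\}(\hat{Q}_{n,M_t} - Q_{M_t})\,]$. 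The crucial observation is that the residual (b) is the exact negative of the first term on the right-hand side of Lemma~\ref{lemma:remain:D_M} (the one involving $\{1 - \mathbbm{1}(A_t=a_t)/\hat{\pi}_{n,t}\}\{\hat{Q}^{\bar{a}}_{n,M_t} - Q^{\bar{a}}_{M_t}\}$); therefore these two also cancel.

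What survives is exactly the four cross-product terms in the claimed identity: the $H_T$--$Q_Y$ product from Lemma~\ref{lemma:remain:D_Y}, the $\pi$--$\tilde{Q}_{M_t}$ product produced by the middle lemma, the $\pi$--$\tilde{R}_{M_t}$ product from Lemma~\ref{lemma:remain:D_A}, and the $H_{t-1}$--$\pi$ product also from Lemma~\ref{lemma:remain:D_A}. The main obstacle is bookkeeping rather than any deeper idea: one must carefully align the telescoping indexing in Lemma~\ref{lemma:remain:D_M} (which reindexes $t \mapsto t-1$ when shifting the product $\prod_{k=0}^t\hat{\pi}_{n,k}$) and in the middle lemma to see the sign-flip cancellation, and then verify that the $\hat{R}$-weighted terms from Lemma~\ref{lemma:remain:D_A} indeed match the claimed form under the convention $\tilde{R}^{\bar{a}}_{M_t} = \tilde{R}^{\bar{a}}_{M_t}(\hat{R}^{\bar{a}}_{n,A_{t+1}})$.
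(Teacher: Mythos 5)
Your proposal is correct and follows essentially the same route as the paper, whose proof is precisely "combine Lemmas \ref{lemma:remain:D_M}--\ref{lemma:remain:D_A}"; you have simply made the bookkeeping explicit, and the cancellations you identify (the $\Psi-\hat{Q}_{n,M_0}$ pair, the two $\hat{Q}_{n,Y}$-weighted terms, and the residual from the unnamed lemma against the first term of Lemma \ref{lemma:remain:D_M}) are exactly the ones needed, leaving the four cross-product terms of Theorem \ref{theorem:remain}.
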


\begin{proof}
   This follows by combining Lemmas \ref{lemma:remain:D_Y}-\ref{lemma:remain:D_A}.
\end{proof}

\begin{corollary}
If we estimate the mediator density directly, then 
    \begin{align*}
        &\hat{H}^{\bar{a}}_{n, T}(W, \bar{A}_T, \bar{M}_T) - H^{\bar{a}}_T(W, \bar{A}_T, \bar{M}_T) \\
        =&\frac{1}{\prod_{t=0}^T \hat{g}_{n,t}(M_t \mid W, \bar{A}_t, \bar{M}_{t-1})}\left\{ \prod_{t=0}^T \hat{g}_{n,t}(M_t \mid W, \bar{a}_t, \bar{M}_{t-1})-\prod_{t=0}^T g_t(M_t \mid W, \bar{a}_t, \bar{M}_{t-1})\right\} \\
        &+\prod_{t=0}^T\frac{g_t(M_t \mid W, \bar{a}_t, \bar{M}_{t-1})}{ \hat{g}_{n,t}(M_t \mid W, \bar{A}_t, \bar{M}_{t-1})g_t(M_t \mid W, \bar{A}_t, \bar{M}_{t-1})}\\
        &\qquad \times \left\{ \prod_{t=0}^T g_t(M_t \mid W, \bar{A}_t, \bar{M}_{t-1})-\prod_{t=0}^T \hat{g}_{n,t}(M_t \mid W, \bar{A}_t, \bar{M}_{t-1})\right\}.
    \end{align*}
    \end{corollary}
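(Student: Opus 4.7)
The plan is to treat this as a pure algebraic identity, since both $\hat{H}^{\bar{a}}_{n,T}$ and $H^{\bar{a}}_T$ are, by the definition given just before Theorem \ref{theorem:eif}, ratios of products of conditional mediator densities evaluated either at the intervention $\bar{a}_t$ (numerator) or at the observed $\bar{A}_t$ (denominator). No expectations, no regularity conditions, and no identification assumptions are required beyond the positivity that keeps all denominators bounded away from zero.

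First I would introduce the shorthand
\begin{align*}
A_n &= \prod_{t=0}^T \hat{g}_{n,t}(M_t \mid L_0, \bar{a}_t, \bar{M}_{t-1}),
& B_n &= \prod_{t=0}^T \hat{g}_{n,t}(M_t \mid L_0, \bar{A}_t, \bar{M}_{t-1}), \\
C &= \prod_{t=0}^T g_t(M_t \mid L_0, \bar{a}_t, \bar{M}_{t-1}),
& D &= \prod_{t=0}^T g_t(M_t \mid L_0, \bar{A}_t, \bar{M}_{t-1}),
\end{align*}
so that $\hat{H}^{\bar{a}}_{n,T} = A_n/B_n$ and $H^{\bar{a}}_T = C/D$. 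The proof is then a one-line application of the identity
\begin{align*}
\frac{A_n}{B_n} - \frac{C}{D} \;=\; \frac{A_n - C}{B_n} \;+\; C\left( \frac{1}{B_n} - \frac{1}{D} \right) \;=\; \frac{A_n-C}{B_n} \;+\; \frac{C\,(D - B_n)}{B_n\, D}.
\end{align*}

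Next I would read off the two summands and match them against the claimed expression. The first summand $(A_n - C)/B_n$ reproduces the first displayed term verbatim, with $B_n = \prod_t \hat{g}_{n,t}(M_t\mid L_0, \bar{A}_t, \bar{M}_{t-1})$ in the denominator and $A_n - C$ in the braces. For the second summand, I would combine the two products of $T+1$ factors in the denominator into a single product indexed by $t$, so that
\begin{align*}
\frac{C}{B_n\, D} \;=\; \prod_{t=0}^T \frac{g_t(M_t \mid L_0, \bar{a}_t, \bar{M}_{t-1})}{\hat{g}_{n,t}(M_t \mid L_0, \bar{A}_t, \bar{M}_{t-1})\, g_t(M_t \mid L_0, \bar{A}_t, \bar{M}_{t-1})},
\end{align*}
which, multiplied by the difference $D - B_n$, gives exactly the second displayed term.

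There is essentially no obstacle: the only thing to watch is the bookkeeping between ``products of ratios'' (the form in which $H^{\bar{a}}_T$ is defined in the paper) and ``ratios of products'' (the form needed to apply the identity above), which are the same object since each factor depends only on its own index $t$. The positivity assumption in Assumption \ref{ass:positivity}(ii), together with analogous positivity for $\hat{g}_{n,t}$, ensures that $B_n$ and $D$ are nonzero so that the identity is well-defined almost surely.
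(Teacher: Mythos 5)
Your proof is correct: the decomposition $\frac{A_n}{B_n}-\frac{C}{D}=\frac{A_n-C}{B_n}+\frac{C(D-B_n)}{B_nD}$ reproduces both displayed terms exactly, and this is precisely the routine algebraic identity the paper relies on (the corollary is stated there without proof, as an immediate consequence of the definition of $H_T^{\bar{a}}$ as a ratio of products of conditional mediator densities). Your remark that positivity keeps the denominators nonzero is the only side condition needed, so nothing is missing.
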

\begin{corollary}
    If we use the reparametrization of the mediator density ratio, then
    \small{
    \begin{align*}
        &\hat{H}^{\bar{a}}_{n, T}(W, \bar{A}_T, \bar{M}_T) - H^{\bar{a}}_T(W, \bar{A}_T, \bar{M}_T) \\
        =& \left\{ \prod_{t=0}^T  \frac{\hat{\gamma}_{n,t, T}(a_t \mid W, \bar{a}_{t-1}, \bar{M}_T)}{\hat{\gamma}_{n,t, T}(A_t \mid W, \bar{A}_{t-1}, \bar{M}_T)} - \prod_{t=0}^T  \frac{\gamma_{t, T}(a_t \mid W, \bar{a}_{t-1}, \bar{M}_T)}{\gamma_{t, T}(A_t \mid W, \bar{A}_{t-1}, \bar{M}_T)} \right\} \prod_{t=0}^T \frac{\pi_t(A_t \mid W, \bar{A}_{t-1}, \bar{M}_{t-1})}{\pi_t(a_t \mid W, \bar{a}_{t-1}, \bar{M}_{t-1})}\\
        &+ \prod_{t=0}^T \frac{\hat{\gamma}_{n, t, T}(a_t \mid W, \bar{a}_{t-1}, \bar{M}_T)}{\hat{\gamma}_{n,t, T}(A_t \mid W, \bar{A}_{t-1}, \bar{M}_T)} \left\{ \prod_{t=0}^T  \frac{\hat{\pi}_{n, t}(A_t \mid W, \bar{A}_{t-1}, \bar{M}_{t-1})}{\hat{\pi}_{n,t}(a_t \mid W, \bar{a}_{t-1}, \bar{M}_{t-1})} - \prod_{t=0}^T  \frac{\pi_t(A_t \mid W, \bar{A}_{t-1}, \bar{M}_{t-1})}{\pi_t(a_t \mid W, \bar{a}_{t-1}, \bar{M}_{t-1})}\right\} \\
        =&\prod_{t=0}^T \frac{\pi_t(A_t \mid W, \bar{A}_{t-1}, \bar{M}_{t-1})}{\pi_t(a_t \mid W, \bar{a}_{t-1}, \bar{M}_{t-1})} \frac{1}{\hat{\gamma}_{n,t, T}(A_t \mid W, \bar{A}_{t-1}, \bar{M}_T)}\\
        &\qquad \times\left\{ \prod_{t=0}^T  \hat{\gamma}_{n, t, T}(a_t \mid W, \bar{a}_{t-1}, \bar{M}_T) - \prod_{t=0}^T \gamma_{t,T}(a_t \mid W, \bar{a}_{t-1}, \bar{M}_T) \right\} \\
        &+\prod_{t=0}^T  \frac{\pi_t(A_t \mid W, \bar{A}_{t-1}, \bar{M}_{t-1})}{\pi_t(a_t \mid W, \bar{a}_{t-1}, \bar{M}_{t-1})} \frac{\gamma_{t, T}(a_t \mid W, \bar{a}_{t-1}, \bar{M}_T)}{\hat{\gamma}_{n, t, T}(A_t \mid W, \bar{A}_{t-1}, \bar{M}_T)\gamma_{t, T}(A_t \mid W, \bar{A}_{t-1}, \bar{M}_T)} \\
        &\qquad \times\left\{  \prod_{t=0}^T \gamma_{t, T}(A_t \mid W, \bar{A}_{t-1}, \bar{M}_T) -\prod_{t=0}^T \hat{\gamma}_{n,t, T}(A_t \mid W, \bar{A}_{t-1}, \bar{M}_T) \right\} \\
        &+\prod_{t=0}^T \frac{\hat{\gamma}_{n,t, T}(a_t \mid W, \bar{a}_{t-1}, \bar{M}_T)}{\hat{\gamma}_{n, t, T}(A_t \mid W, \bar{A}_{t-1}, \bar{M}_T)} \frac{1}{\hat{\pi}_{n,t}(a_t \mid W, \bar{a}_{t-1}, \bar{M}_{t-1})}\\
        &\qquad \times\left\{ \prod_{t=0}^T  \hat{\pi}_{n,t}(A_t \mid W, \bar{A}_{t-1}, \bar{M}_{t-1}) - \prod_{t=0}^T  \pi_t(A_t \mid W, \bar{A}_{t-1}, \bar{M}_{t-1}) \right\} \\
        &+\prod_{t=0}^T \frac{\hat{\gamma}_{n,t, T}(a_t \mid W, \bar{a}_{t-1}, \bar{M}_T)}{\hat{\gamma}_{n,t, T}(A_t \mid W, \bar{A}_{t-1}, \bar{M}_T)} \frac{\pi_t(A_t \mid W, \bar{A}_{t-1}, \bar{M}_{t-1})}{\hat{\pi}_{n,t}(a_t\mid W, \bar{a}_{t-1}, \bar{M}_{t-1})\pi_t(a_t \mid W, \bar{a}_{t-1}, \bar{M}_{t-1})} \\
        &\qquad \times\left\{  \prod_{t=0}^T \pi_t(a_t \mid W, \bar{a}_{t-1}, \bar{M}_{t-1}) -\prod_{t=0}^T  \hat{\pi}_{n,t}(a_t \mid W, \bar{a}_{t-1}, \bar{M}_{t-1}) \right\} .
    \end{align*}}
\end{corollary}

\section{Estimators}

\subsection{IPW estimators}
In the following, we show how the IPW representations of the F-functional can be used to construct simple reweighted estimators.

Based on the representation $\Psi^{\bar{a}_T}(P)=\mathbb{E} \left\{W_T^{\bar{a}}(W, \bar{A}_T, \bar{M}_{T-1})  \sum_{\bar{a}_T'} Q_{M_{T+1}}^{\bar{a}}(W, \bar{M}_T)\right\}$ , the following estimator can be constructed:
\begin{breakablealgorithm}
\caption{IPW1}\label{alg:IPW1}
\begin{itemize}
    \item[Step 1.] Construct estimators  $\hat{Q}_{n,Y}$ of $Q_Y$ and $\hat{\pi}_{n,t}$ of $\pi_t$ for $t=0,...,T$.
    \item[Step 2.] Compute
    \begin{align*}
    \hat{\psi}_n^{IPW_1} \equiv \mathbb{P}_n \left\{ \hat{W}_{n,T}(W, \bar{A}_T, \bar{M}_{T-1})  \sum_{\bar{a}'_T \in \{0,1\}^T} \hat{Q}_{n,Y}(W, \bar{a}'_T, \bar{M}_T) \prod_{t=0}^T \hat{\pi}_{n,t}(a'_t \mid W, \bar{M}_{t-1}, \bar{a}'_{t-1}) \right\}.
\end{align*}
\end{itemize}
\end{breakablealgorithm}

Based on the representation of the F-functional $\Psi^{\bar{a}_T}(P)= \mathbb{E} \left\{ H_T^{\bar{a}}(W, \bar{A}_T, \bar{M}_T) Q_Y\left(W, \bar{A}_T, \bar{M}_T\right)\right\}$, the following estimator can be constructed:
\begin{breakablealgorithm}
\caption{IPW2}\label{alg:IPW2}
\begin{itemize}
    \item[Step 1.] Construct estimator $\hat{Q}_{n,Y}$ of $Q_Y$.
    \item[\textit{Either}]
    \item[Step 2.] Construct estimators $\hat{g}_{n,t}$ of $g_t$ for $t=0,...,T$.
    \item[Step 3.] Compute
    \begin{align*}
    \hat{\psi}_n^{IPW_{2a}} \equiv \mathbb{P}_n \left\{ \prod_{t=0}^T \frac{\hat{g}_{n,t}(M_t \mid W, \bar{a}_t, \bar{M}_{t-1})}{\hat{g}_{n,t}(M_t \mid W, \bar{A}_t, \bar{M}_{t-1})} \hat{Q}_{n,Y}(W, \bar{A}_T, \bar{M}_T) \right\}.
    \end{align*}
    \item[\textit{Or}]
    \item[Step 2.] Construct estimators $\hat{\gamma}_{n, t,T}$ and $\hat{\pi}_{n,t}$ for $t=0,...,T$.
        \item[Step 3.] Compute
    \begin{align*}
    \hat{\psi}_n^{IPW_{2b}} \equiv \mathbb{P}_n \left\{ \prod_{t=0}^T\frac{\hat{\gamma}_{n, t, T}(a_t \mid W, \bar{a}_{t-1}, \bar{M}_T)}{\hat{\pi}_{n,t}(a_t \mid W, \bar{a}_{t-1}, \bar{M}_{t-1})}\frac{\hat{\pi}_{n,t}(A_t \mid W, \bar{A}_{t-1}, \bar{M}_{t-1})}{\hat{\gamma}_{n,t,T}(A_t \mid W, \bar{A}_{t-1}, \bar{M}_T)}\hat{Q}_{n,Y}(W, \bar{A}_T, \bar{M}_T) \right\}.
    \end{align*}
\end{itemize}
\end{breakablealgorithm}

\subsection{Sequential regression estimators}
We can construct the following sequential regression estimator based on the first nested expectation representation of the F-functional.
  \begin{breakablealgorithm}
\caption{SR1}\label{alg:SR1}
\begin{itemize}
    \item[Step 1.] Construct estimators $\hat{Q}_{n,Y}$ of $Q_Y$ and $\hat{\pi}_{n,t}$ of $\pi_t$ for $t=0,...,T$.
    \item[Step 2.] Compute
    \begin{align*}
        \hat{Q}^{\bar{a}}_{n, M_{t+1}}(W, \bar{M}_T)=\sum_{\bar{a}'_T \in \{0,1\}^T} \hat{Q}_{n,Y}(W, \bar{a}_T', \bar{M}_T)\prod_{t=0}^T \hat{\pi}_{n,t}(a_t' \mid W,  \bar{M}_{t-1},\bar{a}_{t-1}').
    \end{align*}
    \item[Step 3.] Recursively for $t=T,T-1,...,0$ regress $\hat{Q}^{\bar{a}}_{n, M_{t+1}}(W, \bar{M}_t)$ onto $(W, \bar{A}_t, \bar{M}_{t-1})$ and evaluate the fitted function at $\bar{A}_t=\bar{a}_t$ and the observed history $(W, \bar{M}_{t-1})$ to obtain estimates $\hat{Q}^{\bar{a}}_{n, M_t}(W, \bar{M}_{t-1})$ of $Q^{\bar{a}}_{M_t}(W,\bar{M}_{t-1})$. 
    \item[Step 4.] Estimate the target parameter
    \begin{align*}
    \hat{\psi}_n^{SR_1} \equiv   \mathbb{P}_n \left\{\hat{Q}^{\bar{a}}_{n, M_0}(W) \right\} .
\end{align*}
\end{itemize}
\end{breakablealgorithm}
We can construct the following sequential regression estimator based on the second nested expectation representation of the F-functional.
\begin{breakablealgorithm}
\caption{SR2}\label{alg:SR2}
\begin{itemize}
    \item[Step 1.] Construct estimators $\hat{Q}_{n,Y}$ of $Q_Y$ and $\hat{\pi}_{n,t}$ of $\pi_t$ for $t=0,...,T$.
    \item[Step 2.] Recursively for $t=T,T-1,...,0$ 
    \begin{itemize}
        \item[(i)] For all $\bar{a}_t' \in \{0,1\}^t$ regress $R_{A_{t+1},n}^{\bar{a}}(W, \bar{a}_t', \bar{M}_t)$ on $(W, \bar{A}_t, \bar{M}_{t-1})$ and evaluate the fitted function at $\bar{A}_t= \bar{a}_t$ and the observed history $(W, \bar{M}_{t-1})$ to obtain an estimate $\hat{\kappa}_{n,t}^{\bar{a}}(W,\bar{a}'_t, M_{t-1})$ of $\mathbb{E} \left\{ R^{\bar{a}}_{A_{t+1}}(W, \bar{a}_t', \bar{M}_t)  \mid W, \bar{A}_t=\bar{a}_t, \bar{M}_{t-1} \right\}$. 
        Then compute $$\hat{R}_{n, M_t}^{\bar{a}}(W, \bar{A}_t, \bar{M}_{t-1})= \sum_{\bar{a}_t' \in \{0,1\}^t} \mathbbm{1}(\bar{A}_t=\bar{a}_t') \hat{\kappa}_{n,t}^{\bar{a}}(W,\bar{a}'_t, M_{t-1}).$$
        \item[(iii)] Compute
        \begin{align*}
            &\hat{R}_{n, A_t}^{\bar{a}}(W, \bar{A}_{t-1}, \bar{M}_{t-1}) \\=&  \sum_{\bar{a}_{t-1}' \in \{0,1\}^{t-1}} \mathbbm{1}(\bar{A}_{t-1}=\bar{a}_{t-1}') \Big\{\hat{\pi}_{n, t}(1 \mid W, \bar{a}_{t-1}', \bar{M}_{t-1}) \hat{\kappa}_{n,t}^{\bar{a}}(W, 1, \bar{a}'_{t-1}, M_{t-1}) \\
            &+ \hat{\pi}_{n, t}(0 \mid W, \bar{a}_{t-1}', \bar{M}_{t-1}) \hat{\kappa}_{n,t}^{\bar{a}}(W, 0, \bar{a}'_{t-1}, M_{t-1})\Big\}.
        \end{align*}
    \end{itemize}
    \item[Step 3.] Estimate the target parameter
    \begin{align*}
    \hat{\psi}_n^{SR_2} \equiv   \mathbb{P}_n \left\{\hat{R}^{\bar{a}}_{n, A_0}(W) \right\} .
\end{align*}
\end{itemize}
\end{breakablealgorithm}

\subsection{TMLE}
The TMLE algorithm can be described as follows. 
\begin{breakablealgorithm}
\caption{TMLE.}\label{alg:tmle1}
\begin{itemize}
    \item[Step 1.] Construct initial estimators $\hat{H}_{n, t}$ and $\hat{\pi}_{n, t}$ of the mediator density ratios $H_t$ and the propensity scores $\pi_t$ for $t=0,...,T$.  Construct initial estimator \(\hat{Q}_{n, Y} (W, \bar{a}_t', \bar{M}_T)\) for
  \(Q_Y (W, \bar{a}_t', \bar{M}_T)\). 
    \item[Step 2.] Update using weighted  intercept-only regression with outcome \(Y\),
  offset \(\mathrm{logit}(\hat{Q}_{n, Y} (W, \bar{A}_t, \bar{M}_T))\), and weights equal to
  \( \hat{H}_{n, T} (W, \bar{A}_T, \bar{M}_T) \). Let \(\hat{\eps}_Y\)  denote the estimated intercept, and compute for all $\bar{a}_t' \in \{0,1\}^t$ the updated estimator 
  \begin{align*}
  \hat{Q}^*_{n, Y} (W, \bar{a}'_t, \bar{M}_T) = \mathrm{expit}(\mathrm{logit}(\hat{Q}_{n, Y} (W, \bar{a}'_t, \bar{M}_T)) + \hat{\eps}_Y),    
  \end{align*}
  Define $\hat{R}_{n, A_{T+1}}^{\bar{a}, *}(W, \bar{a}_T', \bar{M}_T) = \hat{Q}^*_{n, Y} (W, \bar{a}_T', \bar{M}_T) $.
    \item[Step 3.] 
    For $t=T,...,0$ in decreasing order
    \begin{itemize}
    \item[(i)] For all $\bar{a}_t' \in \{0,1\}^t$ regress $R_{n,A_{t+1}}^{\bar{a},*}(W, \bar{a}_t', \bar{M}_t)$ on $(W, \bar{A}_t, \bar{M}_{t-1})$ and evaluate the fitted function at $\bar{A}_t= \bar{a}_t$ and the observed history $(W, \bar{M}_{t-1})$ to obtain an estimate $\hat{\kappa}_{n,t}^{\bar{a},*}(W,\bar{a}'_t, M_{t-1})$ of $\mathbb{E} \left\{ R^{\bar{a},*}_{A_{t+1}}(W, \bar{a}_t', \bar{M}_t)  \mid W, \bar{A}_t=\bar{a}_t, \bar{M}_{t-1} \right\}$. For \(a''=0,1\), then compute $$\hat{R}_{n, M_t}^{\bar{a},*}(W, a'', \bar{A}_{t-1}, \bar{M}_{t-1})= \sum_{\bar{a}_t' \in \{0,1\}^t} \mathbbm{1}(\bar{A}_{t-1}=\bar{a}_{t-1}', a''=a_{t}') \hat{\kappa}_{n,t}^{\bar{a}}(W,\bar{a}'_t, M_{t-1}).$$
    \item[(ii)] Use
  \( \hat{R}_{n, M_t}^{\bar{a},*}(P)(W, 1, \bar{A}_{t-1},\bar{M}_{t-1})-
  \hat{R}_{n, M_t}^{\bar{a},*}(P)(W, 0, \bar{A}_{t-1},\bar{M}_{t-1}) \) as a
  clever covariate in a weighted regression with outcome \(A_t\),
  offset \(\mathrm{logit}(\hat{\pi}_{n, t} ( 1 \mid W, \bar{M}_{t-1},\bar{A}_{t-1}))\), and weights equal to
  \(\hat{H}_{n, t-1} (W, \bar{A}_{t-1}, \bar{M}_{t-1}) \). Let \(\hat{\eps}_{\pi_t}\) denote he estimated coefficient, and compute for all $\bar{a}_t' \in \{0,1\}^t$ the updated estimator 
\begin{align*}
&\hat{\pi}^*_{n, t} (a_t' \mid W, \bar{M}_t , \bar{a}'_{t-1}) = \mathrm{expit}(\mathrm{logit}(\hat{\pi}_{n, t} (a_t' \mid W, \bar{M}_t, \bar{a}'_{t-1}))  \\
&\qquad\qquad + \, \hat{\eps}_{\pi_t} (
\hat{R}_{n, M_t}^{\bar{a},*}(P)(W, 1, \bar{a}'_{t-1},\bar{M}_{t-1}) -
  \hat{R}_{n, M_t}^{\bar{a},*}(P)(W, 0, \bar{a}'_{t-1},\bar{M}_{t-1})))
\end{align*}
  \item[(iii)] Compute for all $\bar{a}_t' \in \{0,1\}^t$ 
        \begin{align*}
            \hat{R}_{n, A_t}^{\bar{a},*}(W, \bar{a}_{t-1}', \bar{M}_{t-1}) 
            & =  \hat{\pi}^*_{n, t}(1 \mid W, \bar{a}_{t-1}', \bar{M}_{t-1}) \hat{\kappa}_{n,t}^{\bar{a},*}(W, 1, \bar{a}'_{t-1}, M_{t-1}) \\
            & \qquad + \, \hat{\pi}^*_{n, t}(0 \mid W, \bar{a}_{t-1}', \bar{M}_{t-1}) \hat{\kappa}_{n,t}^{\bar{a},*}(W, 0, \bar{a}'_{t-1}, M_{t-1}).
        \end{align*}
    \end{itemize}
    \item[Step 4. ] Compute
  \begin{align*}
    \hat{Q}^{\bar{a},*}_{n,M_{T+1}}(P)(W, \bar{M}_T)
    = \sum_{\bar{a}'_T \in \lbrace 0,1\rbrace^T} \hat{Q}^*_{n, Y} (W, \bar{a}_T', \bar{M}_T) \prod_{t=0}^T
    \hat{\pi}^*_{n, t} (a_t' \mid W, \bar{M}_{t-1}, \bar{a}'_{t-1}).
  \end{align*}
  \item[Step 5.] For $t=T,...,0$ in decreasing order
  \begin{itemize}
      \item[(i)] Regress $\hat{Q}^{\bar{a},*}_{n,M_{t+1}}(P)(W, \bar{M}_t)$ on $(W, \bar{A}_t, \bar{M}_{t-1})$ and evaluate the fitted function at $\bar{A}_t= \bar{a}_t$ and the observed history $(W, \bar{M}_{t-1})$ to obtain an estimate $\hat{Q}^{\bar{a}}_{n, M_t}(P)(W, \bar{M}_{t-1})$ of $Q^{\bar{a}}_{M_t}(P)(W, \bar{M}_{t-1})$. 
      \item[(ii)] Update \(\hat{Q}^{\bar{a}}_{M_t}(P)(W, \bar{M}_{t-1})\) in weighted intercept-only regression with outcome \(\hat{Q}^{\bar{a},*}_{n,M_{t+1}}(P)(W, \bar{M}_t)\), offset \(\mathrm{logit}(\hat{Q}^{\bar{a}}_{n, M_t}(P)(W, \bar{M}_{t-1}))\), and weights equal to \(\hat{W}_t^{\bar{a},*} (W,\bar{A}_{t},\bar{M}_{t-1}) = \mathbbm{1}(\bar{A}_t=\bar{a}_t) / (\prod_{l=0}^t\hat{\pi}^*_{n,l} (a_l \mid W, \bar{M}_{l-1}, \bar{a}_{k-1}))\). Let \(\hat{\eps}_{M_t}\)  denote the estimated intercept, and compute for all $\bar{a}_t' \in \{0,1\}^t$ the updated estimator 
  \begin{align*}
  \hat{Q}^{\bar{a},*}_{n, M_t} (W, \bar{a}'_t, \bar{M}_T) = \mathrm{expit}(\mathrm{logit}(\hat{Q}^{\bar{a},*}_{n, M_t} (W, \bar{a}'_t, \bar{M}_T)) + \hat{\eps}_{M_t}).    
  \end{align*}
  \end{itemize}
  \item[Step 6.] Estimate the target parameter 
  \begin{align*}
      \hat{\psi}_n^{TMLE} = \mathbb{P}_n\left\{Q^{\bar{a},*}_{n,M_0}(W)\right\}.
  \end{align*}
\end{itemize}
\end{breakablealgorithm}

\subsection{TMLE for binary $M_t$} \label{app:tmle}

When the mediator is binary, the efficient influence function can be rewritten as in Corollary \ref{cor:DM} below.

\begin{corollary} \label{cor:DM} When \(M_t\) is binary, we have
  \begin{multline*}
      D^*_{M_t}(P)(O)=W_t^{\bar{a}}(W, \bar{A}_t, \bar{M}_{t-1})\left\{Q^{\bar{a}}_{M_{t+1}}(P)(W, 1, \bar{M}_{t-1}) - Q^{\bar{a}}_{M_{t+1}}(P)(W, 0, \bar{M}_{t-1}) \right\} \\
      \times \ \left\{ M_{t} - g_t(1 \mid W, \bar{a}_t, \bar{M}_{t-1}) \right\} \ .
  \end{multline*}
\end{corollary}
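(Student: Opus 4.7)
The plan is a direct algebraic verification exploiting the binary structure of $M_t$. The key observation is that when $M_t \in \{0,1\}$, both $Q^{\bar{a}}_{M_{t+1}}(L_0, \bar{M}_t)$ and $Q^{\bar{a}}_{M_t}(L_0, \bar{M}_{t-1})$ can be written as linear combinations of the two values $Q^{\bar{a}}_{M_{t+1}}(L_0, \bar{M}_{t-1}, 1)$ and $Q^{\bar{a}}_{M_{t+1}}(L_0, \bar{M}_{t-1}, 0)$. The weights in the first case are the indicators $M_t$ and $1-M_t$, and the weights in the second case are $g_t(1 \mid L_0, \bar{a}_t, \bar{M}_{t-1})$ and $g_t(0 \mid L_0, \bar{a}_t, \bar{M}_{t-1})$, by the definition of $Q^{\bar{a}}_{M_t}$ as a conditional expectation (and using Assumption~\ref{ass:id}(i), which is already baked into the identification, so the relevant density is $g_t(\cdot \mid L_0, \bar{a}_t, \bar{M}_{t-1})$).

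First I would write out
\begin{align*}
Q^{\bar{a}}_{M_{t+1}}(L_0, \bar{M}_t) &= M_t\, Q^{\bar{a}}_{M_{t+1}}(L_0, 1, \bar{M}_{t-1}) + (1-M_t)\, Q^{\bar{a}}_{M_{t+1}}(L_0, 0, \bar{M}_{t-1}),\\
Q^{\bar{a}}_{M_t}(L_0, \bar{M}_{t-1}) &= g_t(1\mid L_0, \bar{a}_t, \bar{M}_{t-1})\, Q^{\bar{a}}_{M_{t+1}}(L_0, 1, \bar{M}_{t-1}) \\
&\quad + g_t(0\mid L_0, \bar{a}_t, \bar{M}_{t-1})\, Q^{\bar{a}}_{M_{t+1}}(L_0, 0, \bar{M}_{t-1}).
\end{align*}
Subtracting and collecting coefficients of $Q^{\bar{a}}_{M_{t+1}}(L_0, 1, \bar{M}_{t-1})$ and $Q^{\bar{a}}_{M_{t+1}}(L_0, 0, \bar{M}_{t-1})$ leaves $M_t - g_t(1\mid L_0,\bar{a}_t,\bar{M}_{t-1})$ and $(1-M_t) - g_t(0\mid L_0,\bar{a}_t,\bar{M}_{t-1})$ respectively. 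Using $g_t(0\mid \cdot) = 1 - g_t(1\mid\cdot)$, the second coefficient becomes $-(M_t - g_t(1\mid\cdot))$, so both terms share the factor $M_t - g_t(1 \mid L_0,\bar{a}_t,\bar{M}_{t-1})$, yielding
\begin{equation*}
Q^{\bar{a}}_{M_{t+1}}(L_0, \bar{M}_t) - Q^{\bar{a}}_{M_t}(L_0, \bar{M}_{t-1}) = \{Q^{\bar{a}}_{M_{t+1}}(L_0, 1, \bar{M}_{t-1}) - Q^{\bar{a}}_{M_{t+1}}(L_0, 0, \bar{M}_{t-1})\}\{M_t - g_t(1\mid L_0, \bar{a}_t, \bar{M}_{t-1})\}.
\end{equation*}
Multiplying through by $W_t^{\bar{a}}(L_0, \bar{A}_t, \bar{M}_{t-1})$ then gives exactly the claimed expression for $D^*_{M_t}(P)(O)$.

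There is no real obstacle here; the only thing to be careful about is that the conditioning density appearing in the definition of $Q^{\bar{a}}_{M_t}$ is $g_t(\cdot \mid L_0, \bar{a}_t, \bar{M}_{t-1})$, the intervened-treatment mediator density, rather than $g_t(\cdot \mid L_0, \bar{A}_t, \bar{M}_{t-1})$. This is consistent with the definition used in Section~\ref{subsec:SR}, and it matters for getting the correct centering $M_t - g_t(1 \mid L_0, \bar{a}_t, \bar{M}_{t-1})$ on the right-hand side. Beyond flagging this point, the derivation is a one-line identity.
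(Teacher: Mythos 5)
Your derivation is correct and is exactly the one-line computation the paper leaves implicit (Corollary \ref{cor:DM} is stated there without proof): expand $Q^{\bar a}_{M_{t+1}}(L_0,\bar M_t)$ in the indicators $M_t$, $1-M_t$, expand $Q^{\bar a}_{M_t}(L_0,\bar M_{t-1})$ in $g_t(\cdot\mid L_0,\bar a_t,\bar M_{t-1})$, subtract, and use $g_t(0\mid\cdot)=1-g_t(1\mid\cdot)$ before multiplying by $W_t^{\bar a}$. The only quibble is that your appeal to Assumption \ref{ass:id}(i) is unnecessary: the density $g_t(\cdot\mid L_0,\bar a_t,\bar M_{t-1})$ appears purely because $Q^{\bar a}_{M_t}$ is by definition the conditional expectation given $L_0,\ \bar A_t=\bar a_t,\ \bar M_{t-1}$, with no causal assumption involved.
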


We can then replace the loss function $\mathcal{L}_{M_t}(Q_{M_{t}})(O)$ with
\begin{align*}
    \mathcal{L}_{M_t}(g_t)(O) = - \left\{ M_t \log  g_t(M_t \mid W, \bar{a}_t', \bar{M}_{t-1}) + \left(1 - M_t \right)\log \left(1 - g_t(M_t \mid W, \bar{a}_t', \bar{M}_{t-1})   \right)\right\} ,
\end{align*}

Moreover, we can replace the least favorable submodel $Q_{M_t, \varepsilon}(W,\bar{M}_{t-1})$ with 
\begin{multline*}
    g_{t, \varepsilon}(M_t \mid W, \bar{a}_t', \bar{M}_{t-1})=\text{expit}\Big(\text{logit} \left(g_t(M_t \mid W, \bar{a}_t', \bar{M}_{t-1})\right) \\
    + \varepsilon  W_t^{\bar{a}}(W, \bar{A}_t, \bar{M}_{t-1})\left\{Q^{\bar{a}}_{M_{t+1}}(P)(W, 1, \bar{M}_{t-1}) - Q^{\bar{a}}_{M_{t+1}}(P)(W, 0, \bar{M}_{t-1}) \right\} \Big).
\end{multline*}

Together, these satisfy
\begin{multline*}
    \frac{d}{d\varepsilon}\Bigg\vert_{\varepsilon=0}  \mathcal{L}_{M_t}(g_{t, \varepsilon})(O) = W_t^{\bar{a}}(W, \bar{A}_t, \bar{M}_{t-1})\left\{Q^{\bar{a}}_{M_{t+1}}(P)(W, 1, \bar{M}_{t-1}) - Q^{\bar{a}}_{M_{t+1}}(P)(W, 0, \bar{M}_{t-1}) \right\} \\
      \times \ \left\{ M_{t} - g_t(1 \mid W, \bar{a}_t, \bar{M}_{t-1}) \right\} \ .
\end{multline*}

Then a targeted minimum loss estimator (TMLE) may be computed in the following steps.
\begin{breakablealgorithm}
\caption{TMLE updating the mediator density (binary \(M_t\))}\label{alg:tmle2}
\begin{itemize}
    \item[Step 1.] Construct initial estimators $\hat{H}^{0}_{n, t}$ and $\hat{\pi}^0_{n, t}$ of the mediator density ratios $H_t$ and the propensity scores $\pi_t$ for $t=0,...,T$. Construct initial estimator \(\hat{Q}^0_{n, Y} (W, \bar{a}_t', \bar{M}_T)\) for
  \(Q_Y (W, \bar{a}_t', \bar{M}_T)\). 
\end{itemize}
Recursively for \(k\ge 1\) until convergence: 
\begin{itemize}
    \item[Step 2.]  Update using weighted  intercept-only regression with outcome \(Y\),
  offset \(\mathrm{logit}(\hat{Q}^{k-1}_{n, Y} (W, \bar{A}_t, \bar{M}_T))\), and weights equal to
  \( \hat{H}^{k-1}_{n, T} (W, \bar{A}_T, \bar{M}_T) \). Let \(\hat{\eps}^k_Y\)  denote the estimated intercept, and compute for all $\bar{a}_t' \in \{0,1\}^t$ the updated estimator 
  \begin{align*}
  \hat{Q}^k_{n, Y} (W, \bar{a}'_t, \bar{M}_T) = \mathrm{expit}(\mathrm{logit}(\hat{Q}^{k-1}_{n, Y} (W, \bar{a}'_t, \bar{M}_T)) + \hat{\eps}^k_Y),    
  \end{align*}
  Define $\hat{R}_{n, A_{T+1}}^{\bar{a}, k}(W, \bar{a}_T', \bar{M}_T) = \hat{Q}^k_{n, Y} (W, \bar{a}_T', \bar{M}_T) $.
    \item[Step 3.] 
    For $t=T,...,0$ in decreasing order
    \begin{itemize}
    \item[(i)] For all $\bar{a}_t' \in \{0,1\}^t$ 
    compute 
    $$\hat{R}_{n, M_t}^{\bar{a},k}(W, \bar{a}_t', \bar{M}_{t-1})= \sum_{m_t=0,1}
    R_{n,A_{t+1}}^{\bar{a},k}(W, \bar{a}_t', m_t, \bar{M}_{t-1}) g^{k-1}_t (m_t \mid W, \bar{a}_t, \bar{M}_{t-1}).$$
    \item[(ii)] Use
  \( \hat{R}_{n, M_t}^{\bar{a},k}(P)(W, 1, \bar{A}_{t-1},\bar{M}_{t-1})-
  \hat{R}_{n, M_t}^{\bar{a},k}(P)(W, 0, \bar{A}_{t-1},\bar{M}_{t-1}) \) as a
  clever covariate in a weighted regression with outcome \(A_t\),
  offset \(\mathrm{logit}(\hat{\pi}^{k-1}_{n, t} ( 1 \mid W, \bar{M}_{t-1},\bar{A}_{t-1}))\), and weights equal to
  \(\hat{H}^{k-1}_{n, t-1} (W, \bar{A}_{t-1}, \bar{M}_{t-1}) \). Let \(\hat{\eps}^k_{\pi_t}\) denote he estimated coefficient, and compute for all $\bar{a}_t' \in \{0,1\}^t$ the updated estimator 
\begin{align*}
&\hat{\pi}^k_{n, t} (a_t' \mid W, \bar{M}_t , \bar{a}'_{t-1}) = \mathrm{expit}(\mathrm{logit}(\hat{\pi}^{k-1}_{n, t} (a_t' \mid W, \bar{M}_t, \bar{a}'_{t-1}))  \\
&\qquad\qquad + \, \hat{\eps}^k_{\pi_t} (
\hat{R}_{n, M_t}^{\bar{a},k}(P)(W, 1, \bar{a}'_{t-1},\bar{M}_{t-1}) -
  \hat{R}_{n, M_t}^{\bar{a},k}(P)(W, 0, \bar{a}'_{t-1},\bar{M}_{t-1})))
\end{align*}
  \item[(iii)] Compute for all $\bar{a}_t' \in \{0,1\}^t$ 
        \begin{align*}
            \hat{R}_{n, A_t}^{\bar{a},k}(W, \bar{a}_{t-1}', \bar{M}_{t-1}) 
            & =  \hat{\pi}^k_{n, t}(1 \mid W, \bar{a}_{t-1}', \bar{M}_{t-1}) \hat{R}_{n,M_t}^{\bar{a},k}(W, 1, \bar{a}'_{t-1}, M_{t-1}) \\
            & \qquad + \, \hat{\pi}^k_{n, t}(0 \mid W, \bar{a}_{t-1}', \bar{M}_{t-1}) \hat{R}_{n,M_t}^{\bar{a},k}(W, 0, \bar{a}'_{t-1}, M_{t-1}).
        \end{align*}
    \end{itemize}
    \item[Step 4. ] Compute
  \begin{align*}
    \hat{Q}^{\bar{a},k}_{n,M_{T+1}}(P)(W, \bar{M}_T)
    = \sum_{\bar{a}'_T \in \lbrace 0,1\rbrace^T} \hat{Q}^k_{n, Y} (W, \bar{a}_T', \bar{M}_T) \prod_{t=0}^T
    \hat{\pi}^k_{n, t} (a_t' \mid W, \bar{M}_{t-1}, \bar{a}'_{t-1}).
  \end{align*}
  \item[Step 5.] For $t=T,...,0$ in decreasing order
  \begin{itemize}
      \item[(i)] Use
  \( \hat{Q}_{n, M_{t+1}}^{\bar{a},k}(P)(W, 1, \bar{M}_{t-1})-
  \hat{Q}_{n, M_{t+1}}^{\bar{a},k}(P)(W, 1, \bar{M}_{t-1}) \) as a
  clever covariate in a weighted regression with outcome \(M_t\),
  offset \(\mathrm{logit}(\hat{g}^{k-1}_{n, t} ( 1 \mid W,\bar{A}_{t}, \bar{M}_{t-1}))\), and weights equal to
  \(\hat{W}_t^{\bar{a},k} (W,\bar{A}_{t},\bar{M}_{t-1}) = \mathbbm{1}(\bar{A}_t=\bar{a}_t) / (\prod_{l=0}^t\hat{\pi}^k_{n,l} (a_l \mid W, \bar{M}_{l-1}, \bar{a}_{k-1}))\). Let \(\hat{\eps}^k_{g_t}\) denote the estimated coefficient, and compute for all $\bar{a}_t' \in \{0,1\}^t$ the updated estimator 
\begin{align*}
&\hat{g}^k_{n, t} (m_t \mid W , \bar{a}'_{t}, \bar{M}_{t-1}) = \mathrm{expit}(\mathrm{logit}(\hat{g}^{k-1}_{n, t} (m_t \mid W, \bar{a}'_{t}, \bar{M}_t))  \\
&\qquad\qquad + \, \hat{\eps}^k_{g_t} (
\hat{Q}_{n, M_{t+1}}^{\bar{a},k}(P)(W, 1, \bar{M}_{t-1})-
  \hat{Q}_{n, M_{t+1}}^{\bar{a},k}(P)(W, 1, \bar{M}_{t-1})).
\end{align*}
  \item[(ii)] Compute an estimator $\hat{Q}^{\bar{a},k}_{n, M_t}(P)(W, \bar{M}_{t-1})$ for $Q^{\bar{a}}_{M_t}(P)(W, \bar{M}_{t-1})$ as
  $$\hat{Q}^{\bar{a},k}_{n, M_t}(P)(W, \bar{M}_{t-1}) = \sum_{m_t =0,1} 
  \hat{Q}^{\bar{a},k}_{n,M_{t+1}}(P)(W, m_t, \bar{M}_{t-}) g_{n,t}^k( m_t \mid W, \bar{a}_t, \bar{M}_{t-1}).
  $$
  \item[Step 6.] Estimate the target parameter 
  \begin{align*}
      \hat{\psi}_n^{TMLE} = \mathbb{P}_n\left\{Q^{\bar{a},k}_{n,M_0}(W)\right\}.
  \end{align*}
\end{itemize}
\end{itemize}
\end{breakablealgorithm}

\section{Alternative interpretations of the longitudinal front-door functional} \label{app:PIIE}
  Here, we generalize the estimands of \cite{fulcher2020robust} and \cite{wen2023causaleffectsinterveningvariables} to the longitudinal setting and show that they are identified by the longitudinal front-door functional. 

\subsection{Identification the PIIE}
Let $Y(\bar{A}_T, \bar{M}_T(\bar{a}_T))$ be the potential outcome under an intervention that for $t=0,...,T$ sets $M_t$ to the value it would have taken when exposure history equals $\bar{a}_t$ while keeping $A_t$ at its natural value. Define
\begin{align*}
    \Phi^{\bar{a}_T}(P)=\mathbb{E}\left\{Y(\bar{A}_T, \bar{M}_T(\bar{a}_T))\right\}.
\end{align*}
The longitudinal version of the population intervention indirect effect (PIIE) of \cite{fulcher2020robust} would be the contrast $\mathbb{E}(Y) -\Phi^{\bar{a}_T}(P)$. Below, we show that $\Phi^{\bar{a}_T}(P)$ is identified by the longitudinal front-door formula.  

\begin{theorem} \label{theorem:PIIE}
Suppose that the following assumptions hold for $t=0,...,T$.
\begin{itemize}
    \item[(i)] Sequential exchangeability:  $$\underbar{M}_t(\bar{a}_T) \independent A_t \mid W, \bar{A}_{t-1}, \bar{M}_{t-1},$$  $$Y(\bar{m}_T) \independent M_t \mid W, \bar{A}_t, \bar{M}_{t-1},$$
    \item[(ii)] Cross-world exchangeability: $Y(\bar{m}_T) \independent \bar{M}_T(\bar{a}_T) \mid W$,
    \item[(iii)] Consistency:  $$\text{If } \bar{A}_t=\bar{a}_t \text{ then } M_t(\bar{a}_t)=M_t \text{ with probability 1},$$ $$\text{If } \bar{M}_t=\bar{m}_t \text{ then } Y(\bar{m}_t)=Y \text{ with probability 1}.$$ 
\end{itemize}
Then $\Phi^{\bar{a}_T}(P)$ is identified by the longitudinal front-door functional. 
\end{theorem}

\begin{proof}
By the law of total probability, we have
        \begin{align*}
        \Phi^{\bar{a}_T}(P)&=\iint \mathbb{E}\left\{ Y(\bar{A}_T, \bar{M}_T(\bar{a}_T)) \mid \bar{M}_T(\bar{a}_T) = \bar{m}_T, W=w \right\} \\
        & \qquad \times \prod_{t=0}^Tp(M_t(\bar{a}_t) = m_t \mid \bar{M}_{t-1}(\bar{a}_{t-1}) = \bar{m}_{t-1},  w)p(w) \diff \bar{m}_T \diff w.
    \end{align*}
By exposure-mediator sequential exchangeability and consistency, we have
\begin{align*}
    p(M_t(\bar{a}_t) = m_t \mid \bar{M}_{t-1}(\bar{a}_{t-1}) = \bar{m}_{t-1},  w) = p(M_t = m_t \mid \bar{A}_t=\bar{a}_T, \bar{M}_{t-1} = \bar{m}_{t-1},  w).
\end{align*}
Moreover, we have
\begin{align*}
    \mathbb{E}\left\{ Y(\bar{A}_T, \bar{M}_T(\bar{a}_T)) \mid \bar{M}_T(\bar{a}_T) = \bar{m}_T, W=w \right\} &=\mathbb{E}\left\{ Y(\bar{m}_T) \mid \bar{M}_T(\bar{a}_T) = \bar{m}_T, W=w \right\} \\
    &=\mathbb{E}\left\{ Y(\bar{m}_T) \mid W=w \right\} \\
    &= \sum_{a_0'} \mathbb{E}\left\{ Y(\bar{m}_T) \mid A_0=a_0', W=w \right\}p(a_0' \mid w) \\
    &=\sum_{a_0'} \mathbb{E}\left\{ Y(\bar{m}_T) \mid M_0=m_0, A_0=a_0', W=w \right\}p(a_0' \mid w) \\
    &=\cdots \\
    &=\sum_{\bar{a}_T'} \mathbb{E}\left\{ Y(\bar{m}_T) \mid \bar{m}_T, \bar{a}_T',w \right\} \prod_{t=0}^T p(a_t' \mid \bar{m}_t, \bar{a}_{t-1}', w) \\
     &=\sum_{\bar{a}_T'} \mathbb{E}( Y\mid \bar{m}_T, \bar{a}_T',w ) \prod_{t=0}^T p(a_t' \mid \bar{m}_t, \bar{a}_{t-1}', w) ,
\end{align*}
    where the first equality follows by definition of the counterfactuals, the second equality by  cross-world exchangeability, the third by the law of total probability, and the fourth by mediator-outcome sequential exchangeability. We repeat the arguments in steps three and four until we arrive at the expression on the right-hand side of the sixth equality. The final equality follows by the consistency assumption.
\end{proof}

\subsection{Identification of interventionist version of PIIE}
Let $A^M_t$ be a hypothetical variable that only affects $Y$ through its effect on $\underline{M}_t$. In the observed data $A_t=A^M_t$ but we assume that in principle $A_t^M$ can be intervened upon. 

Let $Y(\bar{a}_T^M )$ be the potential outcome under an intervention that sets $\bar{A}_T^M=\bar{a}_T^M $, and define
\begin{align*}
    \Omega^{\bar{a}_T^M}(P)= \mathbb{E}\left\{Y(\bar{a}_T^M )\right\}.
\end{align*}
This is the longitudinal version of the estimand proposed by \cite{wen2023causaleffectsinterveningvariables}. It can be used to defined various causal estimands, for example an interventionist version of the PIIE can be defined as the contrast $\mathbb{E}(Y)-\Omega^{\bar{0}_T}(P)$.

Below, we show that $\Omega^{\bar{a}_T^M}(P)$ can be identified by the longitudinal front-door formula. 

    \begin{figure}[ht]
      \centering
      \begin{tikzpicture}
      \tikzset{line width=1.5pt, outer sep=0pt, ell/.style={draw,fill=white, inner sep=2pt, line width=1.5pt}, swig vsplit={gap=5pt, inner line width right=0.5pt}};
      \node[name=A0, ell, shape=ellipse]{$A_0$};
      \node[name=A0M,shape=swig vsplit, right=4mm of A0]{
      \nodepart{left}{$A_0^M$}
      \nodepart{right}{$a_0^M$} };
      \node[name=M0, ell, shape=ellipse,  below right =15mm and .25mm of A0M]{$M_0(a_0^M)$};
      \node[name=A1, ell, shape=ellipse,  right=4mm of A0M]{$A_1$};
      \node[name=A1M,shape=swig vsplit, right=4mm of A1]{
      \nodepart{left}{$A_1^M$}
      \nodepart{right}{$a_1^M$} };
      \node[name=M1, ell, shape=ellipse, below right =15mm and .25mm of A1M]{$M_1(\bar{a}_1^M)$};
      \node[name=Y, ell, shape=ellipse,  right=15mm of M1]{$Y(\bar{a}_1^M)$};
      \draw[-latex,line width=1.5pt,>=stealth] (A0) to (A0M);
      \draw[-latex,line width=1.5pt,>=stealth] (A1) to (A1M);
      \draw[-latex, bend left] (A0) to (A1);
      \draw[-latex]  (M0)--(A1);
      \draw[-latex, color=red] (A0)  to[out=35, in=110] (Y);
      \draw[-latex, color=red] (A1)  to[out=35, in=115] (Y);
      \draw[-latex] (M0)--(M1);
      \draw[-latex, bend right] (M0) to (Y);
      \draw[-latex] (M1)--(Y);
      \draw[-latex] (A0M)--(M0);
      \draw[-latex] (A0M)--(M1);
      \draw[-latex] (A1M)--(M1);
     \node[name=U,draw,dashed,circle, above=20mm of A1]{$\boldsymbol{U}$};
     \draw[-latex] (U)--(A0);
      \draw[-latex]  (U)--(A1);
     \draw[-latex]  (U)--(Y.north);
\end{tikzpicture}
\caption{SWIG corresponding to an intervention that sets $\bar{A}_T^M$ to $\bar{a}_T^M$.}
        \label{fig:SWIG}
  \end{figure}

\begin{theorem}
Suppose that the following assumptions hold for $t=0,...,T$.
\begin{itemize}
    \item[(i)] Determinism: $A_t=A^M_t$,
    \item[(ii)] Dismissible components
    $$Y \independent \bar{A}_T^M \mid \bar{A}_T, \bar{M}_T, W,$$
    $$M_t \independent \bar{A}_t \mid \bar{A}_t^M, \bar{M}_{t-1}, W,$$
    \item[(iii)] Consistency: $$\text{If } \bar{A}_t^M = \bar{a}_t^M \text{ then } Y(\bar{a}_t^M)=Y \text{with probability 1}.$$
\end{itemize}
Then $ \Omega^{\bar{a}_T^M}(P)$ is identified by the longitudinal front-door functional.
    
\end{theorem}
\begin{proof}
\begin{align*}
     \Omega^{\bar{a}_T^M}(P) &=\int \sum_{a_0'} \mathbb{E}\left\{Y(\bar{a}_T^M) \mid A_0=a_0', w \right\} p(a_0' \mid w) p(w) \diff w \\
    &=\int_{w} \sum_{a_0'} \mathbb{E}\left\{Y(\bar{a}_T^M) \mid A_0^M=a_0^M, A_0=a_0', w \right\} p(a_0' \mid w) p(w)  \diff w \\
    &=\iint \sum_{a_0'} \mathbb{E}\left\{Y(\bar{a}_T^M) \mid M_0 = m_0, A_0^M=a_0^M, A_0=a_0',w \right\} \\
    & \qquad \times p(m_0 \mid A_0^M=a_0^M, A_0=a_0', w) p(a_0' \mid w) p(w)  \diff m_0 \diff w \\
    &=\iint \sum_{a_0'} \mathbb{E}\left\{Y(\bar{a}_T^M) \mid M_0 = m_0, A_0=a_0',w \right\} \\
    & \qquad \times p(m_0 \mid A_0^M=a_0^M, A_0=a_0', w) p(a_0' \mid w) p(w)  \diff m_0 \diff w \\
    &= \cdots \\
    &=\iint \sum_{\bar{a}_T'} \mathbb{E}\left\{Y(\bar{a}_T^M) \mid \bar{m}_T, \bar{A}_T^M=\bar{a}_T^M, \bar{A}_T=\bar{a}_T', w \right\}  \\
    & \qquad \times \prod_{t=0}^T p(m_t \mid \bar{m}_{t-1}, \bar{A}_t^M=\bar{a}_t^M, \bar{A}_t=\bar{a}_t',  w)  p(a_t' \mid \bar{m}_t, \bar{a}_{t-1}', w) p(w) \diff \bar{m}_T \diff w \\
    &=\iint \sum_{\bar{a}_T'} \mathbb{E}(Y\mid \bar{m}_T, \bar{A}_T^M=\bar{a}_T^M, \bar{A}_T=\bar{a}_T', w )  \\
    & \qquad \times \prod_{t=0}^T p(m_t \mid \bar{m}_{t-1}, \bar{A}_t^M=\bar{a}_t^M, \bar{A}_t=\bar{a}_t',  w)  p(a_t' \mid \bar{m}_t, \bar{a}_{t-1}', w) p(w) \diff \bar{m}_T \diff w \\
    &=\iint \prod_{t=0}^Tp(m_t \mid \bar{m}_{t-1}, \bar{A}_t^M=\bar{a}_t^M, \bar{A}_t=\bar{a}_t^M,  w) \\
    & \qquad \times \sum_{\bar{a}_T'} \mathbb{E}(Y \mid \bar{m}_T, \bar{A}_T^M=\bar{a}_T', \bar{A}_T=\bar{a}_T',  w )   \prod_{t=0}^T  p(a_t' \mid \bar{m}_t, \bar{a}_{t-1}', w) p(w) \diff \bar{m}_T \diff w \\
    &=\iint \prod_{t=0}^Tp(m_t \mid \bar{m}_{t-1}, \bar{A}_t=\bar{a}_t^M,  w) \\
    & \qquad \times \sum_{\bar{a}_T'} \mathbb{E}(Y \mid \bar{m}_T, \bar{A}_T=\bar{a}_T',  w )   \prod_{t=0}^T  p(a_t' \mid \bar{m}_t, \bar{a}_{t-1}', w) p(w) \diff \bar{m}_T \diff w, 
\end{align*}
where the first equality follow from the law of total probability, the second equality follows from $Y(\bar{a}_T^M) \independent \bar{A}_t^M \mid \bar{A}_t, W$ which can be read off the SWIG in Figure \ref{fig:SWIG}, the third by the law of total probability, and the fourth by applying $Y(\bar{a}_T^M) \independent \bar{A}_t^M \mid \bar{A}_t, W$ again. We then repeat steps one through four until we arrive at the expression after the sixth equality. The seventh equality follows by consistency, and the eight by the dismissible components. The final equality follows by determinism. 
\end{proof}

\section{Details on simulations}

\subsection{Data generating distributions}
Data are generating from the following distributions. 
\begin{align*}
    U &\sim Bern(0.5), \\
    L_{0,1} &\sim  Bern(0.6), \\
    L_{0,2} &\sim Bern\left\{\text{expit}\left(1-L_{0,1} \right) \right\}, \\
    A_t &\sim Bern\big\{\text{expit}\big(-1 + 0.8 L_{0,1} - 0.5L_{0,2} + L_{0,1}L_{0,2} + 2U \\ &\qquad \qquad \qquad \quad + \mathbbm{1}(t>1)(0.4A_{t-1} + 0.25M_{t-1})\big\}, \\
   M_t &\sim  \begin{cases}
    Bern\big\{\text{expit}\big(-1 + L_{0,1} - 0.75L_{0,2} + 0.75L_{0,1}L_{0,2} +  A_t   \\ \qquad \qquad \qquad + \mathbbm{1}(t>1)(0.5A_{t-1} + 0.252M_{t-1}) \big\}, \text{ (settings 1 and 2)}\\
    N\big\{1+ L_{0,1} - 0.75L_{0,2} + 0.75L_{0,1}L_{0,2} +  A_t  \\ \qquad + \mathbbm{1}(t>1)(0.5 A_{t-1} + 0.25M_{t-1}), 1\big\}, \text{ (settings 3 and 4)}
    \end{cases} \\
    Y &\sim \begin{cases} Bern\big\{\text{expit}\big(1 - L_{0,1} + 0.75L_{0,2} - L_{0,1}L_{0,2} - U \\  \qquad \qquad \qquad - 1.5M_T - M_{T-1}\big) \big\}, \text{ (setting 1)}\\
     Bern\big\{\text{expit}\big(1 - L_{0,1} + 0.75L_{0,2} - L_{0,1}L_{0,2} - U \\  \qquad \qquad \qquad  - 0.75M_T - 0.5M_{T-1} - 0.75A_T - 0.5A_{T-1}\big) \big\}, \text{(setting 2)} \\
     Bern\big\{\text{expit}\big(1 - L_{0,1} + 0.75L_{0,2} - L_{0,1}L_{0,2} - U \\  \qquad \qquad \qquad - 0.5 M_T - 0.3M_{T-1}\big) \big\}, \text{(setting 3)}\\
     Bern\big\{\text{expit}\big(1 - L_{0,1} + 0.75L_{0,2} - L_{0,1}L_{0,2} - U \\  \qquad \qquad \qquad  - 0.25M_T - 0.15M_{T-1} - 0.25A_T - 0.15A_{T-1}\big) \big\}, \text{(setting 4)}\\
    \end{cases}
\end{align*}
for $T=1$. To approximate the true value of $\Psi^{\bar{a}_T}(P)$ we generated a large data set of size $10^7$ based on the distributions above with $\bar{A}_T$ set to $\bar{a}_T$ in the distribution of $M_t$, and took the sample mean of $Y$.  With $T=1$ the approximations of the true values are as follows. DGM (1) $\Psi^{\bar{1}_T}(P) \approx 0.249$, $\Psi^{\bar{0}_T}(P) \approx 0.353$. DGM (2) $\Psi^{\bar{1}_T}(P) \approx 0.252$, $\Psi^{\bar{0}_T}(P) \approx 0.301$. DGM (3) $\Psi^{\bar{1}_T}(P) \approx 0.178$, $\Psi^{\bar{0}_T}(P) \approx 0.310$. DGM (4) $\Psi^{\bar{1}_T}(P) \approx 0.261$, $\Psi^{\bar{0}_T}(P) \approx 0.348$.

\subsection{Additional results}
Below we report the simulation results for DGM (2)-(4).
\begin{figure}
    \centering
    \includegraphics[width=0.9\textwidth]{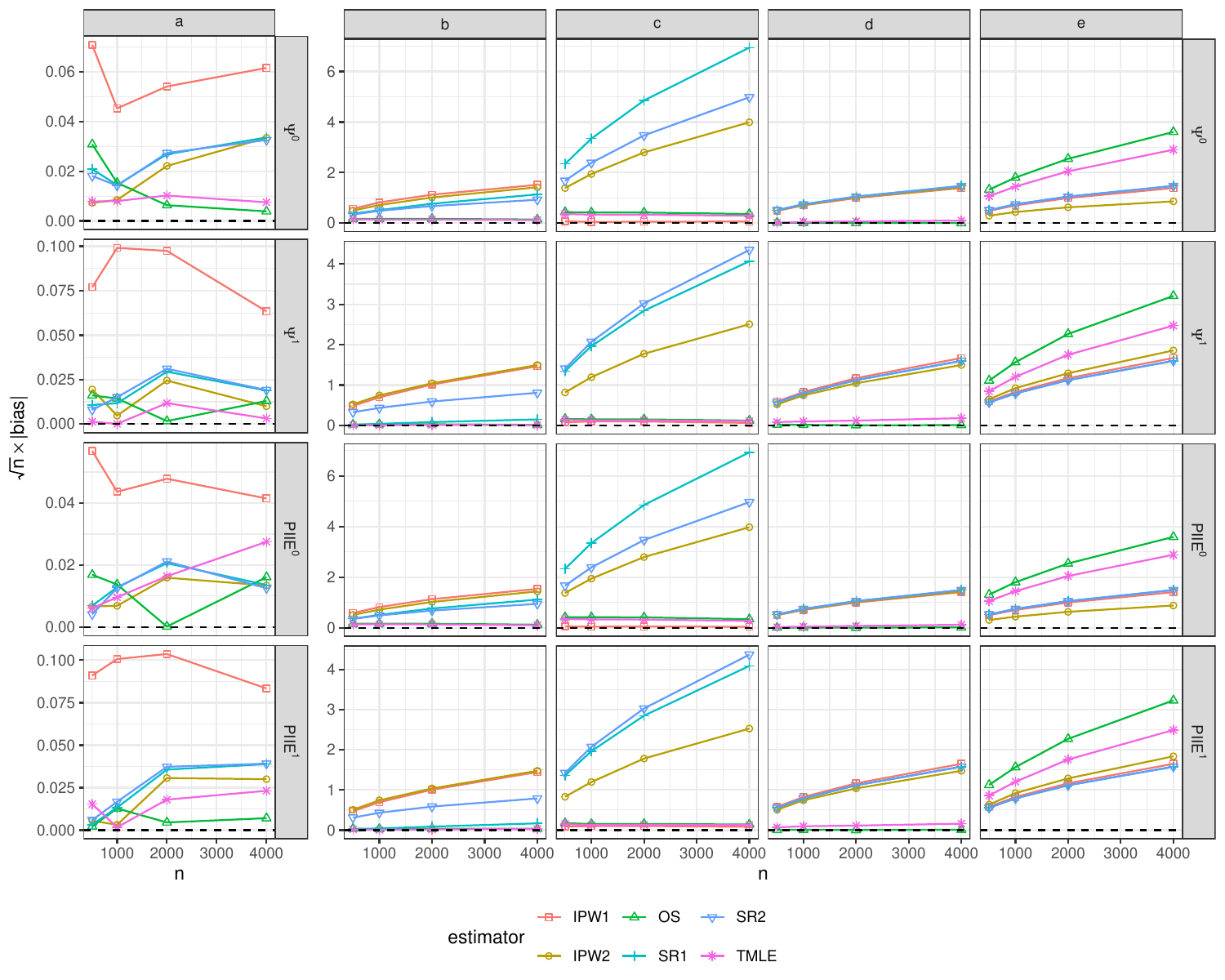}
    \caption{Simulation results for DGM (2). Absolute bias scaled by $\sqrt{n}$.}
    \label{fig:sim_bias2}
\end{figure}
\begin{figure}[H]
    \centering
    \includegraphics[width=0.9\textwidth]{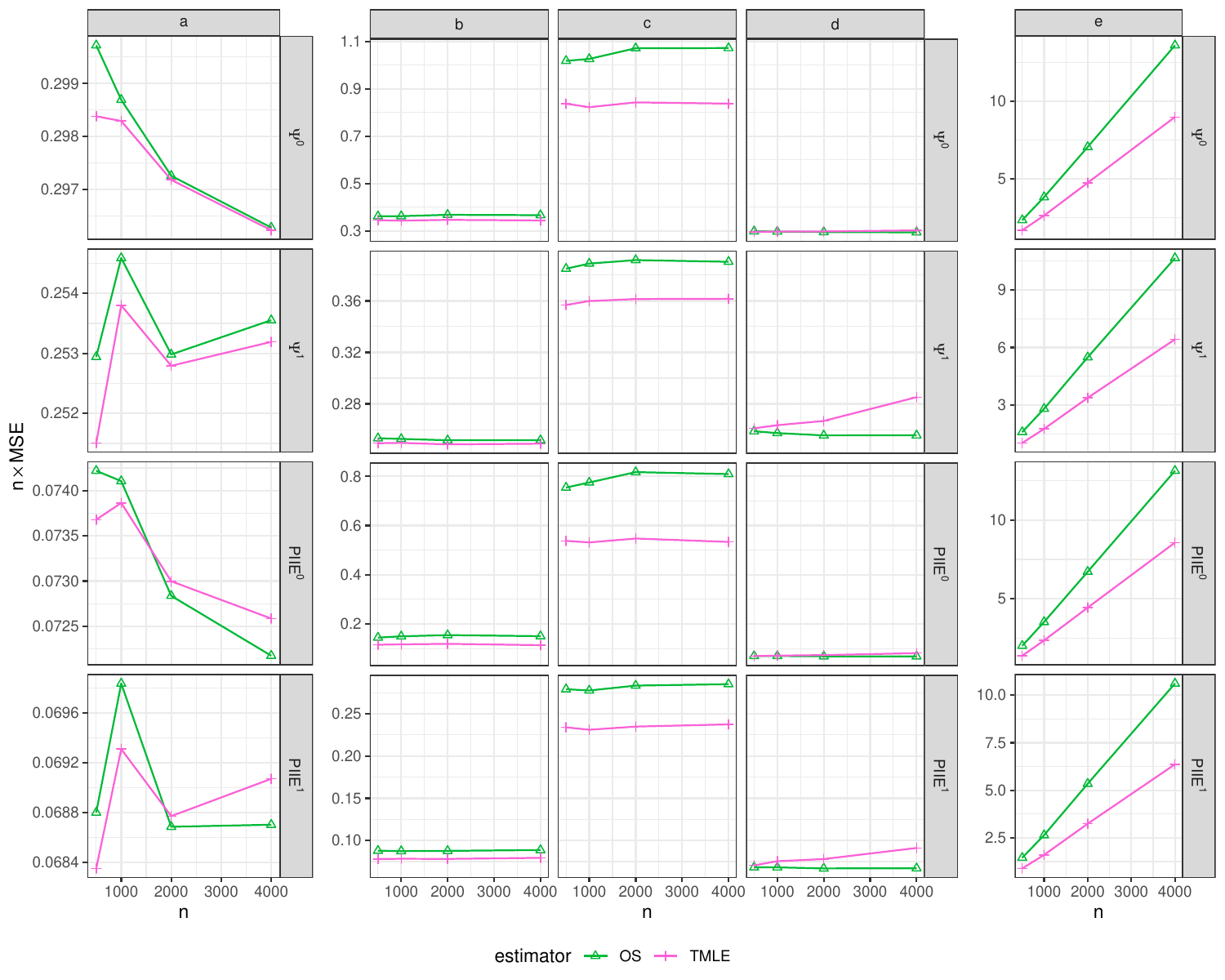}
    \caption{Simulation results for DGM (2). $\mbox{MSE}$ scaled by $n$.}
    \label{fig:sim_mse2}
\end{figure}

\begin{figure}[H]
    \centering
    \includegraphics[width=0.9\textwidth]{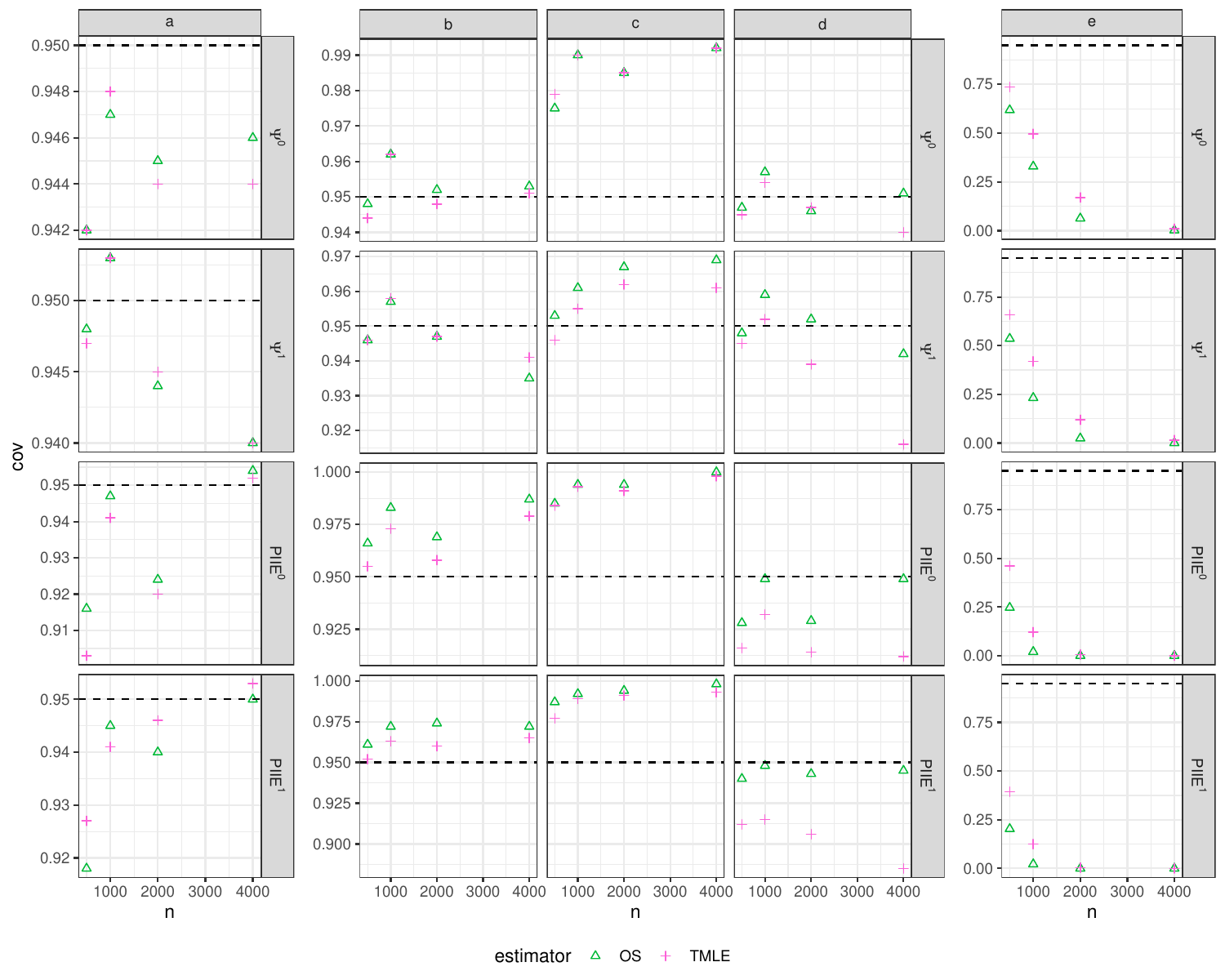}
    \caption{Simulation results for DGM (2). Coverage probability.}
    \label{fig:sim_cov2}
\end{figure}

\begin{figure}
    \centering
    \includegraphics[width=0.9\textwidth]{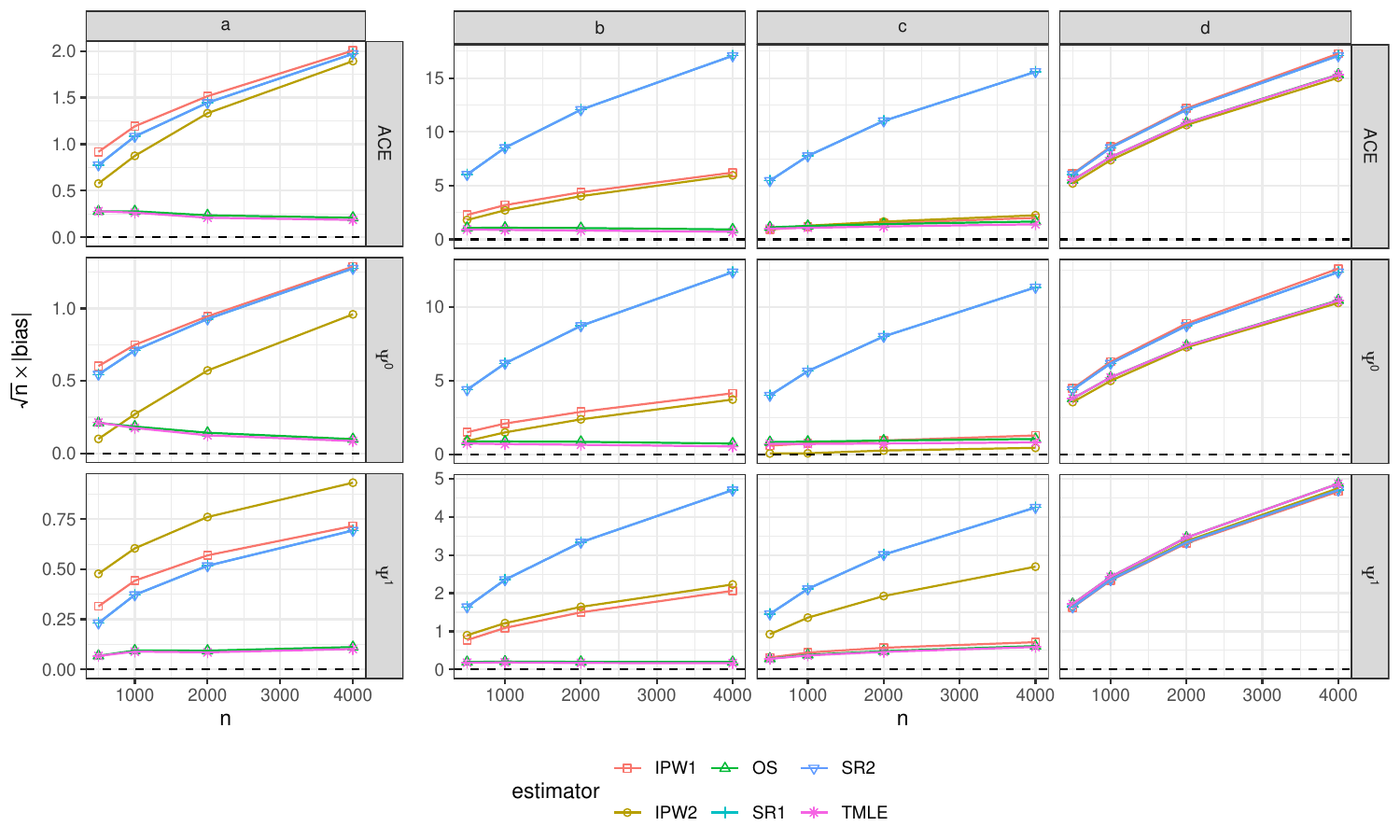}
    \caption{Simulation results for DGM (3). Absolute bias scaled by $\sqrt{n}$.}
    \label{fig:sim_bias3}
\end{figure}
\begin{figure}[H]
    \centering
    \includegraphics[width=0.9\textwidth]{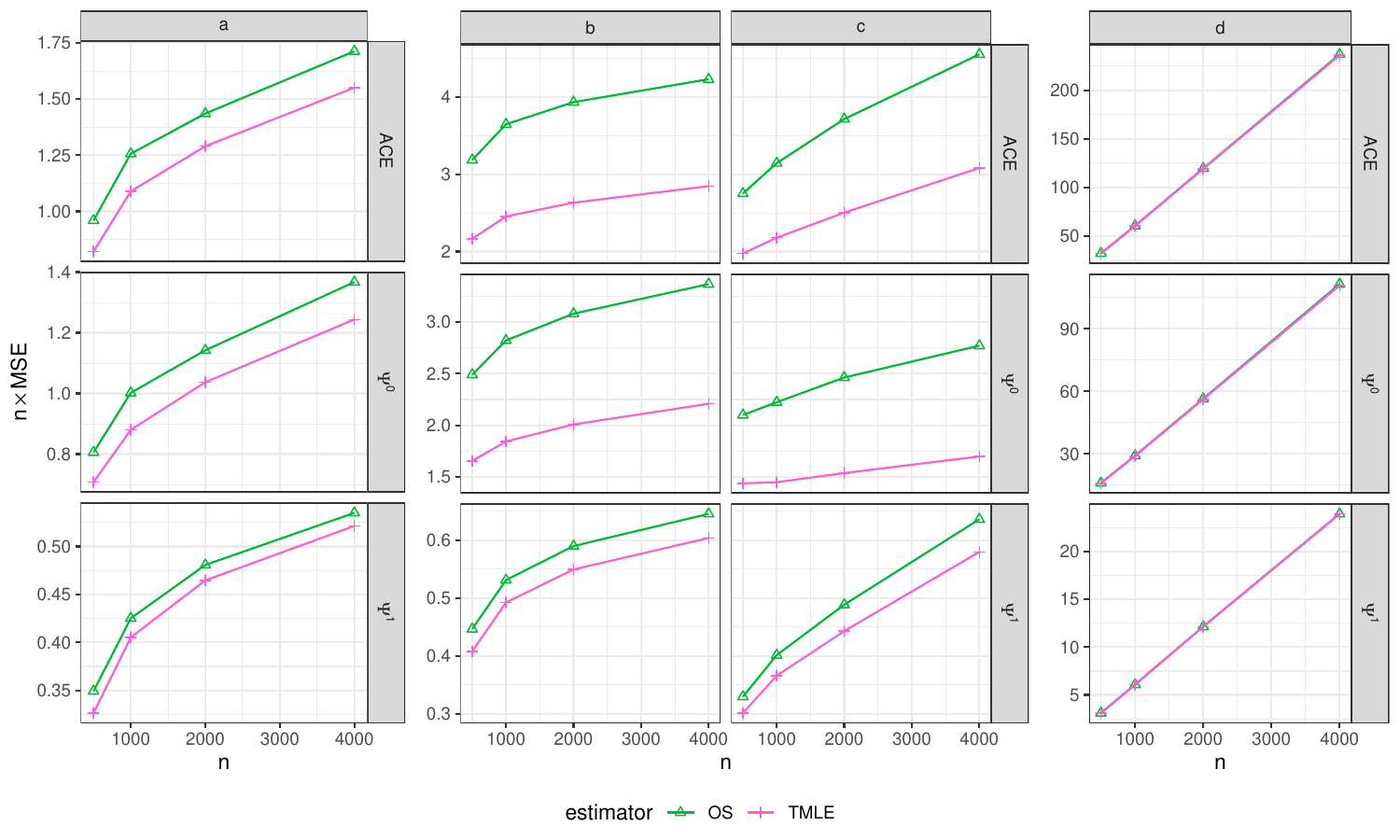}
    \caption{Simulation results for DGM (3). $\mbox{MSE}$ scaled by $n$.}
    \label{fig:sim_mse3}
\end{figure}

\begin{figure}[H]
    \centering
    \includegraphics[width=0.9\textwidth]{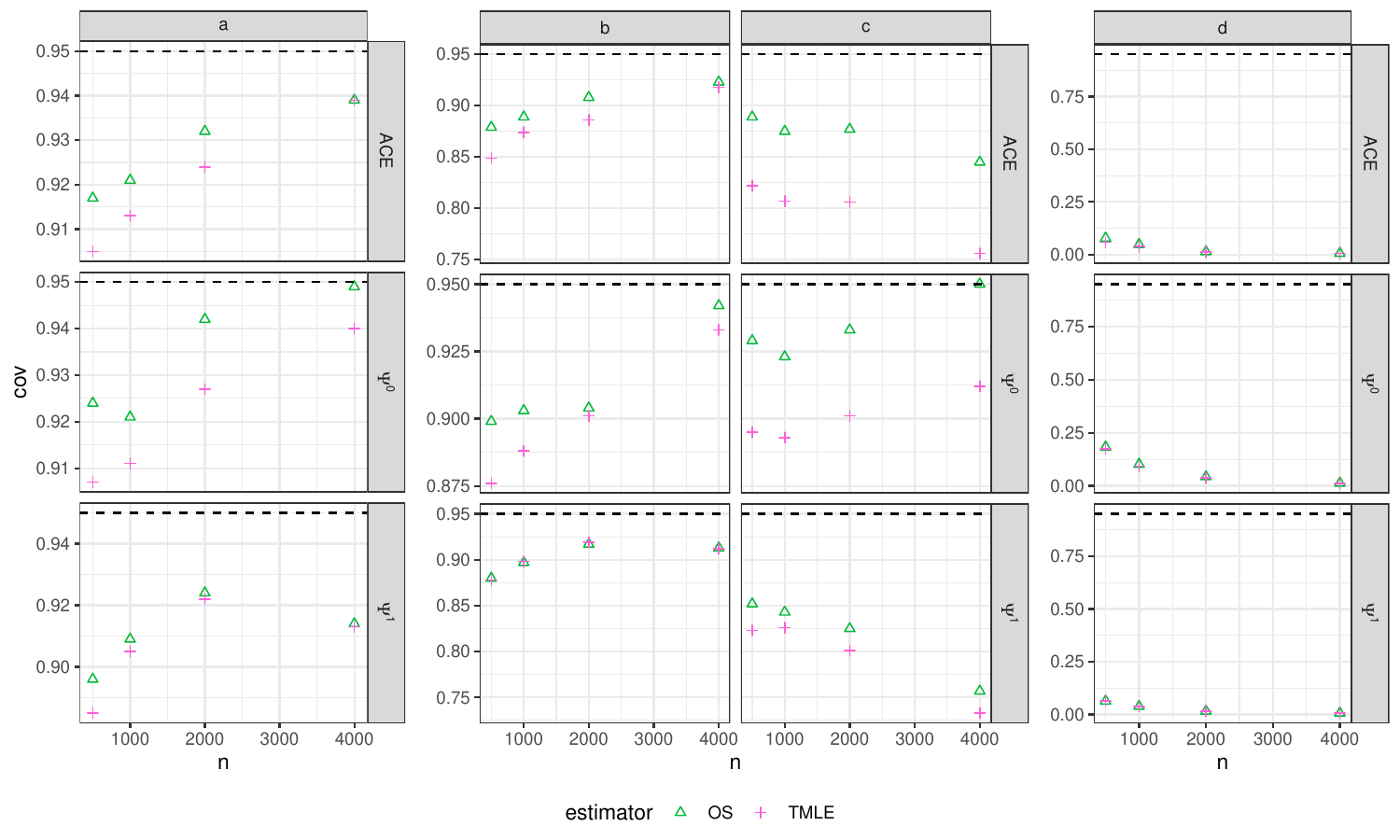}
    \caption{Simulation results for DGM (3). Coverage probability.}
    \label{fig:sim_cov3}
\end{figure}

\begin{figure}
    \centering
    \includegraphics[width=0.9\textwidth]{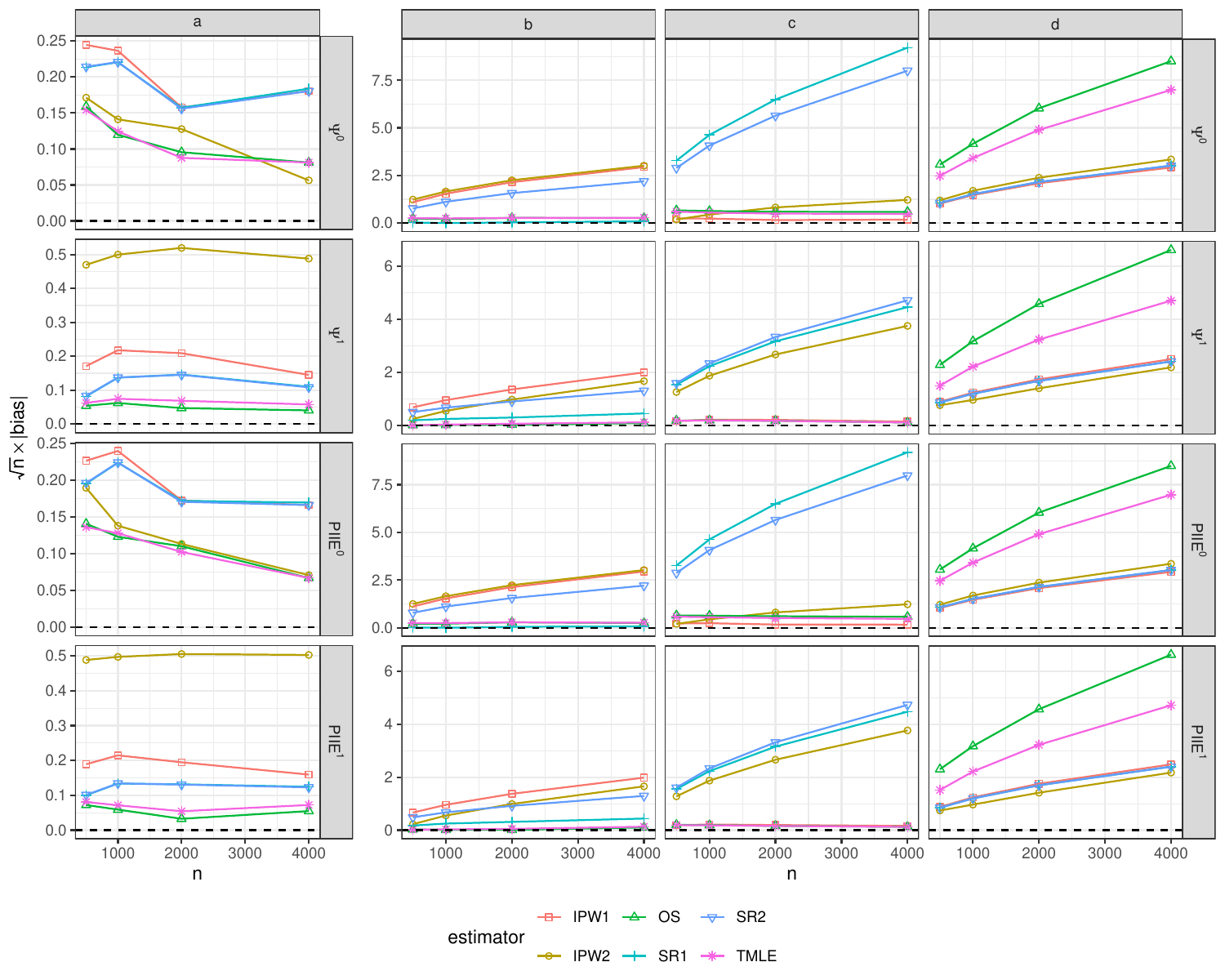}
    \caption{Simulation results for DGM (4). Absolute bias scaled by $\sqrt{n}$.}
    \label{fig:sim_bias4}
\end{figure}
\begin{figure}[H]
    \centering
    \includegraphics[width=0.9\textwidth]{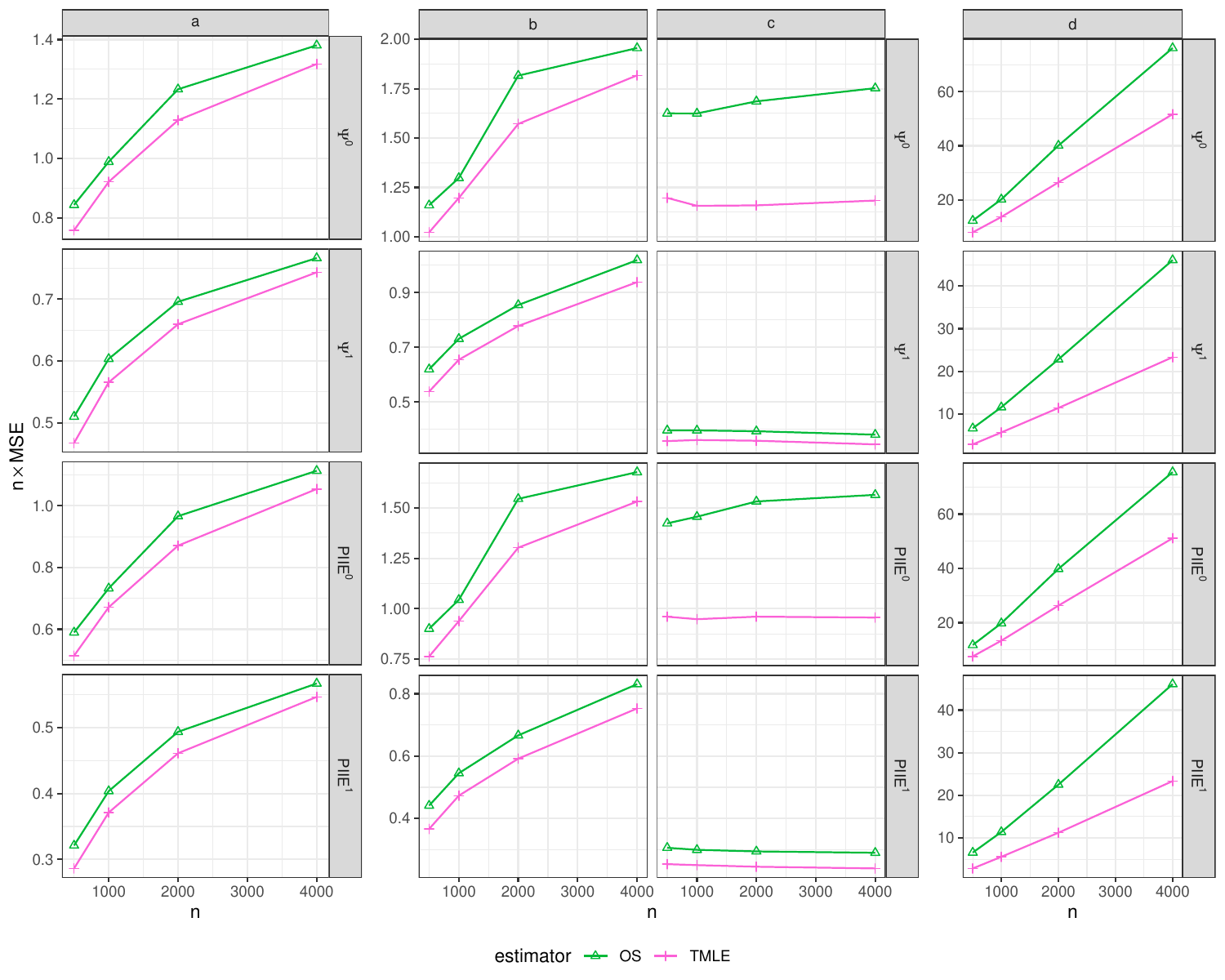}
    \caption{Simulation results for DGM (4). $\mbox{MSE}$ scaled by $n$.}
    \label{fig:sim_mse4}
\end{figure}

\begin{figure}[H]
    \centering
    \includegraphics[width=0.9\textwidth]{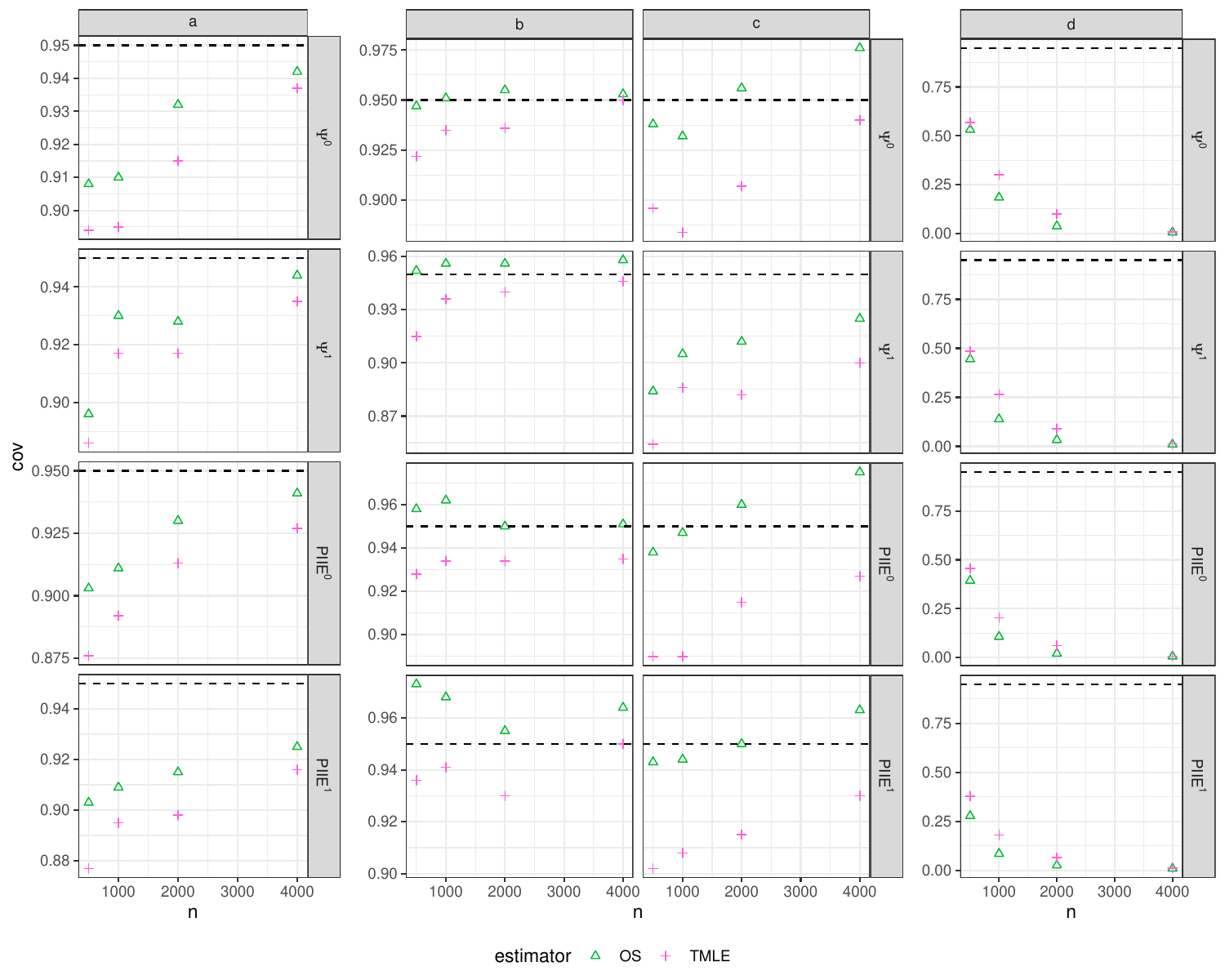}
    \caption{Simulation results for DGM (4). Coverage probability.}
    \label{fig:sim_cov4}
\end{figure}

\subsection{Super learner libraries}
The super learners were implemented using the \texttt{SuperLearner} R-package \citep{SuperLearner}. The library of candidate learners for the super learners was specified as follows:

\begin{lstlisting}
SL.library.screen <- list(
  "SL.mean",
  "SL.glm",
  "SL.glm.interaction",
  "SL.stepAIC",
  "SL.bayesglm",
  c("SL.glm", "screen.corP"),
  c("SL.glm.interaction", "screen.corP"),
  c("SL.step", "screen.corP"),
  c("SL.step.forward", "screen.corP"),
  c("SL.stepAIC", "screen.corP"),
  c("SL.step.interaction", "screen.corP"),
  c("SL.bayesglm", "screen.corP")
)
\end{lstlisting}


\end{document}